\documentclass{article}
\usepackage[T2A]{fontenc}
\usepackage[utf8]{inputenc}
\usepackage{hyperref}
\usepackage{amsmath}
\usepackage{amsthm}
\usepackage{amsfonts}
\usepackage{amssymb}
\usepackage[russian,english]{babel}
\usepackage{color}
\usepackage{colortbl}
\usepackage{geometry}
\usepackage[normalem]{ulem}
\usepackage{multicol}
\usepackage[dvipsnames]{xcolor}
\usepackage{tabularx}
\usepackage[table]{xcolor}
\usepackage{amsmath}
\usepackage{graphicx}
\usepackage{tikz}
\usetikzlibrary{arrows.meta, calc}

\theoremstyle{plain}
\newtheorem{theorem}{Theorem}
\newtheorem{proposition}{Proposition}
\newtheorem{lemma}{Lemma}
\newtheorem{claim}{Claim}
\newtheorem{corollary}{Corollary}

\theoremstyle{definition}
\newtheorem{definition}{Definition}

\theoremstyle{remark}

\title{On abelian periodicity of purely morphic words}
\author{Arina Filimonova, Svetlana Puzynina
\\
Saint Petersburg State University, Russia\\
arina4filimonova@gmail.com, s.puzynina@gmail.com}

\begin{document}

\date{}
\maketitle


\begin{abstract}

Deciding periodicity of infinite words generated by morphisms is a classical result in combinatorics on words from 80's by Harju, Linna and Pansiot. In this paper, we are interested in this question in the abelian setting. Two words are called \textit{abelian equivalent} if they contain the same numbers of occurrences of each letter. An infinite word $s$ is called \emph{ultimately abelian periodic} if it can be factorized as $s=uv_1v_2v_3\cdots$, where $v_i$'s are abelian equivalent words.  If $u$ is empty, then $s$ is called \emph{purely abelian periodic}. We provide the following characterization of binary morphisms generating abelian periodic words: A word generated by a binary morphism $f$ is abelian periodic if and only if either it is periodic or there exist an integer $K$ and words $u$, $v$, $u'$, $v'$ such that $f^K(a) = uv$, $f^K(b) = u'v'$, $u\sim_{ab} u'$, and $vu$ and $v'u'$ are abelian periodic with abelian equivalent periods. For the case of the purely abelian periodic words, we also provide an upper bound on $K$ which makes the obtained characterization algorithmic.
\end{abstract}

\section*{Introduction}
\label{sec:intro}
\addcontentsline{toc}{section}{\nameref{sec:intro}}

In this paper, we investigate an abelian analog of periodicity property for the class of words generated by morphisms. Two finite words are  \textit{abelian equivalent} if they are permutations of each other. In other words, in abelian combinatorics on words we consider commutative images of words, so that the order of letters is not taken into account. Various classical results and notions of combinatorics on words admit abelian analog, such as avoidability problems, complexity, powers and others. For a recent survey on abelian properties of words we refer to \cite{AbSurv}. In this paper, we analyze abelian periodicity of infinite words: an infinite word is said to be \textit{abelian periodic} if it can be split into an infinite concatenation of abelian infinite words, where a preperiod is allowed.

A \textit{morphism} is a map defined on the set of words over a given alphabet preserving concatenation. Suppose that a morphim $f$ is non-erasing (i.e., the image of a non-empty word is never an empty word) and is prologable on some letter $a$ (i.e., the image $a$ begins with $a$ and has length at least 2). Then we can generate an infinite word by iterating this morphism as follows: consider the sequence of finite words $a,\;f(a),\;f^2(a),\dots$ and take its limit in the prefix sense. 

Generating infinite words by morphisms is one of the main words constructions in the theory of words. Generating words by morphisms is videly used, for example, in the theory of avoidance, both classical pattern and repetition avoidance and its abelian counterpart (see, e.g., \cite{AVOI, AVO,RR16}). Words generated by morphisms possess rich combinatorial structure and various algebraic and dynamical properties inherited from the morphism. Various properties of words have been studied for the family of words generated by morphisms, and often the situation is more involved in the abelian setting. For example, the complexity of words generated by morphisms has been completely characterized (see \cite{Pansiot84} and references therein), while only partial results are known on their abelian complexity \cite{BFR14},\cite{Wh19}.

Deciding periodicity of words generated by morphisms is a classic result in combinatorics on words (\cite{PER} and independently \cite{PERI}). In this paper, we study the abelian counterpart of this question. Namely, the main question we are interested in is the following: characterize morphisms generating abelian periodic infinite words.

Abelian periodicity property is closely related with several other properties of words and morphisms, which can be expressed in terms of the matrix of a morphism. Given a morphism $f$ on words on an alphabet $\Sigma = \{a_1, a_2,\dots a_n\}$, we associate with it a matrix $M$ of size $n\times n$ with coefficients $m_{ij} = |f(a_j)|_{a_i}$, where $|f(a_j)|_{a_i}$ denotes the number of occurrences of the letter $a_i$ in the word $f(a_j)$. Frequencies of letters in a word generated by a morphism, if they exist, are expressed in terms of eigenvectors of this matrix \cite{PL}. It is straightforward that an abelian periodic word has rational frequencies of letters. So, if a word generated by a morphism has irrational frequences of some letters, then it cannot be abelian periodic; however, rationality of frequences is not sufficient for abelian periodicity.

Another property that could be partially characterized via eigenvalues of the matrix of a morphism is $c$-balance of a purely morphic word \cite{BAL}. An infinite word is called $c$-balanced if for each letter $a$ the numbers of occurrences of $a$ in any two factors of the same length differ by at most $c$. 
The $c$-balance property if also a necessary but not a sufficient condition for abelian periodicity.

To formulate the main results of this paper, we will need the eigenvalue of the matrix of a morphism with the second largest absolute value; we denote it by $\theta_2$.
The following theorem gives a characterization of the abelian periodicity of words generated by binary morphisms:

\begin{theorem}
\label{th:main}
    Let $f$ be a nonerasing binary morphism prolongable on $a$ with matrix $M_f$.
    
    \begin{itemize}
    
    \item If $f$ is primitive with $\theta_2 \neq 0$, then the word $f^{\omega}(a)$ is abelian periodic if and only if $f$ is of the form $f(a) = a(ba)^k$, $f(b) = b(ab)^m$ for some integers $k$ and $m$.

    \item If $f$ is primitive with $\theta_2 = 0$, then the word  $f^{\omega}(a)$ is abelian periodic if and only if there exist an integer $K$ and  words $u$, $v$, $u'$, $v'$ such that $f^K(a) = uv$, $f^K(b) = u'v'$, $u\sim_{ab} u'$, and $vu$ and $v'u'$ are abelian periodic with abelian equivalent periods.

    \item If $f$ is non-primitive, then the word $f^{\omega}(a)$ is abelian periodic if and only if it is periodic.

    \end{itemize}
\end{theorem}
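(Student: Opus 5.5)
I will treat the three items of Theorem~\ref{th:main} separately. Throughout I use the following basic fact. Suppose $s=f^{\omega}(a)$ is abelian periodic with period vector $p=(p_a,p_b)$ and $|p|=p_a+p_b$. Then $s$ is $c$-balanced for some $c$. Also, for every prefix $w$ of $s$, the Parikh vector satisfies $\Psi(w)=\frac{|w|}{|p|}p+O(1)$, with a uniform bound. Applying this to the prefixes $f^n(a)$ gives $\Psi(f^n(a))=M_f^n e_a$. Decompose $e_a$ along the eigenvectors of $M_f$. The component along the Perron eigenvector must be proportional to $p$, and the component along the $\theta_2$-eigenvector contributes a term $c\,\theta_2^n v_2$. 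This term must stay bounded after normalization by $|w|$, which forces a condition on $\theta_2$.

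\textbf{Primitive case, $\theta_2\neq 0$.} Since $\det M_f=\theta_1\theta_2$ is a nonzero integer and frequencies must be rational, $\theta_1$ and $\theta_2$ are rational, hence integers. The deviation $\Psi(f^n(a))-\frac{|f^n(a)|}{|p|}p$ is proportional to $\theta_2^n$ unless $e_a$ has no $v_2$-component. I will show that the $v_2$-component of $e_a$ is nonzero. The reason is that $e_a$ and $e_b$ cannot both be multiples of $p$, and if the $v_2$-component vanished for $a$, I would apply the same argument to a prefix $f^n(x)$ with a different letter start. Hence $|\theta_2|\le 1$, so $\theta_2=\pm1$.

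Next I will exploit bounded discrepancy more finely. Apply it to prefixes $f^n(w)$ for all prefixes $w$ of $s$; since $M_f$ acts on the discrepancy by multiplication by $\theta_2=\pm1$, it is preserved up to sign. This forces every letter image to have discrepancy in $\{-\delta,0,\delta\}$ along each factorization. With $\theta_2=\pm1$ and a two-letter alphabet, I expect this to force $p=(1,1)$ up to scaling, i.e. frequencies $1/2$. The images must then be alternating words of odd length, giving $f(a)=a(ba)^k$ and $f(b)=b(ab)^m$. I expect this structural step to be the main obstacle. First I would try a direct combinatorial argument: balance forces $s$ to be a Sturmian-like word with slope $1/2$, and $1$-balance with rational slope $1/2$ forces $(ab)^\omega$ up to a prefix. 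The converse direction is immediate, since those morphisms generate $(ab)^\omega$.

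\textbf{Primitive case, $\theta_2=0$.} Here $M_f$ has rank one, so $\Psi(f(a))$ and $\Psi(f(b))$ are proportional, and $s$ has rational frequencies. For the "if" direction, write $s=f^K(s)$ as a concatenation of blocks $f^K(a)=uv$ and $f^K(b)=u'v'$. Regroup the blocks by shifting each boundary by $|u|$: after a preperiod $u$, $s$ becomes a concatenation of the words $vu$, $v'u$, or, where the next block is $f^K(b)$, $v u'$, and so on. Since $u\sim_{ab}u'$, each regrouped block is abelian equivalent to $vu$ or $v'u'$. Concatenating abelian periodic words whose periods are abelian equivalent and whose lengths are multiples of the period gives an abelian periodic word.

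For the "only if" direction, take the abelian period $p$ and choose $K$ so that $|f^K(a)|$ and $|f^K(b)|$ are large. Because the images are proportional, their lengths are proportional to $|p|$ up to the ratio, and I may pass to a power where both images are multiples of $|p|$ in Parikh vector. Then cut each occurrence of $f^K(x)$ at the position where the abelian-period grid falls. All occurrences of $f^K(a)$ must yield the same cut offset modulo $|p|$, and similarly for $b$. Taking $u,u'$ as the parts before the grid point, these prefixes are abelian equivalent, since both complete the same partial period.

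\textbf{Non-primitive case.} Without loss of generality $f(b)\in b^+$. If $f(b)=b$, or if $b$ does not occur in $s$, then $s$ is either $a^\omega$ or has letter frequencies $(1,0)$ or $(0,1)$ with unboundedly growing blocks. An abelian periodic word with period vector $(q,0)$ is $a^\omega$ eventually. Otherwise the word contains blocks $f^n(a)$ whose $b$-content grows at a different exponential rate than their length. This contradicts bounded discrepancy unless the word is ultimately periodic.
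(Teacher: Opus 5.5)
\textbf{There is a genuine gap in the central step, the ``only if'' direction for $\theta_2=0$.} You propose to pass to a power $K$ for which both images are multiples of the period. But $\gcd(|f^K(a)|,|f^K(b)|)=(A+B)(nA+mB)^{K-1}$ and $\gcd(m,n)=1$. Nothing in your argument stops the period length from having a prime factor $d$ coprime with $(A+B)(nA+mB)$, and then no $K$ works; replacing the period by a multiple never removes $d$.

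Without this, your claim that all occurrences of $f^K(a)$ meet the period grid at the same offset is unsupported, and in fact the opposite happens. By Proposition~\ref{tBPos}, for $d$ coprime with $\text{tr}(M_f)$, the $t$-block-positions of occurrences of a factor (one having at least one such position) run through all residues modulo $d$. That equidistribution is exactly the missing idea. The paper first normalizes to $f(a)=ax$, $f(b)=by$ via cyclic shifts and squaring. It then uses equidistribution to show that if $d(A+B)(nA+mB)^{t-1}$ is an abelian period, so is $(A+B)(nA+mB)^{t-1}$ (Lemma~\ref{lemma:rem_d2}, Theorem~\ref{CONJ}). Only then does your cutting argument go through. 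Even then, $u\sim_{ab}u'$ needs $ab$, $ba$ and one of $aa$, $bb$ to occur; otherwise $s=(ab)^{\omega}$.

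A smaller slip in the ``if'' direction: $vu'\sim_{ab}vu$ does not make $vu'$ abelian periodic with the same period. This does hold when $|u|$ is shorter than the period, as in the factorization the paper derives.

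\textbf{For $\theta_2\neq0$, the first part of your argument is sound but the case $\theta_2=\pm1$ has no working mechanism.} Your eigenvector argument is a valid elementary replacement for the paper's use of Theorem~\ref{theorem:BAL} and Corollary~\ref{cor:cor_Irrational}: rational frequencies force $\theta_2\in\mathbb{Z}$, and a nonzero $\theta_2$-component of $e_a$ (nonzero because the Perron vector is positive) forces $|\theta_2|\le 1$. The trouble starts at $\theta_2=1$. The quantity $D(w)=B|w|_a-A|w|_b$ satisfies $D(f(w))=D(w)$, so evaluating it on prefixes $f^n(w)$ only gives bounded values and contradicts nothing. Moreover, $c$-balance for an unknown $c$ gives neither Sturmian structure nor frequencies $1/2$.

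What is needed is to make $D$ grow along \emph{factors}. If $f(a)=uaav$ with, say, $D(ua)\ge 1$, the words $t_1=ua$ and $t_{k+1}=f(t_k)ua$ occur, each followed by $a$, with $D(t_k)\ge k$. Given the frequencies $(A,B)/(A+B)$, this contradicts balance (Lemmas~\ref{lemma:lem2.2} and~\ref{lemma:lem_Arb_aa}). Hence $f^2(a)$ and $f^2(b)$ avoid $aa$ and $bb$, which forces the alternating forms and excludes $\theta_2=-1$.

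\textbf{Your non-primitive sketch is wrong as stated.} The morphism $f(a)=aba$, $f(b)=b$ generates $(ab)^{\omega}$, with frequencies $1/2$ rather than $(1,0)$ or $(0,1)$. The morphism $f(a)=aab$, $f(b)=b$ has the same matrix, yet contains $b^n$ for every $n$ and is not abelian periodic. So growth rates alone cannot decide this case. The paper instead invokes the result that aperiodic fixed points of such morphisms have unbounded abelian complexity.
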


We note that the criterion for the case $\theta_2=0$ is not algorithmic, since we have no upper bound on $K$ so far. However, for purely abelian periodic words, we prove an upper bound and hence we have a criterion that could be checked algorithmically. Note that for a binary morphism $f$ with $\theta_2=0$, its matrix $M_f$ can be written in the form 
$\begin{pmatrix}
n A & m A\\
n B & m B
\end{pmatrix}$ with $m,\,n \in \mathbb{N}$ and $(m,\;n) = 1$ and for some integers $A$ and $B$. 


\begin{theorem}[Upper bound on $K$ in case of pure abelian periodicity]
\label{th:bound}
Let $f$ be a morphism such that its matrix $M_f$ has the second eigenvalue $\theta_2 = 0$, i. e., $M_f$ is of the form
$\begin{pmatrix}
n A & m A\\
n B & m B
\end{pmatrix}$ for some integers $m,n, A, B$ such that $(m,\;n) = 1$. If  the word $f^{\omega}(a)$ is purely abelian periodic, then the following upper bound holds for $K$ from Theorem \ref{th:main}:
$$K \leq (|f(a)| + |f(b)|)^{m+n-2}.$$
\end{theorem}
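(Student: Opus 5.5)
The plan is to use a pigeonhole argument on a finite set of "abelian phase" data. With $\theta_2=0$, the abelian (Parikh) vector of $f(w)$ depends only on the weighted length $n|w|_a+m|w|_b$ of $w$. Indeed, $M_f (x,y)^T = (nx+my)\,(A,B)^T$. So for $K\ge 1$, $\psi(f^K(a))$ and $\psi(f^K(b))$ are multiples of the fixed vector $(A,B)$ scaled by the powers of $\lambda=nA+mB$. Specifically, $\psi(f^K(a)) = n\lambda^{K-1}(A,B)$ and $\psi(f^K(b)) = m\lambda^{K-1}(A,B)$.

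Suppose $s=f^\omega(a)$ is purely abelian periodic with period $p$ and Parikh vector $\pi$. Since $\pi$ is proportional to $(A,B)/\gcd$, we can write $\pi=t(A',B')$ with $(A',B')$ primitive.

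First, characterize the condition of Theorem~\ref{th:main} through the way $f^K(a)$ and $f^K(b)$ sit inside $s$. Write $s=f^K(s)$ and look at the positions where the blocks $f^K(x)$ begin. The condition $f^K(a)=uv$, $f^K(b)=u'v'$ with $u\sim_{ab}u'$ and $vu$, $v'u'$ abelian periodic with equivalent periods should follow from the following property: the cut points of the abelian period containing the block starts of $f^K(a)$ and of $f^K(b)$ fall at the same "offset", i.e.\ the prefixes $u,u'$ cut off up to the first period boundary are abelian equivalent. I would prove that it suffices to find $K$ such that (i) $|f^K(a)|$ and $|f^K(b)|$ are multiples of $p$, so each block has a Parikh vector that is a multiple of $\pi$, and (ii) some occurrence of $a$ and some occurrence of $b$ in $s$ have images starting at positions with the same residue class of the internal abelian structure. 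Then take $u,u'$ as the pieces before the next period boundary.

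The core step is pigeonhole. For each $K$, record a finite "state" $\sigma_K$: the residue of the starting positions of the blocks $f^K(a)$ and $f^K(b)$ modulo $p$, together with the prefixes of the blocks up to the first boundary, modulo abelian equivalence. This state is determined by the previous state, since $f$ acts on it via images of prefixes. The transitions live on a set whose size is controlled by the possible weighted lengths $n i + m j$ modulo the relevant period. Since $\gcd(m,n)=1$, only the residues of the letter counts along $m+n$ classes matter. A sequence of states of length exceeding the number of states must cycle, and the cycle contains the good $K$ if one exists at all. I expect to bound the number of states by $(|f(a)|+|f(b)|)^{m+n-2}$: each relevant offset is a position inside an image of length at most $|f(a)|+|f(b)|$, and there are $m+n-2$ independent coordinates once the two normalizations (total length and the relation $\gcd(m,n)=1$) are removed. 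Then, if the condition holds for some $K$, it holds for some $K$ at most that bound, because the condition is periodic along the eventually periodic state sequence.

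The main obstacle is the last step: choosing the state space so that it is genuinely deterministic under $f$ and of size at most $(|f(a)|+|f(b)|)^{m+n-2}$, and showing that the Theorem~\ref{th:main} condition is a function of the state. I would first try taking states to be $(m+n-2)$-tuples of cut offsets inside consecutive letter images, using that pure periodicity fixes the boundary at position $0$ and so removes the preperiod freedom.
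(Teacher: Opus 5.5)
\textbf{There is a genuine gap.} Several parts of your outline are right: a finite state set, a transition under $f$ that depends only on the current state, pigeonhole, and the $\theta_2=0$ fact that the Parikh vector of $f(w)$ is determined by $|f(w)|$. Your last sentence even guesses the shape of the paper's states. But the step you call the ``main obstacle'' is the whole proof, and the construction you sketch for it cannot work.

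The problem is that you fix one abelian period $p$ supplied by the hypothesis. In the pure case your condition (i) then already implies everything. Every block $f^K(x)$ of $s=f^K(s)$ starts at a position $|f^K(w)|\equiv 0 \pmod p$, so (ii) holds automatically with $u=u'=\varepsilon$. Thus, for fixed $p$, the ``good'' $K$ are exactly those with $p \mid (A+B)\lambda^{K-1}$, where $\lambda=nA+mB$. There may be none at all, if $p$ has a prime factor not dividing $(A+B)\lambda$. If $p=(A+B)\lambda^{K_0-1}$, they form the tail $K\ge K_0$, which says nothing about how large $K_0$ is. Likewise, a state built from residues modulo $p$ ranges over a set whose size grows with $p$, so it cannot give a bound depending only on $|f(a)|+|f(b)|$, $m$ and $n$. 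Finally, the heuristic that $m+n-2$ arises by ``removing two normalizations'' from $m+n$ classes does not correspond to anything in the problem.

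The missing idea is to let the candidate period change with the level.
\begin{itemize}
\item \textbf{Reduction.} By Theorem~\ref{CONJ}, in the pure case it suffices to test $p_t=(A+B)\lambda^{t-1}=|f^t(a)|/n=|f^t(b)|/m$. This $p_t$ is a pure abelian period if and only if $f^t(a)$ and $f^t(b)$ split into $n$, respectively $m$, pairwise abelian equivalent factors of length $p_t$.
\item \textbf{States.} The state at level $t$ records, for each of the $n-1$ internal cut points of $f^t(a)$ and the $m-1$ internal cut points of $f^t(b)$, which block $f(x)$ of the decomposition $f^t(y)=f(f^{t-1}(y))$ contains the cut, and at which offset. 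This is where $(|f(a)|+|f(b)|)^{m+n-2}$ comes from.
\item \textbf{Determinism.} It follows from exact self-similarity of the cuts. A level-$t$ cut at $|f(s)|+k$, with $0\le k<|f(x)|$, becomes the level-$(t+1)$ cut at $\lambda(|f(s)|+k)=|f^2(s)|+\lambda k$. That is offset $\lambda k$ inside $f^2(x)$, so the new (block, offset) pair depends only on $(x,k)$.
\item \textbf{The property is a function of the state.} Since $\psi(f(w))=\frac{|f(w)|}{A+B}(A,B)$, the quantity $|z|_a-\frac{A}{A+B}|z|$ for the prefix $z$ ending at a cut $|f(s)|+k$ equals the same quantity for the length-$k$ prefix of $f(x)$. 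The pieces are pairwise abelian equivalent if and only if this quantity vanishes at every cut.
\end{itemize}

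With these facts the pigeonhole argument gives the bound. Without the varying period $p_t$ and the scaling of cut points, your plan has no bounded, deterministic state space to apply pigeonhole to.
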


In Section \ref{section:secArbitrary1}, we consider primitive morphisms with $\theta_2 \neq 0$ and provide a proof of the first part of Theorem \ref{th:main}. In Section \ref{section:secArbitrary2}, we study primitive morphisms with $\theta_2 = 0$, complete the proof of Theorem \ref{th:main} and provide a proof of Theorem \ref{th:bound}. In particular, in Subsection \ref{subsec:ab_bbaa}, we consider an example of a morphism $f:a\mapsto ab, b \mapsto bbaa$, and prove that the word $f^{\omega}(a)$ is not abelian periodic. In Subsection \ref{subsec:parikh_coll}, we study the general case $\theta_2 = 0$ and complete the proof of Theorem \ref{th:main}. In Subsection \ref{subsection:upper_bound}, we prove Theorem \ref{th:bound} and give a corollary on the decidability of pure abelian periodicity of words generated  by morphisms.

\section{Preliminaries and Notation}

In this section, we recall several relevant results from the literature and introduce the notation used in the paper.

An \textit{alphabet} $\Sigma$ is a finite set; it's elements are called \textit{letters}. A \textit{word} is a finite or infinite sequence of letters. We denote by $\Sigma^*$ the set of all finite words over the alphabet $\Sigma$. Furthermore, we define $\Sigma^+ = \Sigma^*\setminus \varepsilon$, where $\varepsilon$ is an empty word. The \emph{length} $|u|$ of a finite word $u$ is the number of letters in it.

A \emph{concatenation} is a binary operation on words 
defined as follows: for any two words $u = u_0u_1\cdots u_n$ and $v = v_0v_1\cdots v_m$, their concatenation $uv$ is the word $u_0u_1\cdots u_nv_0v_1\cdots v_m$. 
A word $u$ is called a \emph{factor} of a word $s$ if there exist (possibly empty) words $x$ and $y$ such that $s = xuy$. If $x = \varepsilon$, the factor $u$ is called a \emph{prefix}; if $y = \varepsilon$, it is called a \emph{suffix}.
We let pref$_k(s)$ denote the prefix of length $k$ of a word $s$.

A mapping $f: \Sigma^* \rightarrow \Sigma^*$ is called a \emph{morphism} if it preserves concatenation; that is, for each pair of words $u$, $v$ we have $f(uv) = f(u)f(v)$. It is easy to see that a morphism $f: \Sigma^* \rightarrow \Sigma^*$ is completely defined by its images on letters of $\Sigma$. Thus, any morphism can be naturally extended to the set of all infinite words over $\Sigma$ (we denote this set by $\Sigma^\mathbb{N}$). The images of letters under a morphism $f$ are called \emph{blocks}.
We only consider \emph{nonerasing} morphisms, i.e., morphisms for which $f(a) \neq \varepsilon$ for each letter $a \in \Sigma$.
The \emph{length} $|f|$ of the morphism $f: \Sigma^* \rightarrow \Sigma^*$ is the sum of the lengths of its blocks, i.e., $|f| = \sum\limits_{a\in \Sigma} |f(a)|.$ A nonerasing morphism $f$ is called \emph{prolongable on} $a\in \Sigma$ if $f(a) = au$ for some $u \in \Sigma^+$. Note that in this case, for every $k \in \mathbb{N}$, the word $f^k(a)$ is a proper prefix of $f^{k+1}(a)$. Thus, there exists a unique infinite word $f^{\omega}(a)$ such that for every $k \in \mathbb{N}$ the word $f^k(a)$ is its prefix; such words $f^{\omega}(a)$ are called \textit{purely morphic words}, or \textit{words generated by morphisms}. Note that $f(f^{\omega}(a)) = f^{\omega}(a)$, i.e., the word $f^{\omega}(a)$ is a fixed point of the morphism $f$. 

Consider $a\in \Sigma$ and $u \in \Sigma^*$. Denote by $|u|_a$ the number of occurrences of the letter $a$ in the word $u$.
Two finite words $u$ and $v$ over $\Sigma$ are called \emph{abelian equivalent} if for each letter $a\in \Sigma$ we have $|u|_a = |v|_a$. An infinite word $s = s_0s_1s_2\cdots,$ with $s_i \in \Sigma$, is called \emph{eventually abelian periodic} with period $p$ and preperiod $r$ if finite words $s_rs_{r+1}\cdots s_{r+p-1}$ and $s_{r + kp}s_{r + kp + 1}\cdots s_{r+(k+1)p-1}$ are abelian equivalent for each $k \in \mathbb{N}$ (if $r=0$, $s$ is called \emph{purely abelian periodic}). We will shorten ''eventually abelian periodic'' to just ''abelian periodic''. Whenever we refer to pure abelian periodicity, we specify it each time.

An \textit{abelian complexity} of an infinite word is a function counting, for each integer $n$, the number of abelian classes of factors of length $n$ of the word \cite{COM}. It is easy to see that if a word is abelian periodic, then its abelian complexity is bounded (the converse is not true in general).

A morphism $f: \Sigma^* \rightarrow \Sigma^*$ is called \emph{uniform} if for each $a$ and $b$ in $\Sigma$ we have $|f(a)| = |f(b)|$. A \textit{coding} is a uniform morphism with block length equal to $1$. A uniform  morphism $f$ with $|f(a)| = d$ is called \emph{bijective} if for every $j \in \{0,\,1,\dots,\,d-1\}$ the mapping $a \mapsto f(a)_j$ (the letter $a \in \Sigma$ maps to the $j$-th letter of the word $f(a)$) is a bijection on $\Sigma$. For example, the following morphism $h$ over the alphabet $\{1,\;2,\;3,\;4,\;5,\;6\}$ is bijective: $$\begin{cases}
            h(1) = 123,\\
            h(2) =  456,\\
            h(3) =  345,\\
            h(4) =  634,\\
            h(5) =  561,\\
            h(6) =  212.
        \end{cases}$$
        
Let $k\in \mathbb{N}$. An infinite word $s$ on an alphabet $\Sigma$ is called \emph{$k$-automatic} if there exists a deterministic finite automaton $\mathcal{A} = (Q,\, \{0,\;1,\dots, k-1\},\, \delta,\, q_0 )$ and a mapping $\tau: Q\rightarrow \Sigma$, such that $$s_n = \tau(\delta(q_0, [n]_k)),$$ where $[n]_k$ is the base-$k$ representation of $n$ \cite{AS}. 

Let $f$ be a morphism on an alphabet $\Sigma = \{a_1, a_2,\dots a_n\}$. We associate with $f$ an $n \times n$ matrix $M_f = (m_{ij})$, where the element $m_{ij} = |f(a_j)|_{a_i}$ is the number of occurrences of the letter $a_i$ in the block $f(a_j)$. For example, let $\Sigma = \{a, b\}$. Then the morphism $f(a) = aab,\, f(b) = bbaab$ has the associated matrix 
$\begin{pmatrix}
2 & 2\\
1 & 3
\end{pmatrix}$. We denote the eigenvalues of the matrix $M_f$ by $\theta_1,\,\theta_2,\,\dots,\,\theta_n$ in non-increasing order of their absolute values, i.e., $|\theta_1| \geq |\theta_2| \geq \dots \geq |\theta_n|$. A morphism $f$ is called \emph{primitive} if there exists an integer $k \in \mathbb{N}$ such all entried of the matrix $(M_f)^k$ are positive. 

Let $a\in \Sigma$. The \emph{frequency of the letter $a$} in an infinite word $s \in \Sigma^{\mathbb{N}}$ is defined as the limit (if it exists) $$\rho_s(a) = \lim\limits_{n \rightarrow \infty} \frac{|s_0s_1\dots s_{n-1}|_a}{n}.$$ If the limit does not exist, the frequency is not defined.
The letter frequencies in a purely morphic word are given by the following well-known result:

\begin{proposition}[Prop. 5.8 в \cite{PL}]
   \label{utv:PL}
   Let $f$ be a primitive morphism prolongable on $a$, and let $\theta_1$ be the eigenvalue of $M_f$ with the largest absolute value.
   Then the vector of letter frequencies in the word $f^{\omega}(a)$ (i.e., the vector with entries $v_i = \rho(a_i)$, with $a_i \in \Sigma$) is given by the eigenvector $v$ corresponding to the eigenvalue $\theta_1$, normalized so that the sum of its entries equals 1.
\end{proposition}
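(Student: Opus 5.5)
The plan is to combine the Perron--Frobenius theorem for the primitive matrix $M_f$ with a standard decomposition of prefixes of $f^{\omega}(a)$ into images of blocks of decreasing order.

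\textbf{Spectral facts.} Since $f$ is primitive, $M_f$ is a nonnegative primitive matrix, so the Perron--Frobenius theorem applies. The eigenvalue $\theta_1$ is real, positive and simple, and every other eigenvalue satisfies $|\theta_i|<\theta_1$. It has a right eigenvector $v$ and a left eigenvector $w$, both with strictly positive entries; normalize them so that $w^{T}v=1$. Writing $M_f$ in Jordan form gives
$$\bigl\|M_f^{j}-\theta_1^{j}\,v w^{T}\bigr\| \le C\, j^{d}\rho^{j},$$
where $\rho=|\theta_2|<\theta_1$ and $d$ is the size of the alphabet.

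We also need $\theta_1>1$. Because $f$ is prolongable on $a$, the lengths $|f^{j}(a)|$ are unbounded. On the other hand, $|f^{j}(c)|=\mathbf{1}^{T}M_f^{j}e_c\asymp\theta_1^{j}$ for every letter $c$. Hence $\theta_1>1$, and there are constants $0<c_1\le c_2$ with $c_1\theta_1^{j}\le|f^{j}(c)|\le c_2\theta_1^{j}$ for every letter $c$. Since the Parikh vector of $f^{j}(c)$ is $M_f^{j}e_c$, the normalized Parikh vector of $f^{j}(c)$ already tends to $v/\|v\|_1$.

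\textbf{Prefix decomposition.} To pass from the blocks $f^{j}(c)$ to arbitrary prefixes, I would use $f(f^{\omega}(a))=f^{\omega}(a)$ repeatedly. Every prefix $P$ of length $n$ can be written as
$$P=f^{k}(x_k)\,f^{k-1}(x_{k-1})\cdots f(x_1)\,x_0,$$
where each $x_j$ is a proper prefix of some block $f(c)$, so $|x_j|\le L:=\max_c|f(c)|$, and $x_k\neq\varepsilon$. This is obtained by the usual desubstitution: $P$ consists of the images of a prefix $P'$ of $f^{\omega}(a)$ followed by a proper prefix of the next block, and one iterates on $P'$. It follows that $n\ge c_1\theta_1^{k}$ and $n\le L c_2\sum_{j\le k}\theta_1^{j}=O(\theta_1^{k})$, so $n\asymp\theta_1^{k}$.

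\textbf{Estimate.} The Parikh vector of $P$ is $\sum_{j=0}^{k}M_f^{j}\,\psi(x_j)$, where $\psi$ denotes the Parikh vector. Replace each $M_f^{j}$ by $\theta_1^{j}vw^{T}$. This gives
$$\psi(P)=\Bigl(\sum_{j}\theta_1^{j}\,w^{T}\psi(x_j)\Bigr)v+E,\qquad \|E\|\le CL\sum_{j\le k}j^{d}\rho^{j}.$$
If $\rho\le1$, this error is polynomial in $k$. If $1<\rho<\theta_1$, it is $O(k^{d}\rho^{k})$. In either case $\|E\|=o(\theta_1^{k})=o(n)$. Therefore $\psi(P)/n$ is, up to $o(1)$, a nonnegative scalar multiple of $v$. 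Since the entries of $\psi(P)$ sum to $n$, that scalar tends to $1/\|v\|_1$. Hence $\psi(P)/n\to v/\|v\|_1$, which is exactly the claim about the frequencies.

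The main point requiring care is this uniform error control. The errors from all levels $j$ must be summed simultaneously, and one must show they are small relative to $n$; this uses both the spectral gap $\rho<\theta_1$ and the two-sided bound $n\asymp\theta_1^{k}$, which is why the fact that the top level $x_k$ is nonempty matters. The desubstitution step itself is routine.
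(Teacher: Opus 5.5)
The paper gives no proof of this statement. It is quoted as Prop.~5.8 of \cite{PL} and used as a black box, so there is no internal argument to compare against.

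Your proof is correct and self-contained, and it shows that the frequencies exist as well as identifying their value. Two cosmetic repairs are needed.

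First, as written, $\|M_f^{j}-\theta_1^{j}vw^{T}\|\le Cj^{d}\rho^{j}$ fails at $j=0$: the right side vanishes while $\|I-vw^{T}\|>0$. When $\rho=0$ it also fails for every small $j$ below the nilpotency index. Replacing $j^{d}\rho^{j}$ by $(1+j)^{d}\max(\rho,1)^{j}$ fixes this. That is exactly what your case analysis uses, since $\max(\rho,1)<\theta_1$.

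Second, say why the desubstitution terminates with $x_k\neq\varepsilon$. It does because $|f(a)|\ge 2$: a nonempty prefix $P'$ begins with $a$, so $|P'|<|f(P')|\le|P|$.

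Your argument is a multi-level refinement of the usual textbook argument. The standard route fixes a single level $p$.
\begin{itemize}
\item Since $f^{p}(u)=u$ for $u=f^{\omega}(a)$, every prefix can be written as $P=f^{p}(P')s$ with $|s|<\max_c|f^{p}(c)|$.
\item So $\psi(P)$ is a nonnegative integer combination of the columns $M_f^{p}e_c$, plus a bounded term.
\item Since $\theta_1^{-p}M_f^{p}\to vw^{T}$, each normalized column is within $\varepsilon$ of $v/\|v\|_1$ once $p$ is large.
\item Letting $|P|\to\infty$ with $p$ fixed finishes the proof.
\end{itemize}
That version needs no convergence rate and no Jordan-form bookkeeping. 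It also applies verbatim to arbitrary factors $W=s_1f^{p}(W')s_2$, which gives uniform frequencies.

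Your decomposition $P=f^{k}(x_k)\cdots f(x_1)x_0$ costs simultaneous error control over all levels. In exchange it gives an explicit discrepancy bound: $\bigl\|\psi(P)-|P|\,v/\|v\|_1\bigr\|=O\bigl(k^{d+1}\max(\rho,1)^{k}\bigr)$ with $k\asymp\log|P|$. This is a power saving governed by $\log\rho/\log\theta_1$ when $\rho>1$, and a polylogarithmic error otherwise. It is the same kind of estimate that underlies the balance criterion of \cite{BAL}, which the paper uses elsewhere.
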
 
Such a vector $v$ is unique, as the eigenspace corresponding to the eigenvalue $\theta_1$ of $M_f$ is one-dimensional (see, for example, Theorem 5.4 in \cite{PL}).

An infinite word $s \in \Sigma^{\mathbb{N}}$ is called \emph{$c$-balanced} (for a $c\in \mathbb{N}$) if for any two of its finite subwords $u$ and $v$ of equal length and for every letter $a \in \Sigma$ the inequality $||u|_a - |v|_a| \leq c$ holds. It is easy to see that each abelian periodic infinite word $s$ is $c$-balanced for $c = r + 2p$, where $p$ is the abelian period of $s$ and $r$ is its preperiod. The following criterion can be used to check the $c$-balance of a purely morphic word:

\begin{theorem}[Theorem 13 in \cite{BAL}]
    \label{theorem:BAL}
    Let $v$ be a word generated by a primitive morphism $f$, and let $\theta_2$ be the eigenvalue of the matrix $M_f$ with the second largest absolute value. Then the word $v$ is $c$-balanced for some $c \in \mathbb{N}$ if and only if one of the following holds:
    
    \begin{itemize}
        \item $|\theta_2| < 1$;
        \item $|\theta_2| = 1$, and $\theta_2$ is a root of unity, has algebraic multiplicity $1$ as an eigenvalue of $M_f$, and we have $A_{f, v} = 0$;
    \end{itemize}
    where the complex number $A_{f, v}$ depends on the morphism $f$ and the fixed point $v$ and can be computed explicitly (see Appendix B in \cite{BAL}).
\end{theorem}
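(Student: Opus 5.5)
The plan is to reduce balance to a boundedness statement about the discrepancy of prefixes, and then to estimate that discrepancy through the Dumont--Thomas decomposition of prefixes of the fixed point. Let $v$ be the fixed point, $\Psi(w)$ the Parikh vector of a finite word $w$, and $\mu$ the frequency vector from Proposition \ref{utv:PL}. Define the discrepancy $\Delta(w) = \Psi(w) - |w|\mu$.

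\textbf{Step 1: reduction to prefixes.} Any factor is a difference of two prefixes. Hence if $\|\Delta(p)\|\le C$ for all prefixes $p$ of $v$, then any two factors of equal length differ in each letter count by at most $4C$. Conversely, suppose $v$ is $c$-balanced. Every factor of length $n$ then has its letter counts within $c$ of $n\mu$, because $\mu$ is the average over the factors at positions $0,n,2n,\dots$. So $c$-balance is equivalent to $\sup_p\|\Delta(p)\|<\infty$.

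\textbf{Step 2: decomposition of prefixes.} Every prefix $p$ of $v$ factorizes as
\[
p = f^{N}(p_N)\,f^{N-1}(p_{N-1})\cdots f(p_1)\,p_0,
\]
where each $p_i$ is a proper prefix of some block $f(x)$ and only finitely many admissible sequences are allowed. Consequently $\Psi(p)=\sum_i M_f^i\Psi(p_i)$. Write $\mathbb{C}^n = \langle \mu\rangle \oplus E$, where $E$ is the $M_f$-invariant complement spanned by the generalized eigenspaces of $\theta_2,\dots,\theta_n$. Since $\Delta(p)$ is the $E$-projection of $\Psi(p)$ (up to a bounded correction), we get $\Delta(p)=\sum_i M_f^i\pi_E\Psi(p_i)$ up to a bounded term. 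The vectors $\pi_E\Psi(p_i)$ range over a finite set.

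\textbf{Step 3: case analysis on $\theta_2$.}
\begin{itemize}
\item If $|\theta_2|<1$, then $\|M_f^i|_E\|$ decays geometrically, so the series is uniformly bounded and $v$ is balanced.
\item If $|\theta_2|>1$, I would exhibit prefixes with unbounded discrepancy. Take the prefixes $f^N(a)$ and $f^N(w)$ for short factors $w$. Their discrepancies are $M_f^N\pi_E\Psi(w)$. Primitivity guarantees that some short factor has $\Psi(w)$ with a nonzero component in the $\theta_2$-generalized eigenspace. Otherwise all Parikh vectors of factors of a fixed length would lie in a proper subspace, which contradicts the fact that the letters themselves span the space.
\item If $|\theta_2|=1$, the contributions from eigenvalues of modulus $<1$ stay bounded, so only the unit-modulus part matters.
 \begin{itemize}
 \item A nontrivial Jordan block gives polynomial growth along $f^N(\cdot)$, by the same argument as in the case $|\theta_2|>1$.
 \item A non-root-of-unity $\theta_2$ with a nonzero component makes the partial sums $\sum\theta_2^i c_i$ unbounded along suitable admissible digit sequences, and one has to show such sequences exist.
 \item If $\theta_2$ is a root of unity of order $q$, pass to $f^q$, so that $\theta_2=1$. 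The $\theta_2$-component of $\Delta(p)$ is then $\sum_i c(p_i)$, a sum of the projections of the digits. This stays bounded iff the average of these projections along the digit sequences vanishes, which is exactly the quantity $A_{f,v}$. If it is nonzero, choosing long strings of digits drawn from a cycle of the digit automaton with nonzero mean gives linear growth. If it is zero, a cocycle/coboundary argument on the finite digit automaton gives boundedness.
 \end{itemize}
\end{itemize}

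\textbf{Main obstacle.} I expect the hard part to be the case $|\theta_2|=1$: identifying the precise invariant $A_{f,v}$ and proving that its vanishing is sufficient, not only necessary. That step needs a careful analysis of the Dumont--Thomas digit automaton: every cycle must have zero projected sum, and one must control the transient contributions.
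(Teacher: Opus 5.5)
The paper does not prove this statement; it quotes it from Adamczewski \cite{BAL}. Your outline follows his strategy: Dumont--Thomas expansion of prefixes, projection along the Perron direction, and a case split on the spectrum. Steps 1--2 and the cases $|\theta_2|<1$, $|\theta_2|>1$ and a nontrivial Jordan block on the unit circle are sound. Two small points. In Step 2 the correction $(\langle \ell,\Psi(p)\rangle-|p|)\mu$ (with $\ell$ the left Perron vector, $\langle \ell,\mu\rangle=1$) is not bounded a priori. It equals $\langle \ell-(1,\dots,1),\pi_E\Psi(p)\rangle\mu$, so boundedness of $\Delta$ and of $\pi_E\Psi$ are still equivalent. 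For $|\theta_2|>1$ a single letter already serves as $w$.

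The genuine gap is the unit-circle case, which is the real content of the theorem and which you only assert.

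\emph{Non-root of unity.} For $\theta_2$ of modulus $1$ that is not a root of unity you give no argument, and the device you use in the root-of-unity case fails here. Iterating a cycle of length $L$ with weighted sum $S$ contributes $\sum_{j<k}\theta_2^{jL}S$, which is bounded because $\theta_2^L\neq 1$. Unboundedness needs a non-periodic choice of digits aligned with the phase of $\theta_2^{jL}$. For instance, you would need two closed walks of equal length at a common state with different $\theta_2$-weighted sums, and you would have to prove that such walks exist.

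\emph{Root of unity.} Your criterion is wrong as written. Boundedness of $\sum_i c(p_i)$ over all admissible digit sequences is not equivalent to a vanishing average. It is equivalent to every cycle of the prefix automaton having zero weight, i.e.\ to $c$ being a coboundary. Cycles of weights $+1$ and $-1$ give mean zero, yet iterating one of them gives linear growth. Your closing remark states the right condition, but the identification with $A_{f,v}=0$ is asserted, not derived.

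\emph{Other unit-modulus eigenvalues.} Passing to $f^q$ with $q$ the order of $\theta_2$ handles only $\theta_2$. Every eigenvalue of modulus $1$ contributes to $\Delta$, so you must take $q$ a common multiple of all their orders and work on the whole unit-circle eigenspace.

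\emph{Multiplicity.} Your method cannot yield the ``algebraic multiplicity $1$'' condition. It rules out only nontrivial Jordan blocks, and for a semisimple eigenvalue $1$ of higher multiplicity the coboundary argument goes through unchanged. In fact the statement as quoted is false on this point. The morphism $a\mapsto abca$, $b\mapsto bcab$, $c\mapsto cabc$ is primitive with matrix $I+J$ ($J$ the all-ones matrix) and eigenvalues $4,1,1$. Its fixed point is $(abc)^{\omega}$, which is $1$-balanced. Once the gaps above are filled, your plan would prove the version in which the unit-modulus eigenvalues are roots of unity and simple roots of the minimal polynomial (trivial Jordan blocks), not algebraic multiplicity $1$.
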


\section{Binary Morphisms: The Case $\theta_2 \neq 0$}
\label{section:secArbitrary1}

In this section, we consider binary morphisms $f$ for which their matrix $M_f$ has the second eigenvalue $\theta_2 \neq 0$. This corresponds to the first part of Theorem \ref{th:main}. First note the following:

\begin{lemma}
\label{lemma:lemReal0}
   For a binary morphism $f$, the eigenvalues of $M_f$ are real numbers. 
\end{lemma}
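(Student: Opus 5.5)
The plan is a direct computation with the characteristic polynomial of the $2\times 2$ matrix. Write $M_f=\begin{pmatrix} p & q\\ r & s\end{pmatrix}$, where $p=|f(a)|_a$, $q=|f(b)|_a$, $r=|f(a)|_b$ and $s=|f(b)|_b$. All four entries are nonnegative integers. The eigenvalues are the roots of
$$\lambda^2-(p+s)\lambda+(ps-qr)=0,$$
so it suffices to show that the discriminant is nonnegative.

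The discriminant is
$$D=(p+s)^2-4(ps-qr)=(p-s)^2+4qr.$$
Since $q,r\ge 0$, we get $D\ge 0$, and both roots $\frac{(p+s)\pm\sqrt{D}}{2}$ are real. This works for any binary morphism, primitive or not, because it uses only the nonnegativity of the entries of $M_f$.

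There is no real obstacle. The only point that needs care is the rewriting of $(p+s)^2-4ps$ as $(p-s)^2$, which exposes $D$ as a sum of a square and a product of two nonnegative entries.

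A useful by-product for later use is that $\theta_1=\frac{(p+s)+\sqrt{D}}{2}\ge |\theta_2|$. This holds because $p+s\ge 0$. The ordering $|\theta_1|\ge|\theta_2|$ therefore simply picks out the root with the plus sign, and $\theta_2$ is a real number which may be negative or zero.
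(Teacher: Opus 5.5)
Your proof is correct and is essentially the paper's argument. The paper only states that a $2\times 2$ matrix with nonnegative entries has real eigenvalues, and your discriminant computation $D=(p-s)^2+4qr\ge 0$ is exactly the check it leaves implicit; your remark that the root with the plus sign has the largest absolute value is also correct, though the lemma does not need it.
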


\begin{proof}
    The matrix $M_f$ corresponding to the morphism $f$ is a $2\times2$ matrix with nonnegative entries. It is straightforward to see that the eigenvalues of such a matrix are real.
\end{proof}

\subsection{The Case $0 < |\theta_2| < 1$}

We prove that if $0 < |\theta_2| < 1$, the letter frequencies in the word $f^{\omega}(a)$ are irrational, and thus the word is not abelian periodic.

\begin{lemma}
\label{lemma:lemIrrational}
   For a binary morhism $f$, if $0 < |\theta_2| < 1$, then $\theta_2$ is irrational.
\end{lemma}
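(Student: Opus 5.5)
The plan is to use the fact that $M_f$ is a $2\times 2$ matrix with nonnegative integer entries. Its characteristic polynomial $\chi(x) = x^2 - tx + d$ therefore has integer coefficients, with $t = m_{11}+m_{22}$ and $d = \det M_f$. By Lemma \ref{lemma:lemReal0} both roots $\theta_1,\theta_2$ are real. Since $\chi$ is monic with integer coefficients, any rational root is an algebraic integer and so must be an integer (rational root theorem). I will argue by contradiction: assume $\theta_2$ is rational.

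Then $\theta_2 \in \mathbb{Z}$. No integer satisfies $0<|\theta_2|<1$, which contradicts the hypothesis. So $\theta_2$ is irrational.

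If one wants to avoid the rational root theorem, there is a direct route. Write $\theta_2 = p/q$ in lowest terms with $q \ge 1$. Substituting into $\chi$ gives $p^2 - tpq + dq^2 = 0$. Hence $q \mid p^2$, and since $\gcd(p,q)=1$ we get $q = 1$. Once more $\theta_2$ is an integer, and $0<|\theta_2|<1$ is impossible.

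There is no real obstacle here. The only point to state carefully is that the entries of $M_f$ are integers, since they count letter occurrences, so $\chi$ is monic with integer coefficients. The hypothesis $\theta_2\neq 0$ is exactly what excludes the integer root $0$, which does occur in the case $\theta_2=0$ treated later. Reality of the eigenvalues (Lemma \ref{lemma:lemReal0}) is not even needed for the irrationality argument itself; it only makes the conclusion ``irrational'' meaningful within the reals.
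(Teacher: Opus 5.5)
Your proof is correct and matches the paper's argument: the characteristic polynomial of $M_f$ is monic with integer coefficients, so by the rational root theorem any rational eigenvalue is an integer, which is impossible when $0<|\theta_2|<1$. Your direct computation ($q \mid p^2$ with $\gcd(p,q)=1$, hence $q=1$) and your remark that reality of the eigenvalues is not actually needed for this step are both accurate.
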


\begin{proof} By Lemma \ref{lemma:lemReal0}, the eigenvalues of $M_f$ are real. By rational root theorem, all rational eigenvalues must be integers. Since for $\theta_2$ we have $0 < |\theta_2| < 1$, it must be irrational.
\end{proof}

\begin{corollary}
\label{cor:cor_Irrational}
  For a binary morhism $f$, if $0 < |\theta_2| < 1$, then the frequencies of letters in the word $f^{\omega}(a)$ are irrational.
\end{corollary}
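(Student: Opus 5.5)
Let $M_f=\begin{pmatrix}\alpha&\beta\\\gamma&\delta\end{pmatrix}$ with nonnegative integer entries. The eigenvalues are the roots of $x^2-(\alpha+\delta)x+(\alpha\delta-\beta\gamma)$, a monic polynomial with integer coefficients. By Lemma~\ref{lemma:lemIrrational}, $\theta_2$ is irrational. Since $\theta_1+\theta_2=\alpha+\delta\in\mathbb{Z}$, the dominant eigenvalue $\theta_1$ is irrational as well. The two roots are conjugate quadratic irrationals $\theta_{1,2}=\frac{t\pm\sqrt{D}}{2}$ with $t=\alpha+\delta$ and $D=(\alpha-\delta)^2+4\beta\gamma$ not a perfect square.

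We first check that $f$ is primitive, so that Proposition~\ref{utv:PL} applies. If $\beta\gamma=0$, the matrix is triangular and its eigenvalues $\alpha,\delta$ are integers. This is impossible, so $\beta,\gamma\ge1$. Then $M_f^2$ has positive entries: the diagonal entries are $\alpha^2+\beta\gamma>0$ and $\delta^2+\beta\gamma>0$, and the off-diagonal entries are $\beta(\alpha+\delta)$ and $\gamma(\alpha+\delta)$. If $\alpha+\delta=0$, then $t=0$ and $\theta_2=-\theta_1$, so $|\theta_2|=|\theta_1|$. Since $|\theta_1\theta_2|=|\alpha\delta-\beta\gamma|=\beta\gamma\ge1$, this gives $|\theta_2|\ge1$, contradicting $|\theta_2|<1$. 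Hence $f$ is primitive. By Proposition~\ref{utv:PL}, the frequency vector $(\rho(a),\rho(b))$ is the eigenvector for $\theta_1$ normalized to sum $1$.

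Next we compute this eigenvector. From the first row, $(\alpha-\theta_1)x+\beta y=0$, so $(x,y)$ is proportional to $(\beta,\theta_1-\alpha)$. Therefore
\[
\rho(a)=\frac{\beta}{\beta+\theta_1-\alpha}.
\]
Suppose $\rho(a)$ were rational. Since $\beta\ne0$, we would get $\beta+\theta_1-\alpha=\beta/\rho(a)\in\mathbb{Q}$, so $\theta_1\in\mathbb{Q}$, a contradiction. Thus $\rho(a)$ is irrational, and so is $\rho(b)=1-\rho(a)$.

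The main obstacle is justifying primitivity, so that the frequencies exist and are given by the Perron eigenvector. The argument above handles it by ruling out $\beta\gamma=0$ and $\alpha+\delta=0$. Possible degenerate issues, such as a negative $\theta_1$, are excluded by the Perron–Frobenius theorem once primitivity holds.
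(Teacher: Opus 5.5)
Your proof is correct and follows essentially the paper's route: the paper writes $M_f-\theta_2 I$ as a rank-one matrix, which gives the $\theta_1$-eigenvector $(A,B)$ with $B=|f(a)|_b$ an integer and $A=|f(a)|_a-\theta_2$ irrational, i.e.\ your vector $(\beta,\theta_1-\alpha)$ up to scaling, read off from the second row instead of the first. Your explicit check that $0<|\theta_2|<1$ forces primitivity (ruling out $\beta\gamma=0$ and $\alpha+\delta=0$) and that the integer coordinate is nonzero supplies steps the paper leaves implicit when it invokes Proposition~\ref{utv:PL}.
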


\begin{proof}
    The matrix of the morhpism $f$ is of the form $\begin{pmatrix}
A + \theta_2 & \alpha A\\
B & \alpha B + \theta_2
\end{pmatrix}$ for some $A,\; B,\; \alpha$, such that $|A + \alpha B + \theta_2| \geq |\theta_2|$ (since $\theta_2$ is the second eigenvalue). By Lemma \ref{lemma:lemIrrational}, we have that $\theta_2$ is irrational. So, $A$ is irrational (since $A + \theta_2$ is an integer). Since $B$ is integer, we have that  $\frac{B}{A+B}$ is irrational, and the latter value is the frequency of the letter $b$ in the word $f^{\omega}(a)$ by Proposition \ref{utv:PL}.

\end{proof}

\subsection{The Case $|\theta_2| = 1$}


In this subsection, we treat separately the case $|\theta_2| = 1$. First note that for a binary morphism we have the following:

\begin{lemma}
\label{lemma:lemReal}
   If $|\theta_2| = 1$, then $\theta_2 = \pm 1$. 
\end{lemma}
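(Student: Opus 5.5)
By Lemma \ref{lemma:lemReal0} the eigenvalues of $M_f$ are real, so $|\theta_2| = 1$ leaves only $\theta_2 = 1$ or $\theta_2 = -1$. The plan is therefore just to record this consequence explicitly. Since the question is only about which real numbers have absolute value one, no properties of the morphism beyond Lemma \ref{lemma:lemReal0} are needed.

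For completeness, and to avoid relying on the brief justification of Lemma \ref{lemma:lemReal0}, I would also check realness directly. Write $M_f = \begin{pmatrix} p & q\\ r & s\end{pmatrix}$ with $p,q,r,s \in \mathbb{N}$. The characteristic polynomial is $x^2 - (p+s)x + (ps - qr)$, with discriminant
$$(p+s)^2 - 4(ps-qr) = (p-s)^2 + 4qr \geq 0,$$
since $q,r \geq 0$. Hence both roots are real, and a real number of absolute value $1$ is $\pm 1$.

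As a side remark that the paper may use later, both roots are also algebraic integers, being roots of a monic integer polynomial. So $\theta_2 = \pm 1$ is consistent with the rational root theorem, as in Lemma \ref{lemma:lemIrrational}.

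There is no real obstacle here. The only point needing care is that the discriminant is nonnegative, and this uses the nonnegativity of the off-diagonal entries $q$ and $r$.
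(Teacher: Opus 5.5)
Your proof is correct and follows the paper's argument: the paper's proof is just ``Follows from Lemma~\ref{lemma:lemReal0}'', i.e., the eigenvalues are real, so absolute value $1$ forces $\pm 1$. Your discriminant computation $(p-s)^2+4qr\geq 0$ supplies the justification that the paper dismisses as ``straightforward'' for Lemma~\ref{lemma:lemReal0}.
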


\begin{proof}
    Follows from Lemma \ref{lemma:lemReal0}.
\end{proof}




So, $\theta_2 = \pm 1$. Note that if $\theta_2=-1$ for $M_f$, then for the matrix $M_{f}^2$ we have $\theta_2=1$. Since $(f^2)^{\omega}(a) = f^{\omega}(a)$, we may assume without loss of generality that for $M_f$ we have $\theta_2=1$. In this case, the morphism $f$ has the matrix$\begin{pmatrix}
A+1 & \alpha A\\
B & \alpha B+1
\end{pmatrix}$ for some integers $A$ and $B$ and for some rational $\alpha$, since $M_f - \theta_2 I$ is a singular matrix
, so its columns are collinear. 

Note that in these cases, the letter frequencies are equal to ($\frac{A}{A+B},\; \frac{B}{A+B}$).

\begin{lemma}
\label{lemma:lem_Arb_1}
    Let $f$ be a binary morphism with matrix $M_f = \begin{pmatrix}
A+1 & \alpha A\\
B & \alpha B+1
\end{pmatrix}$. Then for every word $u \in \{a,\;b\}^*$ the following holds: $B|f(u)|_a - A|f(u)|_b = B|u|_a - A|u|_b$.
\end{lemma}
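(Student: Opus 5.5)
The plan is to reduce the statement to a linear-algebra identity about the Parikh vector. For a word $u$, let $\psi(u) = (|u|_a, |u|_b)^T$. By the definition of $M_f$ and since $f$ is a morphism, we have $\psi(f(u)) = M_f\,\psi(u)$. This holds by additivity: $f(u)$ is the concatenation of the blocks $f(u_i)$, and $\psi$ of each block is the corresponding column of $M_f$. The quantity in the statement is the linear functional $\ell(x,y) = Bx - Ay$ applied to $\psi$. So it suffices to show that $\ell$ is a left eigenvector of $M_f$ with eigenvalue $1$, that is, $(B,\,-A)\,M_f = (B,\,-A)$.

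The key step is a direct computation of this row vector times the matrix. The first entry is
\[
B(A+1) - AB = B.
\]
The second entry is
\[
B\alpha A - A(\alpha B + 1) = -A.
\]
Hence $(B,-A)M_f = (B,-A)$. Therefore
\[
\ell(\psi(f(u))) = (B,-A)\,M_f\,\psi(u) = (B,-A)\,\psi(u) = \ell(\psi(u)),
\]
which is exactly $B|f(u)|_a - A|f(u)|_b = B|u|_a - A|u|_b$.

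There is no real obstacle. The only point needing care is the justification of $\psi(f(u)) = M_f\psi(u)$. This follows by induction on $|u|$: for a single letter it is the definition of the columns of $M_f$, and both sides are additive under concatenation. The rationality of $\alpha$ plays no role, since the identity is purely algebraic.
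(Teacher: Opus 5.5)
Your proposal is correct and is essentially the verification the paper leaves as ``Straightforward'': writing $\psi(f(u)) = M_f\psi(u)$ and checking $(B,-A)M_f = (B,-A)$ is exactly the computation behind the lemma. You also correctly note that the rationality of $\alpha$ is never used.
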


\begin{proof}
Straightforward.
\end{proof}

\begin{lemma}
\label{lemma:lem2.2}
Suppose that the frequencies of  both letters in a word $s\in \{a,b\}^{\mathbb{N}}$ exist and are equal to $(\frac{A}{A+B}, \frac{B}{A+B})$. If for each $M\in\mathbb{N}$ there exists a factor $u_M$ of the word $s$ such that $$B|u_M|_a - A|u_M|_b \geq M,$$ then the word $s$ is not $c$-balanced for any $c \in \mathbb{N}$.
\end{lemma}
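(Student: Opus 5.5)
We argue by contradiction. Suppose $s$ is $c$-balanced for some $c$. Set $D(u) = B|u|_a - A|u|_b$ for finite words $u$; this is the quantity controlled in Lemma \ref{lemma:lem_Arb_1}. Since the frequencies $\frac{A}{A+B}$ and $\frac{B}{A+B}$ are nonnegative and sum to $1$, we may assume $A,B\ge 0$ and $A+B>0$. If $A+B<0$, flip the signs of $A$ and $B$; this leaves the frequency vector unchanged. Writing $n=|u|$, we have
\[ D(u) = (A+B)|u|_a - An = (A+B)\Bigl(|u|_a - \tfrac{A}{A+B}\,n\Bigr). \]
So $D(u)$ measures how far the number of $a$'s in $u$ deviates from its expected value $\rho_s(a)\,n$.

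The key step is to show that $c$-balance keeps every deviation $|u|_a - \rho_s(a)|u|$ bounded by $c$. Take a factor $u$ of length $n$ and any other factor $w$ of the same length; balance gives $\bigl||u|_a-|w|_a\bigr|\le c$. Now split a long prefix $p_N=s_0\cdots s_{Nn-1}$ into $N$ consecutive blocks of length $n$. Each block has between $|u|_a-c$ and $|u|_a+c$ occurrences of $a$, so
\[ N(|u|_a-c)\le |p_N|_a\le N(|u|_a+c). \]
Dividing by $Nn$ and letting $N\to\infty$ (the frequency exists by hypothesis) yields $\bigl|\rho_s(a)n - |u|_a\bigr|\le c$.

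It follows that $|D(u)|\le |A+B|\,c$ for every factor $u$. This contradicts the hypothesis once we pick $M>|A+B|\,c$ and take the corresponding factor $u_M$. Hence $s$ is not $c$-balanced for any $c$.

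The averaging step is the only part needing care: it relies on existence of the frequency and on comparing each block with the fixed factor $u$, which balance allows. A possible degenerate case is $A+B=0$, which would make the frequencies undefined. That case is excluded by how the frequencies are stated, so no separate treatment should be needed.
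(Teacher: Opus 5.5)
Your proof is correct and follows essentially the same route as the paper. Both arguments use $c$-balance to compare every block of length $|u_M|$ with $u_M$, tile a long prefix by such blocks, and invoke the existence of the frequency to reach a contradiction. Your version is packaged as the uniform bound $|D(u)|\le |A+B|\,c$, which is slightly cleaner than the paper's ratio $\frac{B|t_r|_a}{A|t_r|_b}$ because it needs no division by $A|t_r|_b$; it also makes your sign-flipping remark unnecessary, since only $|D(u)|$ enters the final step.
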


\begin{proof}

First, we note that since the letter frequencies in the word $s$ are $(\frac{A}{A+B}, \frac{B}{A+B})$, it follows that  
\begin{equation}\label{eq:lim}\lim\limits_{k\rightarrow \infty}\frac{B|\text{pref}_k s|_a}{A|\text{pref}_k s|_b} = 1.\end{equation}

Assume that the word $s$ is $c$-balanced for some $c\in \mathbb{N}$. Consider $M \geq (A+B) c+1$. Let $u_M$ be a factor of the word $s$ such that $B|u_M|_a - A|u_M|_b \geq M$. Then, by the $c$-balance of the word $s$, for each factor $v$ of $s$ with $|v| = |u_M|$ the following holds: 
\[B|v|_a - A|v|_b \geq B(|u_M|_a - c) - A(|u_M|_b + c) = (B|u_M|_a - A|u_M|_b) - (A+B)c \geq M - (A+B)c \geq 1.\]

We now factorize the word $s$ into words of length $|u_M|$ and observe that for each integer $r$ and for a prefix $t_r$ of length $r|u_M|$ of $s$ the following holds: $\frac{B|t_r|_a}{A|t_r|_b} \geq 1 + \frac{r}{A|t_r|_b} \geq 1 + \frac{r}{Ar(|u_M|_b+c)} = 1 + \frac{1}{A(|u_M|_b+c)}$. This contradicts \eqref{eq:lim}.
\end{proof}

\begin{lemma}
\label{lemma:lem_Arb_aa}
Let $f$ be a morphism with matrix $\begin{pmatrix}
A+1 & \alpha A\\
B & \alpha B+1
\end{pmatrix}$. If $f(a)$ contains $aa$ as a factor, then for each 
integer $c$ the word $f^{\omega}(a)$ is not $c$-balanced. In particular, the word $f^{\omega}(a)$ is not abelian periodic.
\end{lemma}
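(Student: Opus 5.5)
The plan is to apply Lemma~\ref{lemma:lem2.2}. For this we must produce, for every $M$, a factor $u_M$ of $s=f^{\omega}(a)$ with $B|u_M|_a - A|u_M|_b \geq M$. The letter frequencies exist and equal $(\frac{A}{A+B},\frac{B}{A+B})$, by Proposition~\ref{utv:PL} in the primitive case; the non-primitive case is handled separately or assumed excluded here. Write $D(u) = B|u|_a - A|u|_b$. Lemma~\ref{lemma:lem_Arb_1} says that $D$ is invariant under $f$: $D(f(u)) = D(u)$. The second claim follows from the first, since an abelian periodic word is $c$-balanced for $c = r+2p$.

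\textbf{Key step: iterate the factor $aa$.} Since $f(a)$ contains $aa$, we can write $f(a) = x\,aa\,y$. Put $w_0 = a$ and note $D(a) = B$. I will build factors with growing $D$ by induction. Assume $w_k$ is a factor of $s$ with $D(w_k)\ge kB + B$ (or some other quantity tending to infinity), and with $w_k$ both beginning and ending with $a$. Then $f(w_k)$ is a factor of $s$ with $D(f(w_k)) = D(w_k)$. Since $w_k$ starts with $a$, $f(w_k)$ begins with $x\,aa\,y$, and since $w_k$ ends with $a$, $f(w_k)$ ends with $x\,aa\,y$. The idea is to strip off pieces with negative $D$ and keep pieces with positive $D$.

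A cleaner route uses the fact that $aa$ is a factor of $s$ together with $D(f^k(aa)) = 2B$. This only gives a bounded value, so invariance alone is not enough; the gain has to come from cutting blocks. Concretely, decompose $f(a) = x a a y$ and consider $f^k(a)$. Because $D(f^k(a)) = B$ and $f^k(a)$ contains many occurrences of the block $f^{k-1}(a)$, I will look for a factor of $s$ of the form $f^{k-1}(a)f^{k-1}(a)\cdots$. More simply: $f(aa)=x\,aa\,y\,x\,aa\,y$, and its middle factor $aa\,y\,x\,aa$ has $D = 2B + D(yx) + 2B$. We have $D(yx) = D(f(a)) - D(aa) = B - 2B = -B$, so $D(a a y x a a) = 3B$.

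\textbf{Induction.} Let $z_1 = aa$, with $D(z_1) = 2B$. Given a factor $z_k$ of $s$ that starts and ends with $aa$ and has $D(z_k) = (k+1)B$, the factor $f(z_k)$ starts with $x\,aa$ and ends with $aa\,y$. Removing the prefix $x$ and the suffix $y$ gives $z_{k+1}$, which starts and ends with $aa$ and satisfies
\[D(z_{k+1}) = D(z_k) - D(x) - D(y) = D(z_k) - D(xy) = (k+1)B + B = (k+2)B,\]
using $D(xy) = D(f(a)) - 2B = -B$. Hence $D(z_k)\to\infty$ provided $B>0$, and Lemma~\ref{lemma:lem2.2} applies.

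\textbf{Main obstacle: sign and degeneracy.} I need $B>0$. Positivity of the letter frequencies gives $A,B$ of the same sign; if both are negative, replace $(A,B)$ by $(-A,-B)$, which leaves the matrix form and $D$ up to sign unchanged. The degenerate case $B=0$ (the frequency of $b$ is $0$) has to be excluded or handled separately, for instance via primitivity.
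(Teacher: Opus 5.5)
Your argument is correct and is essentially the paper's: you produce factors with unbounded $D(u)=B|u|_a-A|u|_b$ from the $f$-invariance of Lemma~\ref{lemma:lem_Arb_1} and finish with Lemma~\ref{lemma:lem2.2}. Your symmetric trimming of $f(z_k)$ to get $z_{k+1}$ (exact gain $-D(xy)=B$ per step) is a slightly tidier variant of the paper's $t_{k+1}=f(t_k)ua$, which first needs a choice of the side where $D(ua)\geq 1$ or $D(av)\geq 1$.

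Your sign discussion is unnecessary, because $B=|f(a)|_b$ and $A=|f(a)|_a-1$ are nonnegative integers; your claim that $(A,B)\mapsto(-A,-B)$ preserves the matrix form is also false, though harmless. You are right that $B=0$ must be excluded: for example $f(a)=aa$, $f(b)=b$ fits the form and gives $a^{\omega}$. The paper excludes it tacitly via primitivity when it writes ``$B\geq 1$'' by assumption.
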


\begin{proof}
        Suppose that the image of the letter $a$ contains the word $aa$, i.e, $f(a) = uaav$ for some $u,v \in \Sigma^*$. Note that, by assumption, $B|f(a)|_a - A|f(a)|_b = B \geq 1$. Thus, we have either $B|ua|_a - A|ua|_b \geq 1$, or $B|av|_a - A|av|_b \geq 1$. Without loss of generality assume that $B|ua|_a - A|ua|_b \geq 1$. We will prove that for each integer $M$ there exists a factor $t_M$ of $f^{\omega}(a)$ such that $$B|t_M|_a - A|t_M|_b \geq M.$$

    We construct the sequence $ (t_M)_{M\geq 1}$ inductively. First, consider the word $t_1 = ua$: by assumption, $B|ua|_a - A|ua|_b \geq 1$. Note that $f^{\omega}(a)$ contains an occurrence of $t_1$ followed by the letter $a$ (that is, an occurrence of $uaa$).

    Now, consider the word $t_2 = f(ua)ua$. Note that it occurs in the word $f^{\omega}(a)$, for instance, as a prefix of the word $f(uaa)$; moreover, this occurrence of $t_2$ is followed by the letter $a$.

    Next, consider the occurrence of the word $t_{k+1} = f(t_k)ua$ as a prefix of an occurrence of $f(t_ka)$. This occurrence of $t_{k+1}$ is followed by the letter $a$, thus, there is an occurrence of $t_{k+1}a$ in the word $f^{\omega}(a)$.

    It is easy to show by induction that $B|t_k|_a - A|t_k|_b \geq k$. Indeed, for $k=1$ the equality holds, and for $k > 1$ we have $t_k = f(t_{k-1})ua$, where by Lemma \ref{lemma:lem_Arb_1} we have $B|f(t_{k-1})|_a - A|f(t_{k-1})|_b  = B|t_{k-1}|_a - A|t_{k-1}|_b \geq k - 1$, and $B|ua|_a - A|ua|_b \geq 1$. Applying Lemma \ref{lemma:lem2.2}, we get that $f^{\omega}(a)$ is not $c$-balanced for any $c\in\mathbb{N}$, and hence  is not abelian periodic.
\end{proof}

\begin{lemma}
\label{lemma:lem_Arb_bb}
    Let $f$ be a morphism with matrix $M_f = \begin{pmatrix}
A+1 & \alpha A\\
B & \alpha B+1
\end{pmatrix}$. If $f(b)$ contains $bb$ as a factor, then for each $c\in\mathbb{N}$ the word $f^{\omega}(a)$ is not $c$-balanced. In particular, the word $f^{\omega}(a)$ is not abelian periodic.
\end{lemma}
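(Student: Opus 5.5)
The plan is to mirror the proof of Lemma \ref{lemma:lem_Arb_aa}, exchanging the roles of $a$ and $b$ and working with the quantity $A|u|_b - B|u|_a$ in place of $B|u|_a - A|u|_b$.

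\textbf{Step 1: a positive starting block.} By Lemma \ref{lemma:lem_Arb_1} applied to $u=b$,
\[
B|f(b)|_a - A|f(b)|_b = -A,
\]
that is, $A|f(b)|_b - B|f(b)|_a = A \geq 1$. We need $A\geq 1$: primitivity forces $A+B>0$, and the frequencies $\frac{A}{A+B}$, $\frac{B}{A+B}$ must be nonnegative. If $A=0$ the matrix is upper-triangular-like and the case is degenerate; I will dispose of it separately or note that primitivity rules it out, since then $f(b)$ would contain only one letter $a$-count pattern.

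Write $f(b)=ubbv$. The word $f(b)=(ub)(bv)$ has total value $A\geq 1$ for $A|\cdot|_b-B|\cdot|_a$. Hence either $ub$ or $bv$ has value at least $1$. This holds only if the values are integers, which is true because $A,B$ are integers, so the two parts' values sum to $A\ge 1$ and one of them is positive.

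\textbf{Step 2: the inductive construction.} Suppose $ub$ is the positive part. Since $b$ occurs in $f^\omega(a)$ (by primitivity), so does $f(b)$, and therefore $ubb$ occurs. Set $t_1=ub$ and $t_{k+1}=f(t_k)ub$. If $t_kb$ occurs, then $f(t_kb)=f(t_k)ubbv$ occurs, so $t_{k+1}b$ occurs as well. By Lemma \ref{lemma:lem_Arb_1}, the value of $t_k$ is at least $k$.

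If instead $bv$ is the positive part, I use the symmetric construction: $t_1=bv$ and $t_{k+1}=bv\,f(t_k)$. This relies on occurrences of $bt_k$ in $f(bt_k)=ub\,bv\,f(t_k)$.

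\textbf{Step 3: the variant of Lemma \ref{lemma:lem2.2}.} I need Lemma \ref{lemma:lem2.2} with $a,b$ and $A,B$ swapped: if $A|u_M|_b-B|u_M|_a\ge M$ for arbitrarily large $M$, then $s$ is not $c$-balanced. The proof is identical, with the inequality in the limit argument reversed. The frequencies exist by Proposition \ref{utv:PL}, since $f$ is primitive.

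Non-$c$-balance then implies non-abelian-periodicity by the remark in the preliminaries. The main care point is the sign/positivity of $A$ in Step 1; I expect it to follow from the positivity of the frequency of $b$.
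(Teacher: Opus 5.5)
Your proposal is correct and is the mirrored argument the paper intends (its proof of this lemma only says it goes as for Lemma~\ref{lemma:lem_Arb_aa}), and you are slightly more careful than the paper in writing out both the case where $ub$ and the case where $bv$ carries the positive value. The one point to state cleanly is $A\ge 1$. It comes from primitivity (if $A=0$ then $|f(b)|_a=\alpha A=0$, so $f(b)\in b^+$), equivalently from positivity of the frequency $\frac{A}{A+B}$ of $a$, not of $b$. The case $A=0$ cannot be ``disposed of separately'': for $f(a)=ab$, $f(b)=bb$ (where $A=0$, $B=\alpha=1$) the word $ab^{\omega}$ is abelian periodic, so primitivity is a hypothesis you genuinely need.
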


\begin{proof}
    Similarly to the proof of Lemma \ref{lemma:lem_Arb_aa}.
\end{proof}

\begin{corollary}
\label{cor:cor_Arb_1}
Let $f$ be a morphism with matrix $M_f = \begin{pmatrix}
A+1 & \alpha A\\
B & \alpha B+1
\end{pmatrix}$. If either $f(a)$ or $f(b)$ contains an occurrence of $aa$ or an occurrence of $bb$, then the word $f^{\omega}(a)$ is not $c$-balanced for any $c\in\mathbb{N}$ and, therefore, is not abelian periodic.
\end{corollary}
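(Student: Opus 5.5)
Corollary \ref{cor:cor_Arb_1} combines Lemmas \ref{lemma:lem_Arb_aa} and \ref{lemma:lem_Arb_bb}, which already treat the cases where $aa$ occurs in $f(a)$ and where $bb$ occurs in $f(b)$. The two remaining cases are $aa$ occurring in $f(b)$ and $bb$ occurring in $f(a)$. I would reduce both of them to a single device: pass to the square $f^2$ and apply the lemmas to it. This is legitimate because $(f^2)^{\omega}(a)=f^{\omega}(a)$. Moreover, $M_{f^2}=M_f^2$ still has $\theta_2=1$, so $M_{f^2}-I$ is singular and $M_{f^2}$ has the form $\begin{pmatrix} A'+1 & \alpha' A'\\ B' & \alpha' B'+1\end{pmatrix}$. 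I still need to check that $A',B'>0$, which is needed for Lemma \ref{lemma:lem_Arb_aa}, where $B'\ge 1$ is used. Since $M_f-I=\begin{pmatrix} A & \alpha A\\ B&\alpha B\end{pmatrix}$ has rank one, $(M_f-I)^2=(A+\alpha B)(M_f-I)$, and hence $M_f^2-I=(M_f-I)(M_f+I)=(A+\alpha B+2)(M_f-I)$. So $A'=(A+\alpha B+2)A$ and $B'=(A+\alpha B+2)B$, and the frequency ratio is preserved.

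\emph{Case $aa$ in $f(b)$.} By primitivity, $b$ occurs in $f^{k}(a)$ for some $k$; in fact already $f(a)$ contains $b$ unless $f(a)\in a^+$, and the latter is impossible for a primitive $f$ prolongable on $a$. Writing $f(a)=xby$, we get $f^2(a)=f(x)f(b)f(y)$, which contains $aa$. Lemma \ref{lemma:lem_Arb_aa} applied to $f^2$ then shows that $f^{\omega}(a)$ is not $c$-balanced for any $c$.

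\emph{Case $bb$ in $f(a)$.} I apply $f^2$ again. The word $f^2(b)=f(f(b))$ contains $f(a)$ as soon as $f(b)$ contains $a$. This holds because otherwise $f(b)\in b^+$, contradicting primitivity. Hence $bb$ occurs in $f^2(b)$, and the symmetric Lemma \ref{lemma:lem_Arb_bb} applies to $f^2$.

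The main point needing care is that the hypotheses of the lemmas transfer to $f^2$: that the matrix keeps the required shape with positive $A'$ and $B'$, and that the lemmas apply to any morphism with $\theta_2=1$ generating the same fixed point. Both follow from the rank-one identity above and from the fact that $f^2$ is prolongable on $a$. In every case the word is not $c$-balanced, and since abelian periodic words are $c$-balanced, it is not abelian periodic.
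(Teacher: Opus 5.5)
Your proposal is correct and matches the paper's argument: the paper also passes to $f^2$, which has the same fixed point, notes that $aa$ or $bb$ then occurs in the images required by Lemma~\ref{lemma:lem_Arb_aa} or Lemma~\ref{lemma:lem_Arb_bb}, and applies that lemma. You additionally spell out two points the paper leaves implicit: the identity $M_f^2-I=(A+\alpha B+2)(M_f-I)$, showing that $M_{f^2}$ keeps the required shape with $A',B'>0$, and the use of primitivity to ensure $f(a)$ contains $b$ and $f(b)$ contains $a$ (this hypothesis is tacit in the paper and genuinely needed, since $f(a)=aa$, $f(b)=baa$ fits the matrix form with $B=0$ yet generates $a^{\omega}$).
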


\begin{proof}
     If either $f(a)$ or $f(b)$ contains an occurrence of $aa$ or an occurrence of $bb$, then for the morphism $f^2$ we have an occurrence of $aa$ or an occurrence of $bb$ in both $f^2(a)$ and $f^2(b)$. It then remains to apply either Lemma \ref{lemma:lem_Arb_aa} or Lemma \ref{lemma:lem_Arb_bb}.
\end{proof}

\begin{corollary}
\label{cor:cor_Arb_2}
Let $f$ be a morphism with matrix $M_f = \begin{pmatrix}
A+1 & \alpha A\\
B & \alpha B+1
\end{pmatrix}$. Then the word $f^{\omega}(a)$ is abelian periodic if and only if $A = B$ and $f(a) = a(ba)^A$, $f(b) = b(ab)^{\alpha A}$.
\end{corollary}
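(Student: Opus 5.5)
The plan is to prove both directions directly. The "only if" direction reduces to a parity count on alternating words, using Corollary \ref{cor:cor_Arb_1} to force $f(a)$ and $f(b)$ to alternate. The "if" direction is a one-line computation.

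For the "if" direction, suppose $f(a)=a(ba)^A$ and $f(b)=b(ab)^{\alpha A}$. Then
$$f(ab)=a(ba)^A\,b(ab)^{\alpha A}=(ab)^{A+\alpha A+1}.$$
By induction, every $f^k(a)$ is a prefix of $(ab)^\omega$, so $f^{\omega}(a)=(ab)^\omega$. This word is periodic, hence abelian periodic with period $2$.

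For the "only if" direction, assume $f^{\omega}(a)$ is abelian periodic. By Corollary \ref{cor:cor_Arb_1}, neither $f(a)$ nor $f(b)$ contains $aa$ or $bb$, so both blocks are alternating words. In an alternating word $x$ we always have $|x|_a-|x|_b\in\{-1,0,1\}$, and the value is determined by the first and last letters. Since $f$ is prolongable on $a$, the block $f(a)$ begins with $a$. Hence either $f(a)=a(ba)^k$ or $f(a)=(ab)^k$. Reading off the first column $(A+1,B)$ of $M_f$ gives the following.

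\emph{Case $f(a)=a(ba)^k$.} Then $A+1=k+1$ and $B=k$, so $A=B=k$. The second column becomes $(\alpha A,\alpha A+1)$, so $f(b)$ is an alternating word with exactly one more $b$ than $a$. Such a word must begin and end with $b$, so $f(b)=b(ab)^{\alpha A}$, which is the claimed form.

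\emph{Case $f(a)=(ab)^k$.} Then $A=k-1$ and $B=k$. For the second column $(\alpha A,\alpha B+1)$, the difference between the number of $b$'s and the number of $a$'s is $\alpha(B-A)+1=\alpha+1$, and this must lie in $\{-1,0,1\}$. Both $\alpha A$ and $\alpha B+1$ are nonnegative. Also $\alpha\ge 0$: if $A>0$ this follows from $\alpha A\ge0$, and if $A=0$ the $a$-count of $f(b)$ is $0$ regardless, so we may take $\alpha=0$. Together these force $\alpha=0$. Then $f(b)$ contains no $a$ and is alternating, so $f(b)=b$. Consequently $f^2(a)=f((ab)^k)=(ab^{\,}\!b\cdots)$ begins with $f(ab)=(ab)^kb$, which contains $bb$. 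The morphism $f^2$ has the same fixed point and a matrix of the same shape (still with eigenvalue $1$, and the same $A,B$ relation by Lemma \ref{lemma:lem_Arb_1}). Corollary \ref{cor:cor_Arb_1} applied to $f^2$ then shows that $f^{\omega}(a)$ is not abelian periodic, a contradiction. So this case cannot occur.

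I expect the main obstacle to be the degenerate second case, where $A=0$ or $\alpha=0$. There one must check carefully that applying Corollary \ref{cor:cor_Arb_1} to $f^2$ is legitimate, that is, that the matrix of $f^2$ is again of the required form. If that check proves awkward, the fallback is to argue directly that $f(b)=b$ produces unbounded runs of $b$, which contradicts $c$-balance.
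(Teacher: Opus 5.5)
Your ``if'' direction and your case $f(a)=a(ba)^k$ are correct, and more explicit than the paper's one-line argument. The gap is in the case $f(a)=(ab)^k$. There you correctly derive $\alpha=0$ and $f(b)=b$, and then invoke Corollary~\ref{cor:cor_Arb_1} for $f^2$ because $f^2(a)$ contains $bb$.

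The shape of $M_{f^2}$ is not the problem. The problem is that Corollary~\ref{cor:cor_Arb_1}, and Lemma~\ref{lemma:lem2.2} behind it, are valid only in the primitive setting, for two reasons:
\begin{itemize}
\item Its proof needs the forbidden factor in \emph{both} $f^2(a)$ and $f^2(b)$. That fails here, since $f^2(b)=b$. An occurrence of $bb$ inside the image of $a$ is covered by neither Lemma~\ref{lemma:lem_Arb_aa} nor Lemma~\ref{lemma:lem_Arb_bb}.
\item The frequencies $(\frac{A}{A+B},\frac{B}{A+B})$ used in Lemma~\ref{lemma:lem2.2} come from Proposition~\ref{utv:PL}, which requires primitivity.
\end{itemize}

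In fact the contradiction you claim does not exist. For $k=1$, the morphism $f(a)=ab$, $f(b)=b$ has matrix $\begin{pmatrix}1&0\\1&1\end{pmatrix}$. This is of the required form with $A=0$, $B=1$, $\alpha=0$. Yet $f^{\omega}(a)=ab^{\omega}$ is ultimately periodic, hence abelian periodic, although $A\neq B$. Your fallback breaks at the same point: $ab^{\omega}$ has unbounded runs of $b$ but is $1$-balanced. The run argument only works for $k\ge 2$, where $a$ has positive frequency.

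So no amount of care about $f^2$ can close your second case as you set it up. What is missing is the primitivity hypothesis under which the whole section is developed. It is the standing assumption of Proposition~\ref{utv:utv_Arb_1}, and the paper's terse proof uses it implicitly: it gets $A=B$ from ``a difference of at least $2$ forces $aa$ or $bb$'', which needs $\alpha>0$.

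With primitivity, your second case dies immediately, without passing to $f^2$. Primitivity gives $|f(b)|_a=\alpha A\ge 1$, so $\alpha>0$. Then $|f(b)|_b-|f(b)|_a=\alpha(B-A)+1=\alpha+1>1$, so the alternating word $f(b)$ would have to contain $bb$, a contradiction. State the primitivity assumption explicitly and replace your second case by this line; the rest of your argument then goes through.
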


\begin{proof}
    If the difference between the numbers of occurrences of letters $a$ and $b$ in the image of some letter is at least $2$, then the image of this letter contains two consecutive occurrences of the same letter, which contradicts Corollary \ref{cor:cor_Arb_1}. Thus, $A = B$.
\end{proof}

\begin{corollary}
\label{cor:cor_Arb_3}
If $f$ is a morphism with matrix $M_f = \begin{pmatrix}
A-1 & \alpha A\\
B & \alpha B-1
\end{pmatrix}$, then the word $f^{\omega}(a)$ is not abelian periodic.
\end{corollary}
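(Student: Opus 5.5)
The matrix $M_f=\begin{pmatrix} A-1 & \alpha A\\ B & \alpha B-1\end{pmatrix}$ has $\theta_2=-1$, so the plan is to pass to $g=f^2$. Its matrix $M_g=M_f^2$ has $\theta_2=1$, and $g^{\omega}(a)=f^{\omega}(a)$. First I will check that $M_g$ is of the form covered above: $M_g-I=(M_f-I)(M_f+I)$, and $M_f+I=\begin{pmatrix} A & \alpha A\\ B&\alpha B\end{pmatrix}$ has rank one. So $M_g=\begin{pmatrix} A'+1 & \alpha' A'\\ B' & \alpha' B'+1\end{pmatrix}$ with $(A',B')$ proportional to $(A,B)$, and the letter frequencies are unchanged, namely $(\frac{A}{A+B},\frac{B}{A+B})$. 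Corollary \ref{cor:cor_Arb_2} applied to $g$ then says that $f^{\omega}(a)$ is abelian periodic only if $A'=B'$, $g(a)=a(ba)^{A'}$ and $g(b)=b(ab)^{\alpha' A'}$. In particular $g(a)$ and $g(b)$ contain neither $aa$ nor $bb$.

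Next I will show this is impossible for $g=f\circ f$ with $f$ of the given form. Since $f$ is prolongable on $a$, $f(a)$ begins with $a$ and has length at least $2$. If $f(a)$ or $f(b)$ contains $aa$ or $bb$, then $g(a)$ does too, because $f(a)$ occurs inside $g(a)=f(f(a))$ and $f(b)$ occurs inside $g(a)$ once $f(a)$ contains $b$. So both $f(a)$ and $f(b)$ must alternate letters. An alternating block has letter counts differing by at most $1$, which pins down the entries: $|A-1-B|\le 1$ and $|\alpha A-(\alpha B-1)|\le 1$.

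Then I will check concatenations. $f(a)$ starts with $a$, and because $g(a)=f(f(a))$ must alternate, every junction $f(x)f(y)$ with $xy$ a factor of $f(a)$ must not repeat a letter. Since $f(a)$ alternates and contains $ab$, we need $f(a)$ to end in $a$ and $f(b)$ to start with $b$; similarly $ba$ forces $f(b)$ to end in $b$. Thus $f(a)=a(ba)^k$ and $f(b)=b(ab)^m$, giving $M_f=\begin{pmatrix}k+1&m\\k&m+1\end{pmatrix}$. This matrix has $\theta=1$ as an eigenvalue, since $M_f-I$ has equal rows, and $\det M_f=k+m+1>0$. Hence $\theta_2=1$, not $-1$ (the trace is $k+m+2$), which contradicts the hypothesis.

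The main obstacle is the middle step: making sure the reduction from "$g$ has no $aa$ or $bb$" to the alternating forms of $f(a)$ and $f(b)$ is airtight. The delicate subcase is $f(a)$ consisting only of $a$'s, where $f(b)$ might never appear inside $g(a)$. Here $f(a)=a^{j}$ with $j\ge 2$ already gives $aa$, so $f(a)$ contains $b$ and $f(b)$ indeed occurs in $g(a)$. A quicker alternative, if it works, is to compare eigenvalues directly: the form forced by Corollary \ref{cor:cor_Arb_2} on $g$ leaves no room for a square root with eigenvalue $-1$.
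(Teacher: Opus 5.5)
Up to the count constraints you follow the paper: pass to $f^2$ (whose matrix has $\theta_2=1$) and use Corollary~\ref{cor:cor_Arb_1}/\ref{cor:cor_Arb_2} to forbid $aa$ and $bb$ in $f^2(a)$ and $f^2(b)$. From this you get that $f(a)$ and $f(b)$ alternate, so in each block the two letter counts differ by at most one. That part is fine, including your rank-one check of $M_{f^2}-I$ and the observation that $f(a)$ contains $b$.

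\textbf{The gap is in the junction step.} The factor $ab$ of $f(a)$ only tells you that the last letter of $f(a)$ differs from the first letter of $f(b)$. It does not force $f(a)$ to end in $a$ and $f(b)$ to start with $b$. The branch where $f(a)=(ab)^k$ ends in $b$ and $f(b)$ begins with $a$ is left open. Junction considerations alone cannot close it: $f(a)=ab$, $f(b)=(ab)^m$ makes both $f^2(a)$ and $f^2(b)$ alternate. Likewise, ``$ba$ forces $f(b)$ to end in $b$'' needs $ba$ to occur in $f(a)$, which fails for $f(a)=ab$. So the forms $f(a)=a(ba)^k$, $f(b)=b(ab)^m$ do not follow without invoking the matrix hypothesis, and the eigenvalue computation rests on an unproved case reduction.

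\textbf{The repair uses inequalities you already wrote, and it is essentially how the paper finishes.} The paper does a case split $B>A$, $B<A$, $A=B$; the following is a compact version of it:
\begin{itemize}
\item Since $\alpha B-1\ge 0$, we have $B>0$ and $\alpha>0$.
\item Since $f(a)$ alternates and begins with $a$, $|f(a)|_a\ge |f(a)|_b$. This means $A-1\ge B$, so $A>B$.
\item Then $|f(b)|_a-|f(b)|_b=\alpha(A-B)+1$ is an integer strictly greater than $1$, hence at least $2$.
\item So $f(b)$ contains $aa$, a contradiction.
\end{itemize}
Equivalently, the conclusion $A'=B'$ you quote from Corollary~\ref{cor:cor_Arb_2} already forces $A=B$, because $(A',B')$ is a positive multiple of $(A,B)$. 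Then $f(a)$ has one more $b$ than $a$ while beginning with $a$, so it contains $bb$. Either way, the junction analysis and the final eigenvalue argument are unnecessary.
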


\begin{proof}
For the morpfism $f^2$, the eigenvalue $\theta_2$ of the matrix $M_{f^2}$ is equal to 1.
By Corollary \ref{cor:cor_Arb_1}, we obtain the following necessary condition for abelian periodicity of the word $f^{\omega}(a) = (f^2)^{\omega}(a)$: there must be no occurrences of $aa$ and $bb$ in both $f^2(a)$ and $f^2(b)$.
So, the difference between the numbers of occurrences of letters $a$ and $b$ must be at most $1$ in both $f(a)$ and $f(b)$, otherwise either $f(a)$ or $f(b)$ contain an occurrence of $aa$ or $bb$.  
If $B>A$, then $|f(a)|_b\geq |f(a)|_a+2$. If $B<A$, then $\alpha B < \alpha A$ and $|f(b)|_a\geq |f(a)|_b+2$. In the case $A=B$, i. e., $M_f = \begin{pmatrix}
A-1 & \alpha A\\
A & \alpha A-1
\end{pmatrix}$, 
the block $f(a)$ contains an occurrence of $bb$, since $f(a)$ begins with the letter $a$.
\end{proof}

Thus, we have established the following result constituting the first part of Theorem \ref{th:main}:

\begin{proposition}
\label{utv:utv_Arb_1}
    Let $f$ be a primitive binary morphism such that for its matrix $M_f$ we have $\theta_2 \neq 0$. Then the word $f^{\omega}(a)$ is abelian periodic if and only if the morphism $f$ is of the form $f(a) = a(ba)^k$, $f(b) = b(ab)^m$ for some $k, \, m\in \mathbb{N}$.
\end{proposition}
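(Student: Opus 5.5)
The plan is to split according to the size of $|\theta_2|$, which is real by Lemma \ref{lemma:lemReal0}. Since $\theta_1\theta_2=\det M_f$ is an integer and $\theta_1+\theta_2$ is an integer, we have three regimes: $0<|\theta_2|<1$, $|\theta_2|=1$, and $|\theta_2|>1$.

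In the first regime, Corollary \ref{cor:cor_Irrational} shows that the letter frequencies in $f^{\omega}(a)$ are irrational. An abelian periodic word has rational frequencies, so $f^{\omega}(a)$ is not abelian periodic. Moreover, no morphism of the form $f(a)=a(ba)^k$, $f(b)=b(ab)^m$ falls in this regime. Its matrix is $\begin{pmatrix}k+1 & m\\ k & m+1\end{pmatrix}$, with eigenvalue $1$ (eigenvector $(1,-1)^T$). So in this regime both sides of the equivalence are false.

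In the third regime, $|\theta_2|>1$, Theorem \ref{theorem:BAL} says $f^{\omega}(a)$ is not $c$-balanced for any $c$. An abelian periodic word is $(r+2p)$-balanced, so the word is not abelian periodic. The special form again has $\theta_2=1$, so both sides are false.

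The main case is $|\theta_2|=1$. By Lemma \ref{lemma:lemReal}, $\theta_2=\pm1$. If $\theta_2=-1$, the matrix has the form in Corollary \ref{cor:cor_Arb_3}, and there is no abelian periodicity. We also check that this case cannot be of the special form, since there $\theta_2=+1$. If $\theta_2=1$, Corollary \ref{cor:cor_Arb_2} gives the necessity direction: abelian periodicity forces $A=B$ and the blocks $a(ba)^A$, $b(ab)^{\alpha A}$. Setting $k=A$ and $m=\alpha A$ gives the claimed form. Here I would double-check the argument of Corollary \ref{cor:cor_Arb_2}. With no $aa$ or $bb$ in $f(a)$, and $f(a)$ starting with $a$ and containing one more $a$ than $b$, the block must be exactly alternating $a(ba)^k$. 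Likewise $f(b)$ has one more $b$ than $a$, so it equals $b(ab)^m$, provided it starts with $b$. If $f(b)$ started with $a$, it would have at least as many $a$'s as $b$'s, which contradicts this count.

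For sufficiency, note that $f(a)$ and $f(b)$ both alternate, begin with their own letter, and end with their own letter. Hence every concatenation $f(x)f(y)$ alternates, and by induction the fixed point is $(ab)^{\omega}$. This word is periodic, and therefore purely abelian periodic with period $ab$. Primitivity is used only to invoke Theorem \ref{theorem:BAL} and Proposition \ref{utv:PL}.

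The step I expect to be the main obstacle is the regime $|\theta_2|>1$ and its interaction with the earlier corollaries. Theorem \ref{theorem:BAL} settles it cleanly, but one must check that the earlier corollaries implicitly cover all matrices with $\theta_2=\pm1$. This relies on the parametrization $\begin{pmatrix}A\pm1 & \alpha A\\ B & \alpha B\pm1\end{pmatrix}$ with $B\ge 1$, which holds by primitivity.
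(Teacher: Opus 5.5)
Your proposal is correct and follows essentially the same route as the paper: the case split $|\theta_2|>1$ (Theorem~\ref{theorem:BAL}), $0<|\theta_2|<1$ (Corollary~\ref{cor:cor_Irrational}) and $\theta_2=\pm1$ (Corollaries~\ref{cor:cor_Arb_2} and~\ref{cor:cor_Arb_3}), with sufficiency from $f^{\omega}(a)=(ab)^{\omega}$. One small slip: $(1,-1)^T$ is not a right eigenvector of $\begin{pmatrix}k+1 & m\\ k & m+1\end{pmatrix}$ unless $k=m$; the right eigenvector for eigenvalue $1$ is $(m,-k)^T$, and $(1,-1)$ is only a left eigenvector. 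Your conclusion that this matrix has eigenvalue $1$, hence $\theta_2=1$, is still correct.
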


\begin{proof}
    If the morphism $f$ is of the form $f(a) = a(ba)^k$, $f(b) = b(ab)^m$, the abelian periodicity of the word $f^{\omega}(a)$ is obvious: the word is periodic in the classic sence, so it is also periodic in the abelian sence.

    It remains to prove that is not of this form, the abelian periodicity does not hold. Consider the eigenvalue $\theta_2$ of the matrix $M_f$.
    
    If $|\theta_2| > 1$, then the word $f^{\omega}(a)$ is not $c$-balanced for any $c$ (see Theorem \ref{theorem:BAL}), and hence it is not abelian periodic.
    
    If $0 < |\theta_2| < 1$, then, by Corollary \ref{cor:cor_Irrational}, the letter frequencies in the word $f^{\omega}(a)$ are irrational, and hence $f^{\omega}(a)$ is not abelian periodic.
    
    If $|\theta_2| = 1$, then, by Lemma \ref{lemma:lemReal}, we have $\theta_2 = \pm 1$. Applying Corollaries \ref{cor:cor_Arb_2} and \ref{cor:cor_Arb_3}, we obtain the desired result.
\end{proof}

    \subsection{Corollary: a criterion of the abelian periodicity for uniform binary morphisms}

The following theorem gives a characterization of the abelian periodicity of words generated by uniform binary morphisms. 

 \begin{theorem}
\label{theorem:thm1}
    Let $f$ be a uniform morphism prolongable on $a$ over the binary alphabet $\Sigma = \{a,\,b\}$. Then the word $f^{\omega}(a)$ is abelian periodic if and only if one of the following conditions holds:
    \begin{enumerate}
    
        \item The eigenvalue $\theta_2 = 0$ for the matrix $M_f$, i.e., $M_f = \begin{pmatrix}
A & A\\
B & B
\end{pmatrix};$

\item The morphism $f$ is of the form $f(a) = a(ba)^k$ and $f(b) = b(ab)^k$ for some $k\in \mathbb{N}$.

    \end{enumerate}
\end{theorem}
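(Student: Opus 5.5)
The proof splits according to the eigenvalue $\theta_2$ of $M_f$. Let $d=|f(a)|=|f(b)|$. Because $f$ is uniform, the column sums of $M_f$ both equal $d$, so $(1,1)$ is a left eigenvector for $\theta_1=d$. Hence $\theta_2=\operatorname{tr}M_f-d=|f(a)|_a+|f(b)|_b-d=|f(a)|_a-|f(b)|_a$, which is an integer. We then treat $\theta_2=0$ and $\theta_2\neq 0$ separately.

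\emph{Case $\theta_2=0$.} Then $|f(a)|_a=|f(b)|_a=A$ and $|f(b)|_b=|f(a)|_b=B$, so $M_f=\begin{pmatrix} A&A\\ B&B\end{pmatrix}$. In this case $f(a)\sim_{ab} f(b)$. Since $f^{\omega}(a)=f(f^{\omega}(a))$ is a concatenation of blocks $f(a)$ and $f(b)$, it splits into consecutive factors of length $d$ that are pairwise abelian equivalent. So the word is purely abelian periodic with period $d$, and condition~1 is sufficient. This argument does not need primitivity; if $f$ is not primitive it still goes through.

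\emph{Case $\theta_2\neq0$.} If $f$ is primitive, Proposition~\ref{utv:utv_Arb_1} says that $f^{\omega}(a)$ is abelian periodic if and only if $f(a)=a(ba)^k$ and $f(b)=b(ab)^m$. Uniformity forces $2k+1=2m+1$, so $k=m$, which is condition~2. Conversely, under condition~2 the word is $(ab)^\omega$, which is periodic. For condition~2 we have $\theta_2=1\neq0$, so the two conditions are mutually exclusive but together cover all abelian periodic cases.

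\emph{Non-primitive uniform $f$.} This is the step I expect to need the most care. Since $f$ is prolongable on $a$, we have $|f(a)|_a\ge1$. Non-primitivity then means either $f(a)=a^d$ or $f(b)=b^d$.
\begin{itemize}
\item If $f(a)=a^d$, then $f^\omega(a)=a^\omega$. This is abelian periodic, and here $|f(a)|_b=0$. If $f(b)=b^d$ as well, the matrix is $dI$; if not, the matrix is $\begin{pmatrix} d&x\\0&d-x\end{pmatrix}$ with $\theta_2=d-x$, and we need $\theta_2=0$, i.e. $f(b)=a^d$. So the natural fix is to read condition~1 as also covering this degenerate case, or to assume primitivity as Theorem~\ref{th:main} implicitly does. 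I would check that $a^\omega$ is also covered by condition~2 with $k=0$ only if $f(b)=b$, and otherwise note the exception explicitly. My expected resolution is that the theorem tacitly assumes $f^\omega(a)$ contains $b$.
\item If $f(b)=b^d$ and $f(a)$ contains $b$, then $|f^n(a)|_b$ grows like $nd^{n-1}$, so $b$ has frequency $1$ while $a$ still occurs infinitely often with gaps growing without bound. Such a word cannot be abelian periodic: any period $p$ must eventually contain no $a$, yet $a$ keeps recurring. This is consistent with the theorem, since here $\theta_2=|f(a)|_a\ge1$ and condition~2 fails because $f(b)\neq b(ab)^k$ for $d>1$.
\end{itemize}
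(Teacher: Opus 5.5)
\paragraph{Primitive case and the first non-primitive bullet.} For primitive $f$ your argument is correct and essentially matches the paper's. The paper reruns the ingredients of Proposition~\ref{utv:utv_Arb_1} in the uniform setting: integrality of $\theta_2$, Theorem~\ref{theorem:BAL} for $|\theta_2|>1$, and Corollaries~\ref{cor:cor_Arb_2} and~\ref{cor:cor_Arb_3} for $\theta_2=\pm1$. You instead cite the proposition directly and use uniformity to force $k=m$.

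The paper does not treat non-primitive $f$ at all. Its proof silently assumes primitivity: Theorem~\ref{theorem:BAL} requires it, and the lemmas behind Corollary~\ref{cor:cor_Arb_2} need both letter frequencies to be positive. Your first bullet correctly exhibits a genuine exception. For $f(a)=aa$, $f(b)=bb$ the word $a^\omega$ is periodic, but $\theta_2=2$ and condition~2 fails. So the statement as written is false without a primitivity hypothesis.

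\paragraph{The gap: your second bullet.} This bullet asserts something false. Put $p=|f(a)|_a$ with $f(b)=b^d$. Then $|f^n(a)|_a=p^n$ and $|f^n(a)|_b=d^n-p^n$, which is not of order $nd^{n-1}$. Moreover, $a$ occurs infinitely often only when $p\ge 2$.

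If $p=1$, then $f(a)=ab^{d-1}$ and $f^\omega(a)=ab^\omega$. This word is eventually periodic, hence abelian periodic with preperiod $1$, since the paper's notion allows a preperiod. A concrete instance is $f(a)=ab$, $f(b)=bb$. Here the eigenvalues of $M_f$ are $d$ and $1$, so $\theta_2=1\neq 0$, and condition~2 fails. (The paper's own proof would wrongly reject this word via Corollary~\ref{cor:cor_Arb_2} with $A=0\neq B=1$.) This is a second counterexample to the statement.

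Because this word contains $b$, your suggested repair (``the theorem tacitly assumes $f^\omega(a)$ contains $b$'') does not rescue the theorem. The right fix is to assume $f$ primitive, as in Theorem~\ref{th:main}. Under that hypothesis your first two cases already form a complete proof.

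The non-primitive discussion should then be dropped or replaced by the correct description. For non-primitive uniform $f$, the word is abelian periodic iff $|f(a)|_b=0$ or $|f(a)|_a=1$, i.e.\ iff it is eventually periodic. Your zero-frequency argument (frequency of $a$ is $0$ while $a$ recurs) is valid exactly when $p\ge 2$.
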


\begin{proof}
    The abelian periodicity of $f^{\omega}(a)$ is obvious in the listed cases. Indeed, in Case 1 the blocks $f(a)$ and $f(b)$ are abelian equivalent, so $|f(a)| = |f(b)|$ is an abelian period of the word $f^{\omega}(a)$. In case 2 the word $f^{\omega}(a)$ is equal to $ (ab)^{\omega}$, so it is (abelian) periodic with period $2$.

    It remains to prove that in all other cases there is no abelian periodicity. Consider the second eigenvalue $\theta_2$ of the morphism $f$. Note that for binary uniform morphisms we have $\theta_2 \in \mathbb{Z}.$
    
    If $|\theta_2| > 1$, then the word $f^{\omega}(a)$ is not $c$-balanced for any $c$ (see Theorem \ref{theorem:BAL}), and so it is not abelian periodic.
    
    If $|\theta_2| < 1$, then in the binary case we have $\theta_2 = 0$. 
    
    If $\theta_2 = 1$, then the matrix is of the form          $\begin{pmatrix}
A+1 & A\\
B & B+1
\end{pmatrix}$. By Corollary \ref{cor:cor_Arb_2}, the word $f^{\omega}(a)$ is abelian periodic if and only if $A = B$ and $f(a) = a(ba)^A$, $f(b) = b(ab)^A$.

Finally, suppose that $\theta_2 = -1$, i.e., we have  $M_f=\begin{pmatrix}
A-1 & A\\
B & B-1
\end{pmatrix}$. By Corollary \ref{cor:cor_Arb_3}, in this case the word $f^{\omega}(a)$ is not abelian periodic.
\end{proof}

\section{Binary morphisms: the case $\theta_2 = 0$}
\label{section:secArbitrary2}

For non-uniform morphisms the case $\theta_2 = 0$ is non-trivial: among binary morhisms with $\theta_2 = 0$ there exist morhisms generating abelian periodic words, as well as morphisms generating non-abelian aperiodic words. We start with an example of a morphismgenerating a word which is not eventually abelian periodic.

\subsection{Example: morphism $a \mapsto ab$, $b \mapsto bbaa$}
\label{subsec:ab_bbaa}

In this subsection, we consider the following morphism
 $f: \{a,\,b\}^* \rightarrow \{a,\,b\}^*$:
$$\begin{cases}
    f(a) = ab, \\
    f(b) = bbaa.
\end{cases}$$

We will show that the word generated by this morphism is not abelian periodic, and the proof is surprisingly involved for a word having fairly clear structure.

\begin{proposition}
\label{pr:abel1}
    A word $f^{\omega}(a)$ generated by the following morphism $$\begin{cases}
        f(a) = ab\\
        f(b) = bbaa
    \end{cases}$$
is not (eventually) abelian periodic.
\end{proposition}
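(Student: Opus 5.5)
The plan is to translate abelian periodicity into a statement about a bounded "height" function and then rule it out using the self-similar block structure of $s=f^{\omega}(a)$. For a finite word $w$ put $d(w)=|w|_a-|w|_b$. Both blocks $f(a)=ab$ and $f(b)=bbaa$ satisfy $d=0$. The matrix $\begin{pmatrix}1&2\\1&2\end{pmatrix}$ has eigenvalues $3,0$, so by Proposition~\ref{utv:PL} both letter frequencies equal $\frac12$. Suppose $s$ is abelian periodic with period $p$ and preperiod $r$. Each period $v$ must then have $d(v)=0$, since otherwise the frequencies would not be $\frac12$; in particular $p$ is even. Hence the sequence $h(n)=d(\mathrm{pref}_n s)$ must be constant along the progression $n=r+kp$, $k\ge 0$. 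The goal is to show that every such progression meets at least two values of $h$.

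The first step is to compute $h$ from the factorization $s=f(s_0)f(s_1)\cdots$. Let $N_m=|f(\mathrm{pref}_m s)|=2m+2|\mathrm{pref}_m s|_b$ denote the block boundaries, all of which are even. We have $h(N_m)=0$. Inside an $a$-block, $h=1$ at offset $1$. Inside a $b$-block, $h=-1,-2,-1$ at offsets $1,2,3$. Since $p$ is even, all terms $r+kp$ have the same parity, and the two cases are as follows.
\begin{itemize}
\item In the odd case, $h(n)=1$ if $n$ falls in an $a$-block and $h(n)=-1$ if it falls in a $b$-block. Constancy therefore says that the letter $s_{m(n)}$ whose block contains $n$ is the same for all $n=r+kp$.
\item In the even case, $h(n)\in\{0,-2\}$, and $h(n)=-2$ exactly when $n$ is the midpoint $N_m+2$ of a $b$-block. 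Constancy means either that all terms are midpoints of $b$-blocks, or that none are.
\end{itemize}

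The second step is to produce a contradiction by exhibiting, for the given $p$ and $r$, two terms of the progression landing in different situations. I would use the occurrences in $s$ of the long factors $f^j(a)$ and $f^j(b)$. These occur with bounded gaps, since $f$ is primitive ($M_f^2>0$), and their lengths are $2\cdot 3^{j-1}$ and $4\cdot 3^{j-1}$. Their $h$-profiles are known recursively: the profile of $f^{j}(x)$ is obtained by inflating the profile of $f^{j-1}(x)$, replacing each $a$ by the pattern $0,1$ and each $b$ by the pattern $0,-1,-2,-1$.

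Choose $j$ with $3^{j-1}$ much larger than $p$. Inside one copy of $f^j(b)$ the progression has many terms. I would then show that among any $p/2$ consecutive blocks of $s$ (equivalently, any window of length comparable to $p$), the residues of the block boundaries $N_m \bmod p$ cannot all avoid, or all hit, the prescribed residue class. The reason is that consecutive boundaries differ by $2$ or $4$, according to the underlying letter. A window where the underlying letters are all $a$ would make the boundaries run through every even residue mod $p$. Since $aa$ does not occur in $s$, I would instead use a longer stretch and exploit the fact that the gap sequence $2,4$ of boundaries is itself the coding of $s$.

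A cleaner way to organize this is an inductive desubstitution. Suppose the terms $r+kp$ all lie in blocks of the same type at the same offset. Then the corresponding block indices $m$, which are related to $n$ by $n=N_m+\text{offset}=3m-h'(m)+\text{offset}$ with $h'=h$ on prefixes of $s$, form a set along which $h$ again takes a controlled set of values. Iterating this should force a progression-like set with step about $p/3$, and then eventually a contradiction with the small cases $p\le 4$, which are checked directly on a prefix of $s$.

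The main obstacle is exactly this desubstitution step. The map $n\mapsto m(n)$ sends an arithmetic progression of step $p$ to a set that is only approximately a progression of step $p/3$: the error is the bounded quantity $h$, which is precisely the quantity we are trying to control. Moreover, the Parikh map of $f$ is not injective, so abelian periodicity does not simply pull back from $s=f(s)$. I would try first to keep track of the pair ($m \bmod$ something, $h(m)$) as a finite-state quantity, turning the problem into showing that a certain finite automaton reading $s$ along the progression is not eventually constant. Failing that, I would fall back on direct case analysis using explicit long factors such as $f^j(b)f^j(a)$ in which the two residues visibly disagree.
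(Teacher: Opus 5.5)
Your first step is correct and matches the paper's geometric reformulation. A period must contain equally many $a$'s and $b$'s, so it is even, and your description of the height $h(n)$ by block type and offset is exactly Claims~\ref{claim:lem_pos} and~\ref{claim:lem_pos2}. The second step, however, carries the entire difficulty and is not carried out: you list several strategies and yourself name the desubstitution as the obstacle. As described, it fails. When $3\nmid p$, the relation $3m-h(m)=r+kp-\text{offset}$ does not make the block indices $m$ a progression of step about $p/3$. For each value of $h(m)$ it only confines $m$ to a residue class modulo the \emph{same} $p$, intersected with a level set of $h$. So the step never decreases, and there is no induction down to $p\le 4$. Your assertion about residues of $N_m \bmod p$ in windows of about $p/2$ blocks just restates what has to be proved.

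The missing idea is to make the desubstitution exact by refining the alphabet. Record at each position $n$ the pair (type of the block containing $n$, offset of $n$ in that block); this is exactly the information your case analysis uses, and it takes six values. Lemma~\ref{lemma:l1} shows that the resulting word is $h^{\omega}(1)$ for the $3$-uniform morphism $h$ ($1\mapsto 123$, \dots, $6\mapsto 212$), so that $s=\tau(h^{\omega}(1))$ even though $f$ is not uniform. Hence the letter at position $n$ is computed by a $3$-automaton reading $[n]_3$, with no error term. Because $h$ is bijective, every digit string acts as a permutation of the states. The paper combines this with two further ingredients. One is the nontrivial fact (Proposition~\ref{prop:out}) that no infinite arithmetic progression is monochromatic in a fixed point of a primitive aperiodic bijective uniform morphism. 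The other is the use of concatenations such as $[dk]_3[r]_3$, which stay inside the progression. Together these give Lemmas~\ref{lemma:lem} and~\ref{lemma:lemm}: along every progression with even difference, all three letters of $\{1,3,5\}$ or all three of $\{2,4,6\}$ occur. Constant height permits only $\{1,3\}$ or $\{5\}$ in the even case, and only $\{2\}$ or $\{4,6\}$ in the odd case, which gives the contradiction. Without this lift, or some substitute for it together with the progression result, your proposal does not prove the proposition.
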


For the proof of this proposition we will need a series of auxiliary statements. First we show that this word is so-called hidden automatic \cite{HidAut}, i.e., although it is generated by a non-uniform morphism, it also admits a definition as an automatic word. In fact, this is implicitly proved in \cite{DEK1}; however, we provide a proof here, since we will make use of the construction.

Consider the following morphism
 $h: \{1, 2, \dots, 6\} \rightarrow  \{1, 2, \dots, 6\}$:         
 
$$\begin{cases}
            h(1) = 123\\
            h(2) =  456\\
            h(3) =  345\\
            h(4) =  634\\
            h(5) =  561\\
            h(6) =  212.
        \end{cases}$$

    \begin{lemma}
    \label{lemma:l1}
        We have $f^{\omega}(a)$ = $\tau(h^{\omega}(1))$, where  $\tau: \{1, 2, \dots, 6\} \rightarrow \{a, b\}$ is a coding defined as follows:     
        $$\begin{cases}
            \tau(1) = \tau(5) = \tau(6) = a\\
            \tau(2) = \tau(3) = \tau(4) = b.
        \end{cases}$$
        
    \end{lemma}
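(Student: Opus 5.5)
The plan is to relate the two morphisms through an auxiliary morphism $g:\{a,b\}^*\to\{1,\dots,6\}^*$ that intertwines $f$ and $h$, rather than comparing $f^{\omega}(a)$ and $\tau(h^{\omega}(1))$ position by position. Looking at the beginning of $h^{\omega}(1)=123456345634561212\cdots$, one sees that it naturally splits into the blocks $12$ and $3456$. This suggests defining
$$g(a)=12,\qquad g(b)=3456.$$

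The proof would then consist of three verifications, each a finite check on letters.

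First, $\tau\circ g=f$ as morphisms. It suffices to check this on letters: $\tau(12)=ab=f(a)$ and $\tau(3456)=bbaa=f(b)$.

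Second, $h\circ g=g\circ f$. Again it suffices to check on letters:
$$h(g(a))=h(12)=123\,456=g(a)g(b)=g(ab)=g(f(a)),$$
$$h(g(b))=h(3456)=345\,634\,561\,212=g(b)g(b)g(a)g(a)=g(bbaa)=g(f(b)).$$
Both identities extend to infinite words, since all morphisms involved are nonerasing.

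Third, we identify $g(f^{\omega}(a))$ as the fixed point of $h$. Using $f(f^{\omega}(a))=f^{\omega}(a)$, we get
$$h\bigl(g(f^{\omega}(a))\bigr)=g\bigl(f(f^{\omega}(a))\bigr)=g(f^{\omega}(a)).$$
So the infinite word $g(f^{\omega}(a))$ is a fixed point of $h$, and it begins with $g(a)=12$, hence with the letter $1$. Since $h$ is prolongable on $1$ (as $h(1)=123$), a fixed point of $h$ beginning with $1$ satisfies $w=h^k(w)$, so $w$ begins with $h^k(1)$ for every $k$. Therefore $w=h^{\omega}(1)$, which gives $g(f^{\omega}(a))=h^{\omega}(1)$.

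Applying $\tau$ and using the first identity, we conclude
$$\tau(h^{\omega}(1))=\tau\bigl(g(f^{\omega}(a))\bigr)=f(f^{\omega}(a))=f^{\omega}(a).$$

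The only nonroutine step is finding the intertwining map $g$, which is guessed from the prefix of $h^{\omega}(1)$ as above. Once $g$ is fixed, every step is a direct computation on letters. The uniqueness claim in the third step is standard for prolongable morphisms and needs only the one-line argument given there.
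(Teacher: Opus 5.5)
Your proof is correct and is essentially the paper's argument. The paper's correspondence $f(a)\sim 12$, $f(b)\sim 3456$ is exactly your morphism $g$, and its base-case computations are your letter identities $h\circ g=g\circ f$; the only difference is that the paper iterates these by induction on $m$ (obtaining $f^m(ab)\sim h^m(12)$ and $f^m(bbaa)\sim h^m(3456)$) and passes to the limit, whereas you conclude directly by observing that $g(f^{\omega}(a))$ is a fixed point of $h$ beginning with $1$.
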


    \begin{proof}
        The word $f^{\omega}(a)$ is a fixed point of the morphism $f$, so we can consider a natural factorization of $f^{\omega}(a) = f(f^{\omega}(a))$ into blocks $f(a)$ and $f(b)$. We now associate the block $f(a) = ab$ with the word $12$, and the block $f(b) = bbaa$ with the word $3456$; we denote this correspondence by $\sim$, i.e., $f(a) \sim 12$, $f(b) \sim 3456$. Thus, any word that is a concatenation of blocks, i.e. a word of the form $f(w)$, is associated with some word on the alphabet $\{1, 2, \dots, 6\}$ (note that the word of the form $f(w)$ is uniquely divided into blocks). Note also that if $f(w) \sim s$, $s\in \{1, 2, \dots, 6\}^*$, then $f(w) = \tau(s)$. In the same way, the word $f^{\omega}(a)$ is associated with an infinite word on the alphabet $\{1, 2, \dots, 6\}$. It remains to prove that the latter word equals $h^{\omega}(1)$.

        We now prove that for each $m\in \mathbb{N}$ we have $f^{m}(ab) \sim h^{m}(12)$ and $f^{m}(bbaa) \sim h^{m}(3456)$.
        Then $f^{\omega}(a) = f^{\omega}(ab) = \tau(h^{\omega}(12)) = \tau(h^{\omega}(1))$. Here, $f^{\omega}(ab)$ is the word generated by $f$ starting with the word $ab$ in the same sense as $f^{\omega}(a)$ is generated by $f$ starting with the letter $a$, since $f^m(ab)$ is a prefix of $f^{m+1}(ab)$ for each integer $m$. Note that $f(a) = ab$, so, clearly, $f^{\omega}(a) = f^{\omega}(ab)$); the same applies to $h^{\omega}(1)$. 
        The proof is by induction on $m$.
        
        For $m = 1$ the statement is verified directly:\begin{align*}  f(ab) = abbbaa,\; & h(12) = 123456,\\  f(bbaa) = bbaabbaaabab,\; & h(3456) = 345634561212.        \end{align*} Now suppose that $m>1$; then $$h^m(12) = h^{m-1}(h(12)) = h^{m-1}(123456) = h^{m-1}(12)h^{m-1}(3456) \sim $$  $$\sim f^{m-1}(ab)f^{m-1}(bbaa) = f^m(ab).$$ Similarly, we have $$h^m(3456) = h^{m-1}(3456)h^{m-1}(3456)h^{m-1}(12)h^{m-1}(12) \sim $$ $$\sim f^{m-1}(bbaa)f^{m-1}(bbaa)f^{m-1}(ab)f^{m-1}(ab) = f^m(bbaa).$$
        \end{proof}


        The following classical result, known as Conham's theorem, provides an equivalent definition of automatic words via uniform morphisms:

        \begin{theorem}[\cite{COBH}, see also Theorem 6.3.2 in \cite{AS}]
            \label{theorem:Cobh}
            An infinite word $v$ on an alphabet  $\Gamma$ is $k$-automatic if and only if $v = \tau(g^{\omega}(a))$, where $g: \Sigma^* \rightarrow \Sigma^*$ is a uniform morphism with block length equal to $k$, and $\tau: \Sigma \rightarrow \Gamma$ is a coding.
        \end{theorem}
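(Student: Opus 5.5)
Both directions rest on one identity relating uniform morphisms to base-$k$ expansions. If $g$ is uniform with block length $k\ge 2$ and $w=w_0w_1w_2\cdots$ is any (finite or infinite) word, then for every $n$ and every digit $d\in\{0,\dots,k-1\}$ the letter at position $kn+d$ of $g(w)$ is the $d$-th letter of the block $g(w_n)$. The plan is to turn this into a dictionary between automata and uniform morphisms:
\begin{itemize}
\item a state $q$ of the automaton corresponds to a letter of $\Sigma$;
\item the transition $\delta(q,d)$ corresponds to the $d$-th letter $g(q)_d$;
\item reading one more digit $d$ (most significant digit first, as in the definition via $[n]_k$) corresponds to passing from position $n$ to position $kn+d$.
\end{itemize}
Once the dictionary is set up, each direction reduces to an induction on the length of $[n]_k$.

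\emph{From morphisms to automata.} Let $v=\tau(g^{\omega}(a))$ with $g$ uniform of block length $k$ and prolongable on $a$. Take the automaton with state set $\Sigma$, initial state $a$, transitions $\delta(c,d)=g(c)_d$, and output map $\tau$. We prove by induction on $|[n]_k|$ that $\delta(a,[n]_k)$ equals the $n$-th letter $x_n$ of $x=g^{\omega}(a)$.
\begin{itemize}
\item Base case: $n=0$, where $[0]_k$ is empty (or the single digit $0$, depending on convention). Here $\delta(a,0)=g(a)_0=a=x_0$ by prolongability.
\item Inductive step: if $m=kn+d$ with $n\ge 1$, then $[m]_k=[n]_kd$. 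Hence $\delta(a,[m]_k)=\delta(x_n,d)=g(x_n)_d$. Since $x=g(x)$, the identity above gives $g(x_n)_d=x_{kn+d}=x_m$.
\end{itemize}
Applying $\tau$ shows that $v$ is $k$-automatic.

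\emph{From automata to morphisms.} Let $\mathcal A=(Q,\{0,\dots,k-1\},\delta,q_0)$ and $\tau$ define $v$. Set $g(q)=\delta(q,0)\delta(q,1)\cdots\delta(q,k-1)$, which is $k$-uniform. The step needing care is prolongability: we need $\delta(q_0,0)=q_0$, which fails in general. Since $[n]_k$ has no leading zeros, we can fix this by adding a fresh initial state $q_0'$ with
\begin{itemize}
\item $\delta(q_0',0)=q_0'$;
\item $\delta(q_0',d)=\delta(q_0,d)$ for $d\neq 0$;
\item $\tau(q_0')=\tau(q_0)$.
\end{itemize}
This does not change the value computed on any canonical representation, including that of $n=0$. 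We also need $k\ge 2$, so that $g(q_0')$ has length at least $2$ and begins with $q_0'$. With these modifications $g$ is prolongable on $q_0'$, and the same induction as above, read in reverse, shows that position $n$ of $g^{\omega}(q_0')$ is $\delta(q_0',[n]_k)$. Applying $\tau$ gives $v=\tau(g^{\omega}(q_0'))$.

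I expect the main obstacle to be the bookkeeping around the base case and leading zeros, namely the treatment of $n=0$ and of the self-loop on $q_0$. This is exactly where the order in which digits are read (most significant digit first) must match the recursive structure $x_{kn+d}=g(x_n)_d$ of the fixed point. The rest is a routine verification.
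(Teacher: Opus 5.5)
Your proof is correct and is essentially the standard argument of Allouche--Shallit (Theorem 6.3.2); the paper gives no proof of its own and only cites that source and Cobham. The ingredients match: states become letters, $g(q)=\delta(q,0)\cdots\delta(q,k-1)$, and a fresh initial state with a $0$-loop forces prolongability. Two small bookkeeping fixes are needed. First, setting $\tau(q_0')=\tau(q_0)$ preserves the value at $n=0$ only under the convention $[0]_k=\varepsilon$; taking $\tau(q_0')=v_0$ works under either convention. Second, since your inductive step assumes $n\ge 1$, the base case should cover all one-digit $m\in\{0,\dots,k-1\}$. For these, $\delta(a,m)=g(a)_m=x_m$ because $g(a)$ is a prefix of $x$.
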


         Note that, given a uniform morphism $g$, the corresponding automaton is built algorithmically, and vise versa.

        Lemma \ref{lemma:l1} and Theorem $\ref{theorem:Cobh}$ imply the following:

    \begin{lemma}
        The word $f^{\omega}(a)$ is 3-automatic.
    \end{lemma}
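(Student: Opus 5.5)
The claim follows by combining Lemma \ref{lemma:l1} with Cobham's theorem (Theorem \ref{theorem:Cobh}), so the plan is just to check that the hypotheses of the latter are met. First I take the morphism $h$ on the alphabet $\Sigma=\{1,2,\dots,6\}$ defined above and note that every block $h(i)$ has length $3$, so $h$ is uniform with block length $k=3$. Second, $h(1)=123$ begins with $1$ and has length at least $2$, so $h$ is prolongable on $1$ and the fixed point $h^{\omega}(1)$ is well defined. Third, the map $\tau$ of Lemma \ref{lemma:l1} sends each letter of $\Sigma$ to a single letter of $\{a,b\}$, so it is a coding in the sense of the preliminaries.

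With these checks done, Lemma \ref{lemma:l1} gives $f^{\omega}(a)=\tau(h^{\omega}(1))$. This expresses $f^{\omega}(a)$ as the image under a coding of a fixed point of a $3$-uniform morphism. That is exactly the form required by the "if" direction of Theorem \ref{theorem:Cobh} with $k=3$, $g=h$ and starting letter $1$, and the theorem then yields that $f^{\omega}(a)$ is $3$-automatic.

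There is essentially no obstacle here, since all the real work, the identification of $f^{\omega}(a)$ with $\tau(h^{\omega}(1))$, was already done in Lemma \ref{lemma:l1}. The only point needing care is matching conventions: Theorem \ref{theorem:Cobh} is stated for a fixed point $g^{\omega}(a)$ of $g$ starting from a letter, and here that letter is $1$ rather than the letter $a$ of the binary alphabet. If desired, I would also remark that the corresponding automaton can be built explicitly from $h$, with states $\{1,\dots,6\}$, initial state $1$, transitions $\delta(i,j)=h(i)_j$ for $j\in\{0,1,2\}$, and output map $\tau$, since this construction will be used later.
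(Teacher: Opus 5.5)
Your proposal is correct and takes the same route as the paper, which derives the lemma directly from Lemma~\ref{lemma:l1} together with Cobham's theorem (Theorem~\ref{theorem:Cobh}); your explicit checks that $h$ is $3$-uniform and prolongable on $1$, and that $\tau$ is a coding, spell out hypotheses the paper leaves implicit. Your description of the automaton, with initial state $1$ and $\delta(i,j)=h(i)_j$, also agrees with the one drawn in Fig.~\ref{fig:mpr}.
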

    
   The corresponding automation is illustrated in Fig.  \ref{fig:mpr}.

   \begin{figure}[h]
    \centering

\resizebox{\textwidth}{!}{
\begin{tikzpicture}
\begin{scope}[every node/.style={circle,thick,draw,minimum size=1cm,font=\LARGE,inner sep=2pt}]
    \node (A) at (-5,6) {1};
    \node (F) at (-5,0) {6};
    \node (B) at (0,8) {2};
    \node (D) at (5,0) {4};
    \node (E) at (0,-2) {5};
    \node (C) at (5,6) {3};
\end{scope}

\begin{scope}[>={Stealth[black]},
              every node/.style={circle, font=\Large, inner sep=2.5pt},
              every edge/.style={draw=black,very thick}]
    
    \path [->] (A) edge node[above, pos=0.4] {$1$} (B);
    \path [->] (A) edge node[above, pos=0.8] {$2$} (C);
    \path [->] (B) edge node[below left, pos=0.4, xshift=2pt] {$0$} (D);
    \path [->] (B) edge node[right, pos=0.4, xshift=-4pt] {$1$} (E);
    \path [->] (B) edge node[below right, pos=0.4, xshift=-3pt] {$2$} (F);
    
    \path [->] (C) edge node[right, pos=0.5, xshift=-4pt] {$1$} (D);
    \path [->] (C) edge node[below right, pos=0.7] {$2$} (E);

    \path [->] (D) edge[bend left = 90, looseness = 1.5] node[below, pos=0.5] {$0$} (F);
    \path [->] (D) edge[bend right = 45] node[right, pos=0.5, xshift=-4pt] {$1$} (C);

    \path [->] (E) edge node[below, pos=0.5] {$1$} (F);
    \path [->] (E) edge node[below left, pos=0.4] {$2$} (A);

    \path [->] (F) edge node[left, pos=0.5, xshift=2pt] {$1$} (A);

    \path [->] (F) edge[bend left = 90, looseness = 1.9] node[left, pos=0.4] {$0, 2$} (B);
    
    \path [->] (A) edge[out=180, in=90, looseness=8] node[above, pos=0.5] {$0$} (A);
    \path [->] (C) edge[out=90, in=0, looseness=8] node[above, pos=0.5] {$0$} (C);
    \path [->] (D) edge[out=15, in=-75, looseness=8] node[below, pos=0.5, xshift=5pt, yshift=5pt] {$2$} (D);
    \path [->] (E) edge[out=-45, in=-135, looseness=8] node[below, pos=0.5] {$0$} (E);
\end{scope}
\end{tikzpicture}
}

   \caption{Automaton corresponding to the morphism $h$.}
    \label{fig:mpr}
    \end{figure}

    \begin{figure}[h]
    \centering
    \includegraphics[width=1\linewidth]{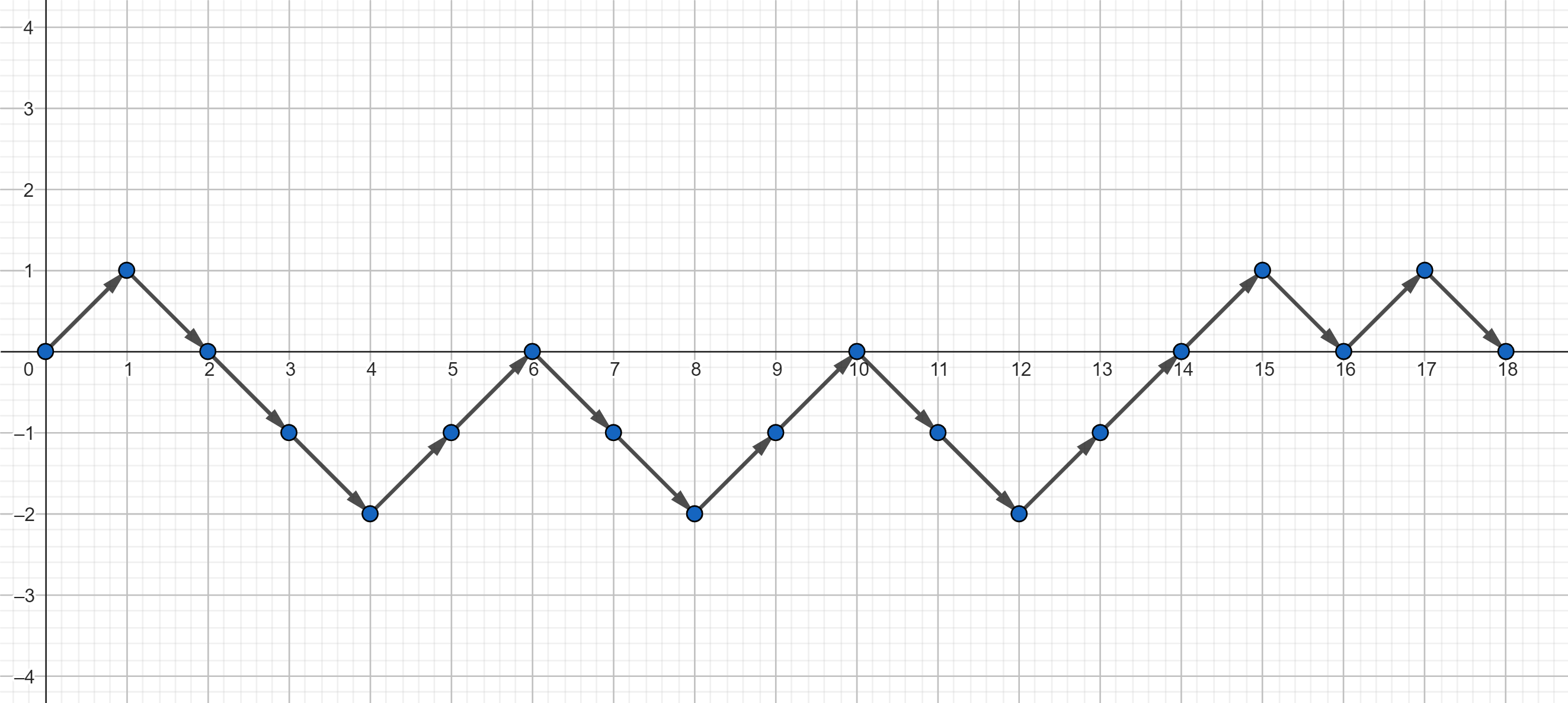}
    \caption{Illustration for the prefix $abbbaabbaabbaaabab$ of $f^{\omega}(a)$.}
    \label{fig:mpr1}
    \end{figure}

    \begin{figure}[h]
    \centering
    \includegraphics[width=1\linewidth]{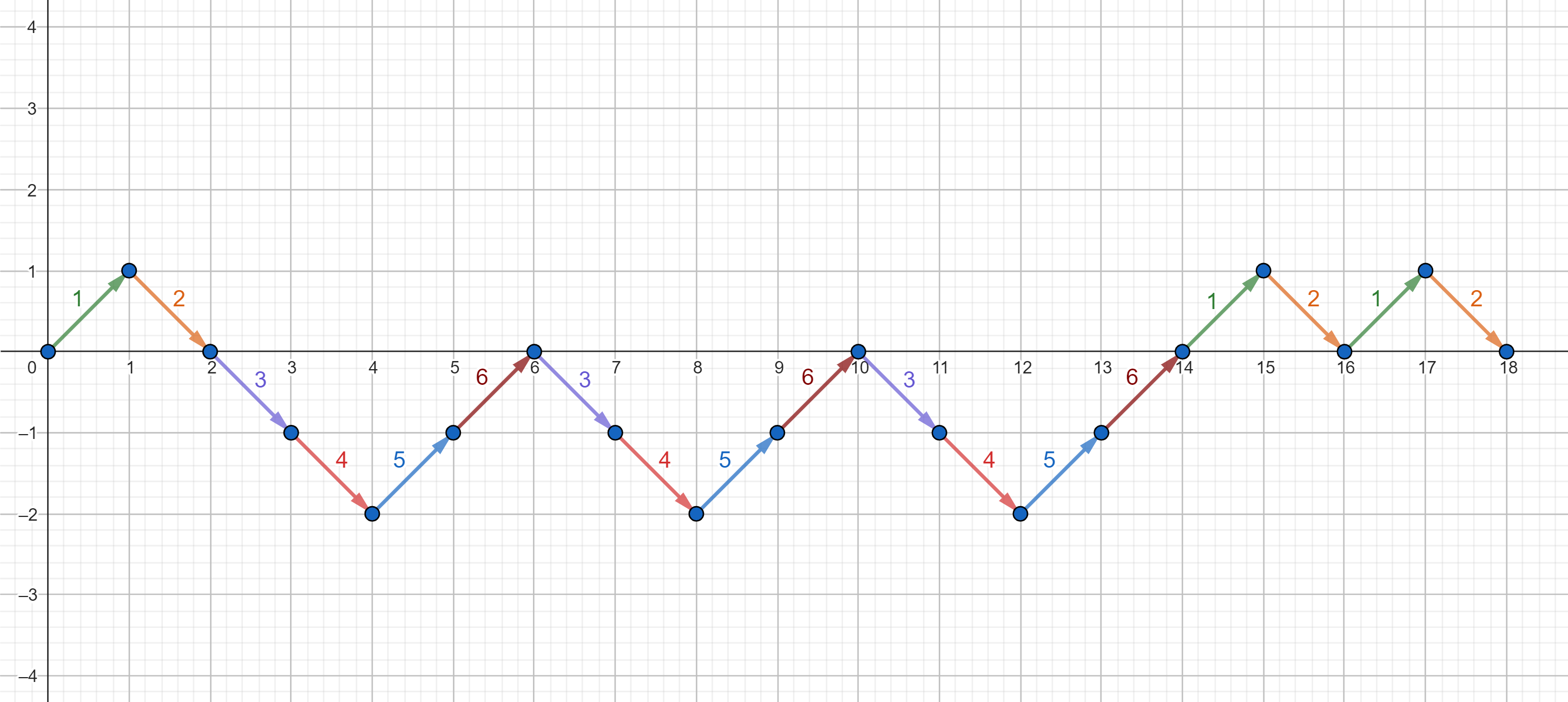}
    \caption{Arrows are marked with corresponding letters of the word $h^{\omega}(1)$}
    \label{fig:mpr2}
    \end{figure}




    We will make use of the following recent result:

    \begin{proposition}[Proposition 7 in \cite{ARI}]
    \label{prop:out}
    Let $g$ be a primitive uniform bijective morphism generating an aperiodic word. Then for each its fixed point $v$ the following holds: there is no infinite arithmetic progression such that at all positions from this progression $v$ has the same letter.
    \end{proposition}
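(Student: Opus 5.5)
The plan is to prove this (Proposition~7 of \cite{ARI}) directly, using the fact that a bijective uniform morphism acts on positions ``digit by digit'' through permutations. Let $d$ be the block length of $g$. For $j\in\{0,\dots,d-1\}$ let $\sigma_j$ be the permutation $x\mapsto g(x)_j$ of $\Sigma$. Since $v=g(v)$, we have $v_{dn+j}=\sigma_j(v_n)$ for all $n$ and $j$. Iterating this, for every $m$, every $x\ge 0$ and every $0\le y<d^m$ with base-$d$ digits $y_0,\dots,y_{m-1}$ (least significant first),
\[
v_{x d^m+y}=\sigma_{y_0}\circ\sigma_{y_1}\circ\cdots\circ\sigma_{y_{m-1}}(v_x).
\]
So the low digits of a position act as the outermost permutations. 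Suppose, for a contradiction, that $v_{r+kp}=c$ for all $k\ge 0$, with $p\ge 1$.

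\emph{Step 1: reduce to $\gcd(p,d)=1$.} Let $e=\gcd(p,d)>1$. Fix some $j\equiv r \pmod e$. Then the values of $k$ with $r+kp\equiv j \pmod d$ form an arithmetic progression with difference $d/e$. For such $k$ we get $v_{(r+kp-j)/d}=\sigma_j^{-1}(c)$. The positions $(r+kp-j)/d$ form an infinite arithmetic progression with difference $p\cdot(d/e)/d=p/e<p$. Hence $v$ is again constant on an arithmetic progression, now with strictly smaller difference. Iterating this step finitely many times, we may assume $\gcd(p,d)=1$ (possibly $p=1$).

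\emph{Step 2: the low digits are free.} Since $p$ is invertible modulo $d^m$, for every $m$ and every $0\le y<d^m$ there are infinitely many $k$ with $r+kp\equiv y\pmod{d^m}$. For such $k$, write $r+kp=xd^m+y$. The displayed formula then gives
\[
c=\sigma_{y_0}\circ\cdots\circ\sigma_{y_{m-1}}(v_x).
\]
Fix a word $w=y_0\cdots y_{m-1}$ over $\{0,\dots,d-1\}$ and compare it with the word $w'=w\,y_m$ obtained by appending one more digit. Since $v_{x'}$ ranges over the values $v_x$ for the relevant $x$, this gives $\sigma_w(z)=c=\sigma_{w}\sigma_{y_m}(z')$. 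I expect the main obstacle to be organising this bookkeeping cleanly. The cleanest route I see is as follows. Let $Z_m$ be the set of letters $v_x$ that occur for positions $r+kp$ with $x=\lfloor (r+kp)/d^m\rfloor$. For every word $w$ of length $m$ the permutation $\sigma_w$ maps the corresponding letters into $\{c\}$. Taking $w$ arbitrary and using that every $\sigma_w$ is a bijection, I would show that the semigroup $G$ generated by the $\sigma_j$ (a group, since it consists of permutations of a finite set) maps a fixed letter $z$ to $c$ under every element of $G$. Equivalently, the orbit $Gz$ is the single point $\{c\}$.

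\emph{Step 3: contradiction with primitivity and aperiodicity.} The letters of $g^n(z)$ are exactly the letters $\sigma_w(z)$ for words $w$ of length $n$. By primitivity, some $g^n(z)$ contains every letter of $\Sigma$, so the orbit $Gz$ is all of $\Sigma$. Combined with Step 2 this forces $|\Sigma|=1$. But then $v$ is constant, hence periodic, contradicting the assumption that $g$ generates an aperiodic word.
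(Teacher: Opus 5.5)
The paper quotes this result from the literature without proof, so your argument has to stand on its own. Step 1 is correct, and so is the digit formula $v_{xd^m+y}=\sigma_{y_0}\circ\cdots\circ\sigma_{y_{m-1}}(v_x)$. Step 2, however, breaks down.

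\textbf{Why Step 2 fails.} The relation $c=\sigma_w(v_x)$ carries no information about $\sigma_w$. It only says that the letter $v_x$ at the relevant high positions equals $z_w:=\sigma_w^{-1}(c)$, and nothing forces $z_w$ to be independent of $w$. Comparing $w$ with $w\,y_m$ adds nothing either: since $x=dx'+y_m$, the identity $v_x=\sigma_{y_m}(v_{x'})$ holds automatically.

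In fact the claimed conclusion is false under exactly the hypotheses Step 2 uses. That conclusion is a letter $z$ with $Gz=\{c\}$; since $\mathrm{id}\in G$, it would force every $\sigma_j$ to fix $c$. Take $g(a)=aba$, $g(b)=bab$. This morphism is uniform with $d=3$, bijective and primitive. Its fixed point is $v=(ab)^\omega$, which equals $a$ at all even positions, and $\gcd(2,3)=1$. But $\sigma_1=(a\,b)$, so $G$ fixes no letter. Your Steps 2--3 would therefore ``prove'' $|\Sigma|=1$ for this example. The underlying problem is that aperiodicity enters only to exclude $|\Sigma|=1$. It must be used essentially, because periodic fixed points of primitive bijective morphisms can be constant on progressions.

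\textbf{The missing idea.} You need to let the high part $x$ run through a whole residue class while the low part $y$ stays fixed.

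\begin{enumerate}
\item After Step 1 (so $\gcd(p,d)=1$), choose $m$ with $d^m\equiv 1 \pmod p$ and $d^m\ge r+p$.
\item For each residue $s$ modulo $p$, pick $y_s$ with $r\le y_s<d^m$ and $y_s\equiv r-s \pmod p$.
\item For every $x\equiv s \pmod p$, the position $xd^m+y_s$ is at least $r$ and congruent to $r$ modulo $p$. Hence $c=v_{xd^m+y_s}=\sigma_{w_s}(v_x)$, where $w_s$ is the length-$m$ digit string of $y_s$.
\item By bijectivity, $v_x=\sigma_{w_s}^{-1}(c)$ for all such $x$.
\end{enumerate}

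Thus $v$ is purely periodic with period $p$, contradicting aperiodicity. Your Step 3 becomes unnecessary. Primitivity plays no role here, beyond guaranteeing that all fixed points are aperiodic once one of them is.
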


    Using this result, we prove several auxiliary lemmas regarding properties of the morphism $h$.

    \begin{lemma}
        $\label{lemma:lem}$
        For each even $d\in \mathbb{N}$ there exist $k_1, k_2 \in \mathbb{N}$ such that at the positions $dk_1$ and $dk_2$ in the word $h^{\omega}(1)$ there are letters $3$ and $5$, respectively. In other words, the automaton from Fig. \ref{fig:mpr}, starting at state $1$ and reading ternary representations of $dk_1$ and $dk_2$, comes to states $3$ and $5$, respectively.
            
    \end{lemma}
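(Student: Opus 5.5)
The plan is to combine three ingredients: a parity invariant of $h^{\omega}(1)$, the action of the three "columns" of the bijective morphism $h$, and Proposition~\ref{prop:out}.

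\textbf{Step 1: parity invariant.} Call the letters $1,3,5$ odd and $2,4,6$ even. The images $h(1),h(3),h(5)$ have the pattern odd–even–odd, and $h(2),h(4),h(6)$ have the pattern even–odd–even. Since $3$ is odd, the position $3n+i$ has the parity of $n+i$. Induction on the position then shows that the letter at position $n$ of $h^{\omega}(1)$ has the same parity as $n$. Hence for even $d$, all letters at positions $dk$ lie in $\{1,3,5\}$. It remains to show that both $3$ and $5$ actually occur there.

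\textbf{Step 2: remove the factor $3$ from $d$.} The column maps of $h$ are:
\begin{itemize}
\item $\sigma_0$ (first letter): fixes $1,3,5$ and permutes $2\to4\to6\to2$;
\item $\sigma_2$ (last letter): acts on odd letters as the $3$-cycle $1\mapsto3\mapsto5\mapsto1$.
\end{itemize}
Since the letter at position $3n$ is $\sigma_0$ of the letter at position $n$, and $\sigma_0$ fixes $3$ and $5$, we may write $d=3^jd'$ with $3\nmid d'$. A multiple $d'k$ carrying $3$ (resp.\ $5$) then gives the multiple $3^jd'k$ of $d$ carrying the same letter. So we may assume $\gcd(d,3)=1$, with $d$ still even.

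\textbf{Step 3: use the $3$-cycle $\sigma_2$.} Let $L_r$ be the set of letters occurring at positions $\equiv r \pmod{2d}$ with $r$ even; by Step~1, $L_r\subseteq\{1,3,5\}$. The position $3m+2$ carries $\sigma_2(x_m)$. Since $\gcd(3,2d)=1$, the map $m\mapsto 3m+2$ permutes the residues modulo $2d$ and sends even residues to even residues. This gives $L_{3r+2}=\sigma_2(L_r)$ for every even $r$. Iterating along the orbit of $r=0$ under the affine map $r\mapsto3r+2$ modulo $2d$ (a finite cycle through $0$) relates $L_0$ to cyclic shifts of itself.

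\textbf{Step 4: conclude.} Proposition~\ref{prop:out} applies to the progression $\{2dk\}$, and also to the other residue classes. Here $h$ is primitive since $M_h^2>0$, and $h^{\omega}(1)$ is aperiodic since $f^{\omega}(a)$ is. So every $L_r$ has at least two elements.
\begin{itemize}
\item If $|L_0|=3$, we are done.
\item Otherwise $L_0$ is a two-element subset of $\{1,3,5\}$ missing exactly one odd letter. I will try to show that the missing letter must be $1$, so that $L_0=\{3,5\}$, or else obtain a contradiction. The idea is to combine the identity $L_{3r+2}=\sigma_2(L_r)$ with the analogous identity from the middle column $\sigma_1$ (restricted to odd positions, then composed back), and to use that position $0$ carries the letter $1$. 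This last fact gives $1\in L_0$, so $L_0=\{1,3\}$ or $L_0=\{1,5\}$, and that case is what must be excluded.
\end{itemize}

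\textbf{Main obstacle.} The hardest step is excluding $L_0=\{1,3\}$ and $L_0=\{1,5\}$. The cyclic relation of Step~3 only shifts the missing letter between different residue classes; it does not immediately give a contradiction in $L_0$ itself. What I would try first is to find, for each even $d$, explicit positions via base-$3$ representations. The ternary strings $0^*$ keep the automaton at state $1$, and the transition $1\xrightarrow{2}3$ together with the loop at $3$ on $0$ leads to state $3$; similarly suitable words lead to state $5$. The aim is to show, by a pigeonhole argument on residues of $3^t$ modulo $d$, that some multiple of $d$ has a ternary expansion of the form $w\,2\,0^s$ (reaching $3$) and another of the form $w'\,2\,2$ or $w'\,1\,2$ (reaching $5$), with $w,w'$ driving the automaton back to state $1$.
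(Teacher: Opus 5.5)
Your proposal has a genuine gap: it does not prove the lemma. Steps 1--4 are essentially correct and reduce the statement to excluding $L_0=\{1,3\}$ and $L_0=\{1,5\}$. That case is the entire content of the lemma, and you leave it open as the ``main obstacle''.

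\textbf{Why your tools cannot close it.} Proposition~\ref{prop:out} applied to $h^{\omega}(1)$, in any residue class, only yields $|L_r|\ge 2$. Nothing in it separates a two-element orbit of the letter $1$ from a three-element one.

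\textbf{What Step 3 gives.} Only $\sigma_2(L_r)\subseteq L_{3r+2}$ is immediate, because a position $\equiv 3r+2 \pmod{2d}$ need not be $\equiv 2 \pmod 3$; equality needs a cardinality argument around cycles. The orbit of $0$ under $r\mapsto 3r+2$ is $0\mapsto 3^n-1$, so you obtain only $\sigma_2^{c}(L_0)=L_0$ with $c=\mathrm{ord}_{2d}(3)$. This rules out a two-element $L_0$ exactly when $3\nmid c$. It fails, for example, for $d=14$, where $\mathrm{ord}_{28}(3)=6$.

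\textbf{The fallback.} The pigeonhole idea via explicit ternary expansions is not carried out. Finding $w$ with $\delta(1,w)=1$ and a prescribed residue of its value mod $d$ is the original problem again. One of your patterns is also wrong: if $\delta(1,w')=1$, then $w'12$ leads to $\delta(2,2)=h(2)_2=6$, not $5$. Indeed $w'12$ encodes the odd number $9\,\mathrm{val}(w')+5$.

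\textbf{The missing idea.} Apply Proposition~\ref{prop:out} also to the fixed points $h^{\omega}(3)$ and $h^{\omega}(5)$, which exist since $h$ is prolongable on $3$ and $5$. Combine this with a group structure, as the paper does.
\begin{itemize}
\item The automaton is a permutation automaton and $d$ is even. So each word $[dk]_3$, read from states $1,3,5$, induces a permutation of $\{1,3,5\}$.
\item The concatenation $[dk_1]_3[dk_2]_3$ is the ternary expansion of a multiple of $d$. Hence these permutations form a finite subset of $S_3$ closed under composition, that is, a subgroup $G$.
\item The set of letters at positions $dk$ of $h^{\omega}(1)$ is exactly the orbit $G\cdot 1$.
\item Proposition~\ref{prop:out} for the three fixed points says that $G$ moves each of $1,3,5$. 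So $G$ is $A_3$ or $S_3$, which is transitive, and hence $3,5\in G\cdot 1$.
\end{itemize}
Concretely, your bad case $\{1,3\}$ would force $G=\{\mathrm{id},(1\,3)\}$. Then $h^{\omega}(5)$ would carry $5$ at every position $dk$, contradicting Proposition~\ref{prop:out} for $h^{\omega}(5)$. The paper runs exactly this argument as a short case analysis: some $[dm_1]_3$ acts either as a $3$-cycle or as a transposition, and in the latter case one concatenates with a word $[dm_2]_3$ moving the fixed state. With this idea your Steps 2--3 and the passage to modulus $2d$ are unnecessary.

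\textbf{Minor slip.} In Step 1, with $1,3,5$ called odd, position $n$ carries a letter of the \emph{opposite} parity to $n$ (position $0$ carries $1$). Your conclusion about even positions is nonetheless correct.
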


    \begin{proof}
    The morhism $h$ satisfies the conditions of Proposition \ref{prop:out}, so the automaton leaves state $1$ reading ternary representation $[dk]_3$ of $dk$ for some $k \in \mathbb{N}$. Since $h$ is prolongable on the letter $3$, we can consider the word $h^{\omega}(3)$ generated by $h$. So, the automaton must also leave state $3$ reading $[dk]_3$ from state $3$ for some $k$. The same holds for state $5$. 
    
    We now analyze how reading $[dk]_3$ for some fixed $k \in \mathbb{N}$ can change the automaton states $1$, $3$ and $5$. Since $d$ is even, reading $[dk]_3$ can only put the automaton in one of these three states. Indeed, the state of the automaton after reading $[dk]_3$ corresponds to the letter at the position $dk$ in the word $h^{\omega}(1)$, and at even positions we can only have letters $1,\,3,\,5$. Besides that, the automaton is invertible, so $[dk]_3$ can do one of the following three things: either it does not change each of the three states $1$, $3$ and $5$, or it can trade places two of them keeping the third one unchanged, or it can move cyclically all the three states.

    As it is impossible that for each $k$ all ternary words $[dk]_3$ keep the state $1$ unchanged, there exists an integer $m_1 \in \mathbb{N}$, such that $[dm_1]_3$ changes the state $1$ to one of the states $3$ and $5$, and changes one of these two states to state $1$. If $[dm_1]_3$ changes the states cyclically, then at positions $[dm_1]_3$ and $[dm_1]_3[dm_1]_3$ (concatenation of ternary representation of $dm_1$ with itself) in the word $h^{\omega}(1)$ we have letters $3$ and $5$ (probably in different order) and, since $[dm_1]_3$ and $[dm_1]_3[dm_1]_3$ are ternary representations of numbers divisible by $d$, we have a proof of the lemma in this case.

    Now suppose that $[dm_1]_3$ trades places states $1$ and $i$ ($i$ = $3$ or $5$), and does not change the state $j = 8-i$. Then there exists $m_2 \in \mathbb{N}$ such that reading $[dm_2]_3$ changes state $j$ to some other state; more precisely, changes state  $j$  to one of the states $1$ or $i$. Then either reading $[dm_1]_3$ and $[dm_2]_3$ changes state $1$ to states $i$ and $j$, respectively, or reading $[dm_1]_3$ and $[dm_1]_3[dm_2]_3$ changes state $1$ to states $i$ and $j$, respectively. In either case we have a proof.    \end{proof}

    \begin{lemma}
        $\label{lemma:lemm}$
For each infinite arithmetic progression with even difference, the word $h^{\omega}(1)$ contains at least three  distinct letters at positions from this arithmetic progression. 
    \end{lemma}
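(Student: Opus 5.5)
The plan is to reduce the statement to Lemma \ref{lemma:lem} using the automaton of Fig.~\ref{fig:mpr}, read most significant digit first. Two facts do the work. First, $h$ is bijective, so for each digit $j\in\{0,1,2\}$ the map $q\mapsto\delta(q,j)$ is a permutation of the states $\{1,\dots,6\}$. Consequently, reading any fixed ternary word $w$ induces a permutation $q\mapsto\delta(q,w)$, and in particular it is injective. Second, $\delta(1,0)=1$, so leading zeros do not change the state reached from $1$. Hence for every $L$ and every $0\le r<3^L$, the letter at position $r$ is $\tau$-independent of padding: it equals the state $\delta(1,w_r)$, where $w_r$ is the ternary representation of $r$ padded with zeros to length exactly $L$.

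Now fix an arithmetic progression $r+dk$, $k\in\mathbb{N}$, with $d$ even. Choose $L$ with $3^L>r$, and let $w_r$ be $r$ written with exactly $L$ ternary digits. By Lemma \ref{lemma:lem}, there are $k_1,k_2$ such that the automaton reading $[dk_1]_3$ from state $1$ reaches state $3$, and reading $[dk_2]_3$ reaches state $5$. Put $k_0=0$, which corresponds to staying at state $1$ (empty input).

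Consider the three positions $N_i=d(3^Lk_i)+r$ for $i=0,1,2$. All three lie in the progression, since $3^Lk_i\in\mathbb{N}$. Because $r<3^L$, the ternary representation of $N_i$ is the concatenation $[dk_i]_3\,w_r$; for $i=0$ it is $w_r$ with leading zeros, which is harmless by the second fact above. Therefore the letter of $h^{\omega}(1)$ at position $N_i$ is $\delta(q_i,w_r)$, where $(q_0,q_1,q_2)=(1,3,5)$. Since reading $w_r$ acts as a permutation of the states, these three letters are pairwise distinct, which proves the lemma.

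The only step needing care is the use of the first fact: each digit acts bijectively on the states of the automaton. This follows directly from bijectivity of $h$, since the state reached after reading digit $j$ from state $q$ is the $j$-th letter of $h(q)$. I should also double-check the reading convention: the automaton reads the most significant digit first, and $\delta(1,0)=1$ is visible from the loop labelled $0$ at state $1$ in Fig.~\ref{fig:mpr}. Both conventions are consistent with $h^{\omega}(1)$ being the fixed point starting with $1$.
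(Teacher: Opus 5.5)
Your proof is correct and is essentially the paper's argument: take $[dk_1]_3$ and $[dk_2]_3$ from Lemma~\ref{lemma:lem}, which send state $1$ to states $3$ and $5$, append the ternary representation of $r$, and use that reading a fixed word permutes the states (by bijectivity of $h$) to obtain three distinct letters. Padding $r$ to exactly $L$ digits with $3^L>r$ makes explicit why the concatenated words represent the elements $r+d(3^Lk_i)$ of the progression, and it also shows that the paper's case split on the parity of $r$ is not needed.
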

        
    \begin{proof}     First note that since the difference of the arithmetic progression is even, all letters at corresponding positions in the word $h^{\omega}(1)$ have the same parity (this is easy to see from the definition of the morphism $h$).

    Assume the converse: suppose that such an arithmetic progression exists. Then its elements are of the form $r + dk,\; k\in \mathbb{N}_0$. Consider two cases depending on the parity of $r$.

    If $r$ is even, then possible letters at corresponding positions in the word $h^{\omega}(1)$ are $1,\;3$ and $5$. We will use automatic presentation of our word (see Fig. \ref{fig:mpr}). Suppose that ternary representation $[r]_3$ of the number $r$ changes state $1$ of our automaton to a state $a_1$, then $a_1 \in \{1,\;3,\;5\}$, since  $r$ is even. Similarly, suppose that $[r]_3$ changes state $3$ to a state $a_3$ and state $5$ to a state $a_5$; $a_3,\; a_5\in \{1,\;3,\;5\}$. Notice that the states $a_1,\; a_3 $ and $a_5$ are pairwise distinct, since our automaton is invertible. So,  reading $[r]_3$ from states from the set $\{1,\;3,\;5\}$, the automaton can come to each of the states $\{1,\;3,\;5\}$.

    By lemma $\ref{lemma:lem}$, the exist  $k_1,\;k_2 \in \mathbb{N}$, such that reading $[dk_1]_3$ and $[dk_2]_3$ from state $1$ the automaton comes to states $3$ and $5$, respectively. Then $[r]_3$, $[dk_1]_3 [r]_3$ (concatenation of ternary representation of $dk_1$ with ternary representation of $r$) and $[dk_2]_3 [r]_3$ move the automaton from state $1$ to each of the states $\{1,\;3,\;5\}$. It remains to notice that $[r]_3$, $[dk_1]_3 [r]_3$ and $[dk_2]_3 [r]_3$ are ternary representations of numbers of the form $r + dk$, $k \in \mathbb{N}_0$. Therefore, we found three elements of our arithmetic progression such that the correspoding positions in the word $h^{\omega}(1)$ contain all the three letters $\{1,\;3,\;5\}$.

    For even $r$ the proof is symmetric: suppose that reading $[r]_3$ changes the states $1$, $3$ and $5$ to the states $b_1$, $b_3$ and $b_5$, respectively. Since $r$ is odd, we have $b_1$, $b_3$, $b_5 \in \{2,\;4,\;6\}$, and in addition all $b_i$'s are disctinct, since the automaton is invertible. Thus we again found  three elements of the arthmetic progression such that the corresponding positions in the word $h^{\omega}(1)$ contain all three letters from $\{2,\;4,\;6\}$.
    \end{proof}

Now we are ready to prove Proposition \ref{pr:abel1}, i.e., we are going to prove that the word
 $f^{\omega}(a)$ is not abelian periodic. 

\begin{proof}[Proof of Proposition \ref{pr:abel1}]
We start with a simple observation: if the word is abelian periodic, then, since frequences of both letters $a$ and $b$ in $f^{\omega}(a)$ are equal to $\frac{1}{2}$, the abelian period must contain equal numbers of occurrences of letters $a$ and $b$ (in particular, it is even).

Consider the following geometric interpretation of the abelian periodicity property (note that this interpetation has been also used in \cite{WAP} for the study of a generalization of the notion of an abelian periodicity of infinite words). Let $w$ be a word on the alphabet $\{a,b\}$; we translate $w$ to a graphic visiting points of the infinite rectangular grid by interpreting letters of $w$ by drawing instructions. We
assign $a$ with a move by vector $v_a=(1, 1)$, and $b$ with a move by $v_b=(1, -1)$. We start at the origin $(x_0, y_0) = (0, 0)$. At step $n$, we are at a point $(x_{n-1}, y_{n-1})$ and
we move by the vector corresponding to the letter $w_n$, so that we come to the point
$(x_n, y_n) = (x_{n-1}, y_{n-1}) + v_{w_n}$.
So, we translate the word $w$ to a path in $\mathbb{Z}^2$ (see Fig. \ref{fig:mpr1}). 

Then, taking into account the observation we made about frequences of letters, the abelian periodicity of the word  $f^{\omega}(a)$ means that there exists a nonnegative integer $r$ and an even integer $d$, such that starting from a point on the graph of the word with $x=r$ (after $r$ steps from the origin), after each $d$ steps the graph has points with the same $y$-coordinate (since abelian period containt equal numbers of occurrences of $a$ and $b$). 

The following claim is straightforward:
    \begin{claim}
    \label{claim:lem_pos} 
    The graph of $f^{\omega}(a)$ satisfies the following:
    \begin{itemize}
        \item Points with $y=0$ correspond to prefixes consisting of several integral blocks of the word $f^{\omega}(a)$.
        \item Points with $y=1$ correspond to positions in the middle of the block $f(a) = ab$ (i.e., after $a$ in this block).
        \item Points with $y=-2$ correspond to positions in the middle of the block $f(b) = bbaa$ (i.e., after $bb$ in this block).
        \item Points with $y=-1$ correspond to one of the two positions in the block $f(b) = bbaa$: either after the first $b$, or after the first $a$.
        \end{itemize}
    \end{claim}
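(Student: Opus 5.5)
The claim follows from the fact that $f^{\omega}(a)=f(f^{\omega}(a))$ factors uniquely into blocks $f(a)=ab$ and $f(b)=bbaa$, combined with the observation that both blocks have height $0$ in the graph. Under $v_a=(1,1)$ and $v_b=(1,-1)$, the $y$-coordinate after reading a prefix $p$ is $|p|_a-|p|_b$. Both $ab$ and $bbaa$ contain equally many $a$'s and $b$'s. So I plan to first show that the boundary between two consecutive blocks has $y=0$, and then look at the heights inside a single block. Since each block starts at height $0$, the height inside a block depends only on the block and the offset within it.

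\textbf{Heights at boundaries.} The prefix of $f^{\omega}(a)$ ending at a block boundary is $f(w)$ for some prefix $w$ of $f^{\omega}(a)$. Its height is $|f(w)|_a-|f(w)|_b=0$, because $|f(a)|_a=|f(a)|_b=1$ and $|f(b)|_a=|f(b)|_b=2$.

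\textbf{Heights inside blocks.} Inside a block the heights are read off directly. Inside $ab$ the path goes $0\to1\to0$. Inside $bbaa$ it goes $0\to-1\to-2\to-1\to0$. Hence the interior positions have the following heights. The point after $a$ in $ab$ has $y=1$. The point after the first $b$ of $bbaa$ has $y=-1$. The point after $bb$ has $y=-2$. The point after $bba$, i.e.\ after the first $a$, has $y=-1$.

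\textbf{Converse directions.} Each position of the word is either a block boundary or one of these four interior positions. I then compare the lists, noting that no interior position has height $0$. This gives the converse of each item. A point with $y=0$ must be a boundary, i.e.\ it corresponds to a prefix made of whole blocks. A point with $y=1$ can only be the middle of $ab$. A point with $y=-2$ can only be the middle of $bbaa$. A point with $y=-1$ is one of the two stated positions in $bbaa$.

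There is no real obstacle here. The only point needing care is the uniqueness of the block factorization, which holds because $f(a)$ starts with $a$ and $f(b)$ starts with $b$. It is already noted in the proof of Lemma~\ref{lemma:l1}.
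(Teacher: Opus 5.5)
Your proof is correct and is the direct verification the paper has in mind; the paper states the claim as "straightforward" and gives no proof. Both blocks $ab$ and $bbaa$ are balanced, so every boundary of the factorization $f^{\omega}(a)=f(f^{\omega}(a))$ lies at height $0$. The interior heights $1,-1,-2,-1$ are then read off directly, and since none of them is $0$, the converse directions follow, with the prefix-code uniqueness of the factorization being the same fact used in the proof of Lemma~\ref{lemma:l1}.
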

    

    Claim \ref{claim:lem_pos} together with the proof of Lemma \ref{lemma:l1} give a relation between the graph of the word $f^{\omega}(a)$ and the word $h^{\omega}(1)$ (see Fig. \ref{fig:mpr2}). On the figure, each arrow is marked by the corresponding letter of the word $h^{\omega}(1)$.
    We associate points of the graph of $f^{\omega}(a)$ with letters of $h^{\omega}(1)$ as follows: for each integer $m$, the point corresponding to the prefix of length $m$ of $f^{\omega}(a)$ is associated with the letter $h^{\omega}(1)_m$.
    On Fig. \ref{fig:mpr2} letters associated with points correspond to letters marking arrows going from this point.
    We can now reformilate Claim \ref{claim:lem_pos} in terms of the word $h^{\omega}(1)$.

    \begin{claim}
    \label{claim:lem_pos2}
    Letters of $h^{\omega}(1)$ assoiciated with points of the graph of $f^{\omega}(a)$ satisly the following:
    \begin{itemize}
        \item Points with $y=0$ are associated with letters $1$ and $3$ 
        \item Points with $y=1$ are associated with letter $2$.
        \item Points with $y=-1$ are associated with $4$ and $6$.
        \item  Points with $y-2$ are associated with letter $5$.
    \end{itemize}
    \end{claim}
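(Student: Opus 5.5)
The plan is to read the claim off directly from the block correspondence built in the proof of Lemma~\ref{lemma:l1}, combined with Claim~\ref{claim:lem_pos}. No automaton argument is needed.

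First, recall from the proof of Lemma~\ref{lemma:l1} that $f^{\omega}(a)=f(f^{\omega}(a))$ factorizes uniquely into blocks $f(a)=ab$ and $f(b)=bbaa$. Under the correspondence $f(a)\sim 12$ and $f(b)\sim 3456$, this factorization turns into $h^{\omega}(1)$ letter by letter. Hence the letter $h^{\omega}(1)_m$ records exactly where position $m$ of $f^{\omega}(a)$ sits inside its block:
\begin{itemize}
\item $1$ is the first letter of a block $ab$, and $2$ is its second letter;
\item $3,4,5,6$ are the first, second, third and fourth letters of a block $bbaa$.
\end{itemize}
Since the point after the prefix of length $m$ is associated with $h^{\omega}(1)_m$, which is the label of the step leaving that point, this letter tells us how far into the current block the prefix of length $m$ ends.

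Next, I compute the $y$-coordinate in each case. Both blocks contain equally many $a$'s and $b$'s ($|f(c)|_a=|f(c)|_b$ for $c\in\{a,b\}$). So after any concatenation of whole blocks the $y$-coordinate is $0$, and the $y$-coordinate of a point equals the $y$-displacement of the part of the current block that has already been read. This gives:
\begin{itemize}
\item letters $1$ and $3$: nothing of the current block has been read, so $y=0$;
\item letter $2$: the prefix $a$ of $ab$ has been read, so $y=1$;
\item letter $4$: the prefix $b$ of $bbaa$ has been read, so $y=-1$;
\item letter $5$: the prefix $bb$ has been read, so $y=-2$;
\item letter $6$: the prefix $bba$ has been read, so $y=-1$.
\end{itemize}

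Every point of the graph is associated with exactly one of the letters $1,\dots,6$, so this case list also gives the converse direction. The points with $y=0$ are exactly those associated with $1$ or $3$, and similarly for the other values of $y$. This matches Claim~\ref{claim:lem_pos} position by position.

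The only step needing care is the indexing convention. I need that the block boundaries of $f(f^{\omega}(a))$ match the letters of $h^{\omega}(1)$ exactly, not with a shift. This is precisely what the inductive relations $f^{m}(ab)\sim h^{m}(12)$ and $f^m(bbaa)\sim h^m(3456)$ from the proof of Lemma~\ref{lemma:l1} guarantee, once one notes that $h(1)=123$ and $h(3)=345$ both start the image of a new block.
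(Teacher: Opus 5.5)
Your proof is correct and follows the paper's route. The paper likewise obtains this claim by reading Claim~\ref{claim:lem_pos} through the correspondence $f(a)\sim 12$, $f(b)\sim 3456$ from the proof of Lemma~\ref{lemma:l1}, under which $h^{\omega}(1)$ is the letter-by-letter image of the block factorization of $f^{\omega}(a)=f(f^{\omega}(a))$. Your closing remark about $h(1)=123$ and $h(3)=345$ is unnecessary: the alignment already follows from $f^{m}(ab)\sim h^{m}(12)$, with both words read from position $0$.
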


    By Lemma \ref{lemma:lemm}, we obtain that there is no aritmetic progression with even difference $d$, such that corresponding positions in the word  $h^{\omega}(1)$ are filled with at most two distinct letters. This completes the proof: Indeed, in case of abelian periodicity with period $d$ (which must be even) and preperiod $r$, all points corresponding to prefixes of length $r + dk,\; k\in \mathbb{N}$, of $f^{\omega}(a)$, are on the same horizontal axis. So, by Claim \ref{claim:lem_pos2}, positions $r + dk$ in $h^{\omega}(1)$ contain at most two distinct letters.
\end{proof}

\subsection{General case with $\theta_2=0$}
\label{subsec:parikh_coll}


In this subsection, we consider binary morphisms $f$ with matrix $M_f$ having the second eigenvalue equal to 0, i.e., $\theta_2 = 0$. We can then represent $M_f$ in the following form described in Condition 1: 

\medskip

\textbf{Condition 1}:  The matrix of $f$ is of the following form: $$M_f = \begin{pmatrix}
n A & m A\\
n B & m B
\end{pmatrix}$$ with $A,B,m,n \in \mathbb{N}$ and $(m, n) = 1$. 

\medskip

Besides that, we suppose that $f$ satisfies the following:

\medskip

\textbf{Condition 2}: $f$ is a binary morphism of the form $$\begin{cases}
    f(a) = ax, \\
    f(b) = by,
\end{cases}$$ where $x$ and $y$ are non-empty words.

\medskip

We will state and prove auxiliary statements for morphisms saisfying Condition 2, and then in the proof of Theorem \ref{CONJ}, we will show that it is enough to consider morphisms of this form. We now introduce the following definitions for a morphism $f$ satisfying Condition 1:



\begin{definition}
     Let and let $u$ be a factor of the word $f^{\omega}(a)$ occurring at position $i$, i.e. $f^{\omega}(a)[i..i+|u|-1]=u$. A \textit{block-position} of this occurrence of $u$ in $f^{\omega}(a)$ is the number  $\frac{i}{A+B}$ if it is integer (otherwise a block-position is not defined).  
\end{definition}

Note that we can similarly define a block-position in a finite word of the form $f^n(a)$.

\begin{definition}
    Let $t\in \mathbb{N}$, and let $u$ be a factor of $f^{\omega}(a)$ occurring at position $i$. Then a \textit{$t$-block-position} of this occurrence of $u$ in  $f^{\omega}(a)$ is the number $ \frac{i}{(A + B)(nA + mB)^{t-1}}$  if it is integer (otherwise a $t$-block-position is not defined). 
\end{definition}

\begin{definition} \label{def_BL} Given a word $u$, a \textit{block-length} of the word $f(u)$ is the number $$BL_{f(u)} = \frac{|f(u)|}{A + B}.$$
\end{definition}

\begin{definition}
We say that an occurrence of a word $f(w)$ in $f^{\omega}(a)$ is \textit{proper} if we have for this occurrence  $f^{\omega}(a) = f(u)f(w)\cdots$ for some prefix $u$ of $f^{\omega}(a).$ 
\end{definition}

\subsubsection{On block-positions of factors in $f^{\omega}(a)$}

In this subsection we discuss properties of block-positions of factors in $f^{\omega}(a)$ for binary morphisms satisfying Conditions 1 and 2. The main result of this subsection is Proposition \ref{tBPos}, which is the key tool in the proof of Theorem \ref{th:main}.

We start with a basic number theoretcal observation which we will use several times in the proofs. Although this is likely to be known or well understood, we provide a proof for completeness.

\begin{lemma}
\label{coprime}
Let $N$ and $d$ be relatively prime integers. Then for each $r\in \mathbb{N}$ the following holds: $$\{\alpha rN \bmod  d  \mid \alpha\in \{0,\,1,\dots,\,d-1\} \} = \{ \alpha r  \bmod d \mid \alpha\in \{0,\,1,\dots,\,d-1\}\}.$$
\end{lemma}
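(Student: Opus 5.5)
The plan is to show that multiplication by $N$ permutes the subgroup of $\mathbb{Z}/d\mathbb{Z}$ generated by $r$; both sets in the statement are then that same subgroup. Write $S_r = \{\alpha r \bmod d \mid \alpha\in\{0,\dots,d-1\}\}$. Since $\alpha$ runs over a complete residue system modulo $d$, $S_r$ is exactly the set of all multiples of $r$ in $\mathbb{Z}/d\mathbb{Z}$. That is, it is the cyclic subgroup $\langle r\rangle$, which equals $\langle g\rangle$ with $g=\gcd(r,d)$.

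The inclusion $\{\alpha rN \bmod d\}\subseteq S_r$ is immediate. Indeed, $\alpha rN = (\alpha N) r$, and $\alpha N \bmod d$ is again some $\alpha'\in\{0,\dots,d-1\}$, so $\alpha rN\equiv \alpha' r \pmod d$.

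For the reverse inclusion, use $(N,d)=1$. Then $N$ is invertible modulo $d$, so the map $\alpha\mapsto \alpha N \bmod d$ is a bijection of $\{0,\dots,d-1\}$. Hence for any $\alpha'$ there is an $\alpha$ with $\alpha N\equiv\alpha'\pmod d$, which gives $\alpha' r\equiv \alpha rN \pmod d$. Alternatively, both sets are finite and the left set has the same cardinality as $S_r$ because multiplication by $N$ is injective on $\mathbb{Z}/d\mathbb{Z}$; the inclusion then forces equality.

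There is no real obstacle here. The only point needing care is the invertibility of $N$ modulo $d$, which follows from Bézout's identity: $uN+vd=1$ gives $N^{-1}\equiv u\pmod d$.
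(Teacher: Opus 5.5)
Your proposal is correct and follows essentially the same route as the paper: Bézout's identity makes $N$ invertible modulo $d$, so $\alpha \mapsto \alpha N \bmod d$ permutes $\{0,\dots,d-1\}$. Rewriting $\alpha rN$ as $(\alpha N)r$ then gives the equality of the two sets. Your subgroup framing ($S_r=\langle \gcd(r,d)\rangle$) and the cardinality alternative are harmless additions and are not needed for the argument.
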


\begin{proof}
    Since $N$ and $d$ are relatively prime, by Bezout's identity, there exist integers  $x$ and $y$ such that $xN + yd = 1$. So, $xN \equiv_d 1$, and therefore 
    $$\begin{aligned}
    \{\alpha N \bmod  d  \mid \alpha\in \{0,\,1,\dots,\,d-1\} \} \supset \{\alpha x N \bmod d  \mid \alpha\in \{0,\,1,\dots,\,d-1\} \} = \\ = \{\alpha \bmod d  \mid \alpha\in \{0,\,1,\dots,\,d-1\} \} = \{0,\,1,\dots,\,d-1\}.
    \end{aligned}$$
    Hence $$\{\alpha N \bmod d  \mid \alpha\in \{0,\,1,\dots,\,d-1\} \} = \{0,\,1,\dots,\,d-1\}.$$ Therefore, $$\begin{aligned}
    \{\alpha rN \bmod d  \mid \alpha\in \{0,\,1,\dots,\,d-1\} \} = \{(\alpha N) r \bmod d  \mid \alpha\in \{0,\,1,\dots,\,d-1\} \} = \\ = \{ \alpha r  \bmod d \mid \alpha\in \{0,\,1,\dots,\,d-1\}\}. 
    \end{aligned}  $$
\end{proof}

\begin{lemma}
\label{laa}
Let $f$ be a binary morphism satisfying Conditions 1 and 2.
Assume in addition that there is an occurrence of $aa$ in the word $f^{\omega}(a)$. Then for each an integer  $d$  coprime with $\text{tr}(M_f)$ 
and for each $\alpha \in \{0,\,1,\dots,\,d-1\}$, there is a proper occurrence of $f(a)$ in $f^{\omega}(a)$ at a block-position congruent to $\alpha n$ modulo $d$.
\end{lemma}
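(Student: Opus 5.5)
The plan is to turn the problem into arithmetic on a single linear quantity. For a finite word $p$ put $\beta(p)=n|p|_a+m|p|_b$. By Condition 1, $|f(a)|=n(A+B)$ and $|f(b)|=m(A+B)$, so $|f(p)|=(A+B)\beta(p)$. Since $f^{\omega}(a)=f(f^{\omega}(a))$, every prefix $q$ of $f^{\omega}(a)$ followed by the letter $a$ yields a proper occurrence of $f(a)$ right after $f(q)$, and its block-position is $\beta(q)$. It therefore suffices to show the following: for every $\alpha\in\{0,\dots,d-1\}$ there is a prefix $q$ of $f^{\omega}(a)$, followed by $a$, with $\beta(q)\equiv \alpha n \pmod d$.

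Next I record how $\beta$ behaves under $f$. From the matrix, $|f(p)|_a=A\beta(p)$ and $|f(p)|_b=B\beta(p)$, so $\beta(f(p))=(nA+mB)\beta(p)=T\beta(p)$, where $T=\mathrm{tr}(M_f)$. Since $aa$ occurs in $f^{\omega}(a)$, it occurs in some prefix $f^K(a)$. Put $g=f^K$ and write $g(a)=u\,aa\,v$. Then $\beta(g(p))=T^K\beta(p)$. Set $c=\beta(u)$. Suppose $q$ is a prefix followed by $a$ with $\beta(q)=s$. Then $g(q)u$ and $g(q)ua$ are both prefixes of $f^{\omega}(a)$ followed by $a$, with $\beta$-values $T^Ks+c$ and $T^Ks+c+n$.

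Now iterate, starting from the empty prefix ($s_0=0$, followed by $a$). At each step we may apply either $s\mapsto T^Ks+c$ or $s\mapsto T^Ks+c+n$. After $j$ steps, if the second option is chosen at the steps in a set $S\subseteq\{1,\dots,j\}$, the reachable value is
$$c\sum_{i=0}^{j-1}T^{Ki}\;+\;n\sum_{i\in S}T^{K(j-i)}.$$
Because $\gcd(T,d)=1$, $T^K$ is invertible modulo $d$; let $r$ be its multiplicative order. Take $j=rd$. The first sum consists of $d$ repetitions of one full period $\sum_{i=0}^{r-1}T^{Ki}$, so it is $\equiv 0 \pmod d$. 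Next choose $S$ inside $\{i: j-i\equiv 0 \pmod r\}$, a set with exactly $d$ elements. Then every term of the second sum is $\equiv n$, and the value is $\equiv |S|\,n$. Letting $|S|$ range over $0,\dots,d-1$ gives every residue $\alpha n$ (Lemma~\ref{coprime} is not even needed).

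I expect the main obstacle to be cancelling the unwanted drift constant $c$, which the iteration introduces at every step. Choosing the number of steps to be a multiple of $rd$ handles this, since the accumulated geometric sum of the $c$-terms then vanishes modulo $d$. A second point to verify carefully is that at every step the new prefix really is followed by $a$. This holds because both occurrences of $a$ come from the factor $aa$ inside $g(a)$, which sits right after $g(q)$ because $q$ is followed by $a$.
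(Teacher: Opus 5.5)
Your proof is correct. It rests on the same underlying mechanism as the paper's: the factor $aa$ inside an iterated image of $a$ gives a two-choice recursion on prefixes, and invertibility of $T=\mathrm{tr}(M_f)$ modulo $d$ collapses the powers of $T$. The execution differs in two substantive ways.

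\textbf{The paper's argument.} It first replaces $f$ by a power so that $aa$ lies inside $f(a)$ itself, and transfers the conclusion back to $f$ via Lemma~\ref{coprime}. It then proves two nested-prefix claims, for $f^{k_1}(u)\cdots f^{k_t}(u)f^{k_t}(a)$ and for $f^{k_1}(ua)\cdots f^{k_t}(ua)f^{k_t}(a)$. The exponents are chosen with $k_i\equiv 1 \pmod{\varphi(d)}$, so that by Euler's theorem each chosen level contributes its block-length modulo $d$ and skipped levels contribute nothing. This yields an occurrence at block-position $\alpha(BL_{f(u)}+n)$. The drift $\alpha BL_{f(u)}$ is then removed by a second construction giving $(d-\alpha)BL_{f(u)}$, spliced into the first by applying $f^{k\varphi(d)}$.

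\textbf{Your argument.} You encode the block-position of a proper occurrence of $f(a)$ as $\beta(q)$ for a prefix $q$ followed by $a$. Here $\beta$ is the functional given by the left eigenvector $(n,m)$ of $M_f$, so $\beta\circ f=T\beta$. Because of this, you can use $g=f^K$ purely as a generator of new prefixes followed by $a$, and no transfer step back to $f$ is needed. You let the drift $c$ enter at every level and cancel it all at once: after $rd$ steps the sum $c\sum_{i=0}^{rd-1}T^{Ki}$ vanishes modulo $d$. You place the $+n$ choices only at the $d$ levels where $T^{K(j-i)}\equiv 1 \pmod d$.

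\textbf{What each approach buys.} Yours is a single construction instead of two constructions plus a splicing argument. It needs neither Euler's theorem nor Lemma~\ref{coprime}, and the reduction to $aa\subseteq f(a)$ disappears. The paper's version keeps skipped levels drift-free by using the prefix property of the words $f^k(a)$. You could have used that too, since $s\mapsto T^Ks$ is also available because $g(a)$ begins with $a$, but your full-period cancellation makes it unnecessary.
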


\begin{proof} Since $f^{\omega}(a)$ contains an occurrence of $aa$, we can assume that there is an occurrence of $aa$ in the block $f(a)$. Otherwise, consider a morphism $f^k$ for sufficiently large $k$ instead of $f$; we only need to check that if the statement holds for $f^k$, then it holds for $f$. 
We have $\text{tr}(M_{f^k}) = (nA + mB)^k$; suppose that for each $\alpha \in \{0,\,1,\dots,\,d-1\}$ there is an occurrence of $f^k(a)$ at a block-position (for $f^k$) comparable to $\alpha n$ modulo $d$. For each such occurrence of $f^k(a)$, its prefix $f(a)$ has a block-position (for $f$) comparable to $(nA + mB)^{k-1}\alpha n $, since the length of a block of the morphism $f^k$ is equal to the length of a block of $f$ multiplied by $(nA + mB)^{k-1}$. Now, since $(nA + mB)^{k-1}$ is relatively prime with $d$, by Lemma \ref{coprime}, we have that $$\{ (nA + mB)^{k-1}\alpha n \bmod d  \mid \alpha\in \{0,\,1,\dots,\,d-1\} \} = \{ \alpha n  \bmod d \mid \alpha\in \{0,\,1,\dots,\,d-1\}\}.$$
 
 
 So, it is enough to prove the statement for $f^k$, and hence we can assume that  $f(a)$ contains an occurrence of $aa$.
Thus, $f(a) = uaav$, where $u,\,v\in \{a,\,b\}^*$. 

\begin{claim}
\label{pr1}
   For each sequence of positive integers $k_1 > k_2 > k_3 > \ldots > k_t$, the word
   $$f^{k_1}(u)f^{k_2}(u)f^{k_3}(u)\cdots f^{k_t}(u)f^{k_t}(a)$$
   is a prefix of $f^{k_1 + 1}(a)$ and hence of $f^{\omega}(a)$.
\end{claim}

\begin{proof}
    We prove it by induction on $t$. Base case $t = 1$ is obvious: $f^{\omega}(a)$ starts with $f^{k+1}(a)$ for each integer $k$, and $f^{k+1}(a)=f^{k}(uaav)=f^{k}(u)\cdots$. Now consider $t+1$ numbers $k_1 > k_2 > \dots > k_{t+1}$. By the induction hypothesis, $f^{k_1 + 1}(a) = f^{k_1}(u)f^{k_2}(u)f^{k_3}(u)\cdots f^{k_t}(u)f^{k_t}(a)\cdots$, and also, $f^{k_t}(a) = f^{k_{t+1} + 1}(a)\cdots = f^{k_{t+1}}(u)f^{k_{t+1}}(a)\cdots$, and, combining these, we obtain the required result: 
   
  $$ \begin{aligned}
  f^{k_1 + 1}(a) = f^{k_1}(u)f^{k_2}(u)f^{k_3}(u)\cdots f^{k_t}(u)f^{k_t}(a)\cdots =  \\
  = f^{k_1}(u)f^{k_2}(u)f^{k_3}(u)\cdots f^{k_t}(u)f^{k_{t+1}}(u)f^{k_{t+1}}(a)\cdots
    \end{aligned}
  $$
\end{proof}

\begin{claim}
\label{pr2}

   For each sequence of positive integers $k_1 > k_2 > k_3 > \ldots > k_t$, the word $$f^{k_1}(ua)f^{k_2}(ua)f^{k_3}(ua)\cdots f^{k_t}(ua)f^{k_t}(a)$$ is a prefix of  $ f^{k_1 + 1}(a)$ and hence of $f^{\omega}(a)$.
\end{claim}

\begin{proof}
    The proof is by induction on $t$. 
    Base case $t = 1$ is straightforward: $f^{k+1}(a)=f^k(uaav)=f^k(ua)\cdots$ for each integer $k$. 
    Now consider $t+1$ numbers $k_1 > k_2 > \ldots > k_{t+1}$. By the induction hypothesis, we have $$f^{k_1 + 1}(a) = f^{k_1}(ua)f^{k_2}(ua)f^{k_3}(ua)\cdots f^{k_t}(ua)f^{k_t}(a)\cdots$$ and $$f^{k_t}(a) = f^{k_{t+1} + 1}(a)\dots = f^{k_{t+1}}(ua)f^{k_{t+1}}(a)\cdots$$ Combining these, we obtain the required result: 
   
  $$ \begin{aligned}
  f^{k_1 + 1}(a) = f^{k_1}(ua)f^{k_2}(ua)f^{k_3}(ua)\cdots f^{k_t}(ua)f^{k_t}(a)\cdots =  \\
  = f^{k_1}(ua)f^{k_2}(ua)f^{k_3}(ua)\cdots f^{k_t}(ua)f^{k_{t+1}}(ua)f^{k_{t+1}}(a)\cdots 
      \end{aligned}
  $$
\end{proof}

Now let $\alpha \in \{1,\,2,\,\dots,\,d-1\}$. Consider the sequence $$k_i = (\alpha -i+1)\varphi(d) + 1 \mbox{ for } i = 1,\, 2,\, \dots,\, \alpha,$$ where $\varphi$ is Euler's totient function. By Claim \ref{pr2}, we obtain:
 \begin{gather}
f^{\omega}(a) =
f^{\alpha \varphi(d) + 2}(a)\cdots = \nonumber \\ = f^{\alpha \varphi(d) + 1}(ua)f^{(\alpha - 1) \varphi(d) + 1}(ua)f^{(\alpha - 2) \varphi(d) + 1}(ua)\cdots f^{\varphi(d) + 1}(ua)f^{\phi(d) + 1}(a)\cdots = \label{eq:BL} \\ = f^{\alpha \varphi(d) + 1}(ua)f^{(\alpha - 1) \varphi(d) + 1}(ua)f^{(\alpha - 2) \varphi(d) + 1}(ua)\cdots f^{\varphi(d) + 1}(ua)\mathbf{f(a)}\cdots. \nonumber
  \end{gather}
For each word $w$ we have $|f^2(w)|=|f(w)|(nA+mB)$, so
for each $k\in \mathbb{N}$ 
we have 
\begin{equation}\label{eq:BL1}\frac{|f^{k\varphi(d) + 1}(w)|}{A+B} = (nA + mB)^{k\varphi(d)}\frac{|f(w)|}{A+B}.\end{equation} 
By Euler's theorem we have $c^{\varphi(m)}\equiv_m 1$, if $c$ and $m$ are relatively prime. By the conditions of the lemma we have that  $(nA + mB)$ and $d$ are relaively prime, hence $(nA + mB)^{k}$ and $d$ are relatively prime as well, so 
\begin{equation}\label{eq:BL2} (nA + mB)^{k\varphi(d)}\frac{|f(w)|}{A+B} \equiv_d \frac{|f(w)|}{A+B}.\end{equation}
Considering the occurrence of $f(a)$ in \eqref{eq:BL} marked by bold and applying \eqref{eq:BL1} and \eqref{eq:BL2} to it, we obtain an occurrence of $f(a)$ in $f^{\omega}(a)$ with a block-position congruent to $$\alpha (BL_{f(u)} + BL_{f(a)}) = \alpha BL_{f(u)} + \alpha n$$  modulo $d$ (we recall Definition \ref{def_BL} for notation $BL$). 


Now, consider $\beta = d-\alpha$. Similarly to the construction above and applying Claim $\ref{pr1}$, we get 
 \begin{gather*}
f^{\omega}(a) =
f^{\beta \varphi(d) + 2}(a)\cdots = \\ = f^{\beta \varphi(d) + 1}(u)f^{(\beta - 1) \varphi(d) + 1}(u)f^{(\beta - 2) \varphi(d) + 1}(u)\cdots f^{\varphi(d) + 1}(u)\mathbf{f(a)}\cdots.
 \end{gather*}
This implies that we have an occurrence of $f(a)$ in $f^{\omega}(a)$ with a block-position congruent to $\beta BL_{f(u)}$ modulo $d$.

Finally, consider $f^{k\varphi(d)}(f^{\alpha \varphi(d) + 2}(a))$ for some $k \in \mathbb{N}$ such that $k\varphi(d) \geq \beta\phi(d) + 2$:
\begin{gather*}
    f^{k\varphi(d)}(f^{\alpha \varphi(d) + 2}(a)) = \\ = f^{k\varphi(d)}\Bigl(f^{\alpha \varphi(d) + 1}(ua)f^{(\alpha - 1) \varphi(d) + 1}(ua)f^{(\alpha - 2) \varphi(d) + 1}(ua)\cdots f^{\varphi(d) + 1}(ua)\mathbf{f(a)}\cdots\Bigr).
\end{gather*}
Hence, we have an occurrence of $f^{\beta  \varphi(d) + 2}(a)$ with a block-position congruent to $\alpha BL_{f(u)} + \alpha n$ modulo $d$. Then, within this occurrence of $f^{\beta \varphi(d) + 2}(a)$ we find an occurrence of $f(a)$ with a block-position in the word $f^{\beta \varphi(d) + 2}(a)$
congruent to $\beta BL_{f(u)}$ modulo $d$. Combining it with a vlock-position of $f^{\beta \varphi(d) + 2}(a)$ in $f^{\omega}(a)$, we find an occurrence of $f(a)$ with a block-position in the word $f^{\omega}(a)$ that is congruent to $$(\alpha + \beta) BL_{f(u)} + \alpha n \equiv_d \alpha n$$ modulo $d$.

It remains to consider the case $\alpha = 0$: The prefix occurrence of $f(a)$ in $f^{\omega}(a)$ has a block-position $0$.

This completes the proof of Lemma \ref{laa}.
\end{proof}

\begin{lemma}
\label{lbb}
Let $f$ be a binary morphism satisfying Conditions 1 and 2.
Assume in addition that there is an occurrence of $bb$ in the word $f^{\omega}(a)$. Then for each integer $d$  coprime with $\text{tr}(M_f)$, 
 there is a constant $C$ such that for each $\alpha \in \{0,\,1,\dots,\,d-1\}$ there is a proper occurrence of $f(b)$ in $f^{\omega}(a)$ at a block-position congruent to $\alpha m+C$ modulo $d$.
\end{lemma}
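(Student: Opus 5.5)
The plan is to mirror the proof of Lemma \ref{laa}. The only new feature is that $f^{\omega}(a)$ starts with $a$, not with $b$. This means $f^k(b)$ is not a prefix, so there is no ``free'' occurrence at block-position $0$. The constant $C$ is introduced to absorb the offset of a fixed occurrence of a long iterate of $b$.

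\textbf{Step 1: reduce to $bb$ inside $f(b)$.} As in Lemma \ref{laa}, I first pass to $f^k$ so that $f(b)$ contains $bb$. Here Condition 2 gives $f^k(b)=b\cdots$, and since $bb$ occurs in $f^{\omega}(a)$, some power $f^k(b)$ contains $bb$. The reduction from $f^k$ back to $f$ works by the same argument as before: block-positions scale by $(nA+mB)^{k-1}$, which is coprime with $d$. Lemma \ref{coprime} then turns the set $\{\alpha m\}$ into itself, and the constant $C$ gets multiplied by $(nA+mB)^{k-1}$. The block-length of $f(b)$ is $m$, replacing $n$ for $f(a)$. So write $f(b)=u'bbv'$.

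\textbf{Step 2: occurrences inside $f^{N}(b)$.} Claims \ref{pr1} and \ref{pr2} hold verbatim with $b$, $u'$ in place of $a$, $u$. This is because their proofs only use $f(b)=u'bbv'$, and in particular that $f^{k}(b)$ is a prefix of $f^{k+1}(b)$ (true since $f(b)$ starts with $b$). Both statements are then about prefixes of $f^{k_1+1}(b)$ rather than of $f^{\omega}(a)$.

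Repeating the computation of Lemma \ref{laa} with $k_i=(\alpha-i+1)\varphi(d)+1$ gives, inside $f^{\alpha\varphi(d)+2}(b)$, a proper occurrence of $f(b)$ at relative block-position $\equiv_d \alpha BL_{f(u')}+\alpha m$. The $\beta=d-\alpha$ construction gives relative block-position $\equiv_d\beta BL_{f(u')}$. Nesting these exactly as in Lemma \ref{laa}, applying $f^{k\varphi(d)}$ to $f^{\alpha\varphi(d)+2}(b)$, yields relative block-position $\equiv_d \alpha m$ inside $f^{N_\alpha}(b)$. Here $N_\alpha$ is some exponent that I can take of the form $N_\alpha = N_0 + j_\alpha\varphi(d)$. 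The case $\alpha=0$ is the prefix of $f^{N}(b)$.

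\textbf{Step 3: anchoring in $f^{\omega}(a)$ (the main difficulty).} The relative positions found in Step 2 live inside $f^{N}(b)$ for varying $N$. I need all of them simultaneously inside occurrences that sit at a single common offset $C$.

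The plan is to fix one proper occurrence of $f(b)$ in $f^{\omega}(a)$, which exists since $b$ occurs and $f^{\omega}(a)=f(f^{\omega}(a))$. Say it sits at block-position $p$, so $f^{\omega}(a)=f(w)f(b)\cdots$. Applying $f^{M}$, we get $f^{\omega}(a)=f^{M+1}(w)f^{M+1}(b)\cdots$. The block-position of this occurrence of $f^{M+1}(b)$ is $(nA+mB)^{M}p$, which modulo $d$ is constant whenever $M$ ranges over a fixed residue class modulo $\varphi(d)$, by Euler's theorem. So choose $C\equiv_d (nA+mB)^{M_0}p$.

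For each $\alpha$, take $M\equiv M_0 \pmod{\varphi(d)}$ large enough that $f^{N_\alpha}(b)$ is a prefix of $f^{M+1}(b)$; this is possible because the $f^j(b)$ are nested prefixes. The occurrence found in Step 2 then lies at block-position $\equiv_d C+\alpha m$.

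I expect the care to be needed exactly here. Once $C$ is fixed, only exponents congruent to $M_0$ modulo $\varphi(d)$ may be used, and the exponent bookkeeping within the prefixes $f^{j}(b)$ must not introduce residue-dependent factors. That is guaranteed because relative positions are measured inside a prefix.
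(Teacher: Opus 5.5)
Your proposal is correct and follows the paper's proof. The paper likewise runs the Lemma~\ref{laa} argument on the other fixed point $f^{\omega}(b)$ to get proper occurrences of $f(b)$ at block-positions $\equiv_d \alpha m$ inside a long prefix $f^{k}(b)$, then takes one proper occurrence of that $f^{k}(b)$ in $f^{\omega}(a)$, whose block-position gives $C$. Your bookkeeping of $M$ modulo $\varphi(d)$ is harmless but unnecessary: only finitely many $\alpha$ are involved, so a single prefix $f^{N}(b)$ with $N\ge\max_\alpha N_\alpha$ already contains all the needed occurrences.
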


\begin{proof}
Consider the other fixed point $f^{\omega}(b)$ of the same morhism. Applying Lemma \ref{laa} to it, we get proper occurrences of $f(b)$ in all positions $\alpha m$. Now we take sufficiently long prefix $f^k(b)$; it has a proper occurrence in $f^{\omega}(a)$. Its block-position gives the constant $C$.
\end{proof}

\begin{lemma}
\label{laabb}
Let $f$ be a binary morphism satisfying Conditions 1 and 2.
Assume in addition that there are occurrences of both $aa$ and $bb$ in the word $f^{\omega}(a)$. Then for each integer $d$ coprime with $\text{tr}(M_f)$ 
and for each $\alpha \in \{0,\,1,\dots,\,d-1\}$, block-positions of proper occurrences of $f(a)$ in $f^{\omega}(a)$ run through all residues modulo $d$.
\end{lemma}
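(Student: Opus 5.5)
Note first that since $M_f$ satisfies Condition 1, every letter $a$ in $f^{\omega}(a)$ corresponds to a block $f(a)$ of block-length $n$, and every $b$ to a block $f(b)$ of block-length $m$; here $|f(a)| = n(A+B)$ and $|f(b)| = m(A+B)$. Also $\text{tr}(M_f) = nA + mB$.

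The plan is to combine Lemmas \ref{laa} and \ref{lbb} through the self-similarity $f^{\omega}(a) = f^k(f^{\omega}(a))$, using $\gcd(m,n)=1$. By Lemma \ref{laa}, proper occurrences of $f(a)$ realise every block-position residue of the form $\alpha n \bmod d$. By Lemma \ref{lbb}, proper occurrences of $f(b)$ realise every residue $\alpha m + C \bmod d$ for some constant $C$. The aim is to exhibit, for each target residue $r$, a proper occurrence of $f(a)$ whose block-position is a sum of some multiple of $n$ and some multiple of $m$, suitably shifted. Bezout's identity $xn+ym \equiv 1$ then covers every residue.

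First step: transport block-positions through a power of $f$. Take $K$ large and a multiple of $\varphi(d)$, and let $w$ be a long prefix of $f^{\omega}(a)$ containing both $aa$ and $bb$. Consider a proper occurrence of $f^{K}(w)$, or more simply a proper occurrence of $f^{K+1}(b)$ starting at a block-position $p$. By \eqref{eq:BL1} and \eqref{eq:BL2}, block-lengths of $f^{K+1}(x)$ are congruent modulo $d$ to those of $f(x)$. Inside $f^{K+1}(b) = f^{K}(f(b))$, the word $f^{K}(f^{\omega}(b))$ is self-similar, so Lemma \ref{laa} applied to $f^{\omega}(b)$ places occurrences of $f(a)$ at offsets $\equiv \alpha n \pmod d$ inside the image of $f^{\omega}(b)$. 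Meanwhile, Lemma \ref{lbb} lets us choose the starting block-position $p$ of that image of $b$ to be any value $\beta m + C$. Composing gives occurrences of $f(a)$ at block-positions $\equiv \beta m + C + \alpha n + C'$, with $C'$ a fixed offset from the construction. Here one must use Lemma \ref{coprime}: multiplying by powers of $nA+mB$ permutes the set $\{\alpha n\}$ and the set $\{\beta m\}$, so the residue sets are preserved after iterating.

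Second step: since $\gcd(m,n)=1$, the set $\{\alpha n + \beta m \bmod d\}$ is all of $\mathbb{Z}_d$. The shift by $C+C'$ is therefore harmless, and every residue is attained.

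The main obstacle is making the composition precise. A position inside an image of $b$ at a prescribed residue must be lifted so that the inner occurrence of $f(a)$ is still proper, and its offset must be measured in block units of $f$ rather than of $f^k$. I would handle this as in the last part of the proof of Lemma \ref{laa}. Apply $f^{k\varphi(d)}$ to the occurrence of $f(b)$ given by Lemma \ref{lbb}; this keeps its block-position residue, by Euler's theorem. The result contains a proper copy of a long prefix $f^{j}(b)$ of $f^{\omega}(b)$, inside which Lemma \ref{laa} provides $f(a)$ at every offset $\alpha n$. One should check that the $aa$ hypothesis of Lemma \ref{laa} holds for $f^{\omega}(b)$; it does, because $f^{\omega}(b)$ contains $a$, hence eventually the images of $f^{\omega}(a)$'s $aa$.
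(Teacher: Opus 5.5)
Your plan is the paper's argument with the nesting reversed. The paper places a copy of $f^{k+1}(a)$ at a block-position $\equiv\alpha n$ (Lemma~\ref{laa}, lifted by powers of $f$). Inside it, it finds a copy of $f^2(b)$ at relative block-position $\equiv\beta m+C'$ (Lemma~\ref{lbb}, lifted once), and takes the first $f(a)$ inside that $f^2(b)$. You instead lift an occurrence of $f(b)$ at $\beta m+C$ to $f^{K+1}(b)$ with $\varphi(d)\mid K$, and look for $f(a)$ inside it. For the final covering step you use $\gcd(m,n)=1$. The paper uses $\gcd(d,nA+mB)=1$ via $\gamma(nA+mB)=(\gamma A)n+(\gamma B)m$. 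Both work.

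The inner step, however, is not justified by what you cite. Lemma~\ref{laa} concerns the fixed point beginning with $a$: its proof builds the occurrences inside the prefixes $f^{k}(a)$ of $f^{\omega}(a)$, using $f(a)=uaav$. Applied to $f^{\omega}(b)$, i.e.\ after exchanging the two letters, it gives occurrences of $f(b)$ at block-positions $\alpha m$ when $bb$ occurs; this is exactly how Lemma~\ref{lbb} is proved. It says nothing about occurrences of $f(a)$ at offsets $\alpha n$ inside $f^{\omega}(b)$, so checking that $aa$ occurs in $f^{\omega}(b)$ is the wrong verification.

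The repair is short and makes $f^{\omega}(b)$ unnecessary. Since $|f(b)|_a=mA\ge 1$, write $f(b)=zaz'$, so that $f^{K+1}(b)=f^{K}(z)\,f^{K}(a)\,f^{K}(z')$. The word $f^{K}(a)$ is a prefix of $f^{\omega}(a)$. So, for $K$ fixed large enough (and a multiple of $\varphi(d)$), it already contains, by Lemma~\ref{laa}, proper occurrences of $f(a)$ at relative block-positions $\equiv\alpha n$ for every $\alpha$. The offset $c_0=|f^{K}(z)|/(A+B)$ of this copy of $f^{K}(a)$ inside $f^{K+1}(b)$ is an integer independent of $\alpha$ and $\beta$. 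Everything to the left of these occurrences is an image under $f$, so they are proper in $f^{\omega}(a)$, at block-positions $\equiv\alpha n+\beta m+C+c_0 \pmod d$. Your Bezout step then finishes the proof. With this correction your argument is a valid mirror image of the paper's.
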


\begin{proof}
  First, note that numbers of the form $$\alpha n + \beta m$$ with $\alpha,\,\beta \in \{0,\,1,\dots,\,d-1\}$ run through all residues modulo $d$. Indeed, since $d$ is relatively prime with $\text{tr}(M_f)=nA + mB$, by Lemma \ref{coprime} we have that the numbers $nA+mB$, $2(nA+mB)$, $3(nA+mB)$, \dots, $(d-1)(nA+mB)$ give all $d$ different residudes modulo $d$.  


    Secondly, for each $k\in \mathbb{N}$ the factor $f^k(a)$ in $f^{\omega}(a)$ occurs at block-positions congruent to $\alpha n$ modulo $d$ for each $\alpha \in \{0,\,1,\dots,\,d-1\}$. Indeed, by Lemma $\ref{laa}$, this holds for $k = 1$; then the factors $f^k(a)$ in $f^{\omega}(a)$ occur at block-positions congruent to $$\alpha n (nA + mB)^{k-1}$$ modulo $d$ for each $\alpha \in \{0,\,1,\dots,\,d-1\}$. Since $d$ and $nA + mB$ are relatively prime, we conclude.
    
    Consider $k\in \mathbb{N}$ such that for some $C$ the word $f^k(a)$  contains proper occurrences of $f(b)$ with block-positions congruent to $\beta m + C$ modulo $d$, for each $\beta \in \{0,\,1,\dots,\,d-1\}$ (by Lemma $\ref{lbb}$, such $k$ exists). Then in $f^{k+1}(a)$ there are proper occurrences of the word $f^2(b)$ with block-positions congruent to $(\beta m + C)(nA+mB)$ modulo $d$, for each $\beta \in \{0,\,1,\dots,\,d-1\}$, that is, by Lemma \ref{coprime}, block-positions congruent to $\beta m + C'$ modulo $d$, for each $\beta \in \{0,\,1,\dots,\,d-1\}$ (the constant $C'$ does not depend on $\beta$). Now fix $\alpha,\,\beta \in \{0,\,1,\dots,\,d-1\}$. Consider an occurrence of $f^{k+1}(a)$ in $f^{\omega}(a)$ with a block-position congruent to $\alpha n$ modulo $b$. Within this occurrence of $f^{k+1}(a)$, consider an occurrence of $f^2(b)$ with block-position (in $f^{k+1}(a)$) congruent to $\beta m + C'$ modulo $d$; block-position of this occurrence in the word $f^{\omega}(a)$ is congruent to $\alpha n + \beta m + C'$ modulo $d$. Within the found occurrence of $f^2(b)$, consider the first occurrence of $f(a)$; its block-position in $f^{\omega}(a)$ is congruent to $\alpha n + \beta m + C''$ modulo $d$ (the constant $C''$ does not depend on $\alpha$ and $\beta$). Note that the residues modulo $d$ of numbers of the form $\alpha n + \beta m + C''$ run through all possible values modulo $d$ if $\alpha$ and $\beta$ run through  $\{1,\,2,\,\dots,\,d-1\}$. 
    Lemma \ref{laabb} is proved.
\end{proof}

\begin{lemma}
\label{laab}
Let $f$ be a binary morphism satisfying Conditions 1 and 2. 
Assume in addition that there is an occurrence of $aa$ and there is no occurrence of $bb$ in the word $f^{\omega}(a)$. Then for each integer  $d$  coprime with $\text{tr}(M_f)$  
and for each $\alpha \in \{0,\,1,\dots,\,d-1\}$, there is a proper occurrence of $f(a)$ in $f^{\omega}(a)$ at a block-position congruent to $\alpha n$ modulo $d$.
\end{lemma}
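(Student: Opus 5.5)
The conclusion of this lemma is word for word the conclusion of Lemma \ref{laa}, and its hypotheses are stronger. Lemma \ref{laa} only requires that $f$ satisfies Conditions 1 and 2 and that $f^{\omega}(a)$ contains an occurrence of $aa$. It places no restriction on occurrences of $bb$. So my plan is simply to check that we are inside the hypotheses of Lemma \ref{laa} and invoke it.

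Concretely, fix an integer $d$ coprime with $\text{tr}(M_f)=nA+mB$. The morphism $f$ satisfies Conditions 1 and 2 by assumption, and $f^{\omega}(a)$ contains $aa$ by assumption. Therefore Lemma \ref{laa} gives, for every $\alpha\in\{0,1,\dots,d-1\}$, a proper occurrence of $f(a)$ in $f^{\omega}(a)$ at a block-position congruent to $\alpha n$ modulo $d$. The additional assumption that $bb$ does not occur is never used.

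I do not expect a genuine obstacle. The only points worth double-checking are two features of the proof of Lemma \ref{laa}:
\begin{itemize}
\item The reduction to a power $f^k$ with $aa$ inside $f^k(a)$ uses only the fact that $aa$ occurs in $f^{\omega}(a)$. Such an occurrence lies inside some $f^{k}(a)$, since these words are prefixes exhausting $f^{\omega}(a)$.
\item Claims \ref{pr1} and \ref{pr2} use only the factorization $f(a)=uaav$.
\end{itemize}
Neither step refers to the letter $b$, so the argument goes through verbatim. The lemma is presumably stated separately as the case distinction ($aa$ and $bb$, only $aa$, and so on) that will be used later, in contrast with Lemma \ref{laabb}, where the stronger conclusion requires $bb$ as well.
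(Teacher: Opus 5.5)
Your argument is correct for the statement exactly as printed. Its conclusion coincides word for word with that of Lemma~\ref{laa}, whose hypotheses are weaker, so a direct invocation suffices, and your two checks on the proof of Lemma~\ref{laa} are accurate.

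The paper, however, takes a genuinely different route, and in doing so proves more than is printed. It uses the absence of $bb$ essentially. Every proper occurrence of $f(b)$ is then followed by a proper occurrence of $f(a)$, and block-positions advance by $n$ after $f(a)$ and by $m$ after $f(b)$. Taking $k$ with at least $d$ occurrences of $b$ in $f^k(a)$, one finds inside $f^{k+1}(a)$, for each $\beta\in\{0,\dots,d-1\}$, a proper occurrence of $f(a)$ at block-position $\gamma_\beta n+\beta m$. Next, $f^{k+1}(a)$ itself is placed at a block-position $\equiv(\alpha-\gamma_\beta)n \pmod d$; this is where Lemma~\ref{laa} enters, as in the proof of Lemma~\ref{laabb}. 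The result is proper occurrences of $f(a)$ at block-positions $\equiv\alpha n+\beta m$ for all $\alpha,\beta$. Since $\gcd(m,n)=1$, these cover every residue modulo $d$.

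That stronger conclusion is what the lemma is for. Lemma~\ref{LAA} is proved as ``a combination of Lemma~\ref{laabb} and Lemma~\ref{laab}'' and asserts that block-positions of $f(a)$ run through all residues modulo $d$. In the case with $aa$ but no $bb$, that proof needs exactly the extra $\beta m$ shifts, so the printed statement of Lemma~\ref{laab} understates what its proof gives. Your closing remark therefore misreads its role: the $bb$-hypothesis is not idle, and the lemma is not a mere case split.

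Lemma~\ref{laa} alone cannot substitute here. The residues $\alpha n \bmod d$ are only the multiples of $\gcd(n,d)$, and $\gcd(n,d)$ can exceed $1$ even though $d$ is coprime with $nA+mB$. For instance, take $f(a)=abaaba$, $f(b)=baa$. This satisfies Conditions 1 and 2 with $n=2$, $m=1$, $A=2$, $B=1$. The fixed point contains $aa$ but no $bb$, and $\text{tr}(M_f)=5$. For $d=2$, your version yields only even block-positions.

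So if you are meant to prove the intended, stronger statement, your proposal has a gap at exactly this point: nothing in it produces block-positions outside $\{\alpha n \bmod d\}$.
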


\begin{proof}
Since the word $f^{\omega}(a)$ does not contain occurrences of $bb$, each occurrence of $b$ in $f^{\omega}(a)$ is followed by an occurrence of $a$. Consider sufficiently large $k$, such that the word $f^k(a)$ contains at least $d$ occurrences of $b$. Then for each $\beta \in \{0,\,1,\dots,\,d-1\}$ the word $f^{k+1}(a)$ contains an occurrence of the word $f(a)$ with a block-position equal to $\gamma_{\beta} n + \beta m$ for some $\gamma_{\beta}$ depending on $\beta$. Indeed, for each proper occurrence of $f(b)$ (resp., $f(a)$) with a block position $t_b$ (resp., $t_a$) the block-position of a proper occurrence of a factor following it is $t_b+m$ (resp., $t_a+n$). Since $f^k(a)$ contains at least $d$ occurrences of $b$, then $f^{k+1}(a)$ contains at least $d$ proper occurrences of $f(b)$. Since we do not have occurrences of $bb$ in $f^\omega(a)$, each such proper occurrence of $f(b)$ is followed by a proper occurrence of $f(a)$. Thus the prefix occurrence of $f(a)$ corresponds to  $\beta = 0$, after that we have several occurrences of $f(a)$ (say, $\gamma_1$ occurrences), then we have the first block $f(b)$, after that an occurrence of a block $f(a)$ corresponding to $\beta = 1$, and so on.


Fix some $\alpha,\,\beta \in \{0,\,1,\dots,\,d-1\}$.
Consider an occurrence of the word $f(a)$ in the word $f^{k+1}(a)$ at a block-position congruent to $\gamma_{\beta} n + \beta m$ modulo $d$. Now, similarly to the proof of Lemma $\ref{laabb}$, we find an occurrence of the word $f^{k+1}(a)$ in the word $f^{\omega}(a)$ with a block-position congruent to $(\alpha - \gamma_{\beta})n$ modulo $d$. Then the block-position of the occurrence of $f(a)$ in $f^{\omega}(a)$ is congruent to $\alpha n + \beta m$ modulo $d$.

Lemma \ref{laab} is proved.
\end{proof}

\begin{lemma}
\label{LAA}
 Let $f$ be a binary morphism satisfying Conditions 1 and 2. 
Suppose in addition that $f^{\omega}(a)$ contains occurrences of $aa$. Then for each $d\in \mathbb{N}$ coprime with $\text{tr}(M_f)$, 
 block-positions of proper occurrenсes of $f(a)$ in $f^{\omega}(a)$ run through all residues modulo $d$.
\end{lemma}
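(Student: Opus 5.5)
The proof splits into two cases according to whether $f^{\omega}(a)$ contains an occurrence of $bb$.

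\emph{Case 1: $f^{\omega}(a)$ also contains $bb$.} Then Lemma \ref{laabb} applies verbatim and already gives that block-positions of proper occurrences of $f(a)$ run through all residues modulo $d$.

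\emph{Case 2: $f^{\omega}(a)$ contains no $bb$.} Here I would use Lemma \ref{laab}, in the form its proof actually establishes. That proof yields, for each $\alpha,\beta \in \{0,\dots,d-1\}$, a proper occurrence of $f(a)$ at block-position congruent to $\alpha n + \beta m$ modulo $d$. This uses only the absence of $bb$, the presence of $aa$ (through Lemma \ref{laa}), and the fact that $f^k(a)$ contains at least $d$ occurrences of $b$ for large $k$. The last fact holds because $f$ is primitive, or at least because $B \ge 1$ under Condition 1, so the number of $b$'s in $f^k(a)$ grows.

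It then remains to show that $\{\alpha n + \beta m \bmod d\}$ is all of $\mathbb{Z}/d\mathbb{Z}$. The argument is the one at the start of the proof of Lemma \ref{laabb}. Take $\alpha = jA \bmod d$ and $\beta = jB \bmod d$. Then $\alpha n + \beta m \equiv j(nA+mB) = j\,\mathrm{tr}(M_f)$. Since $\gcd(d, \mathrm{tr}(M_f)) = 1$, Lemma \ref{coprime} with $r = 1$ and $N = \mathrm{tr}(M_f)$ shows that these values cover every residue modulo $d$ as $j$ ranges over $\{0,\dots,d-1\}$.

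The main obstacle is that Lemma \ref{laab} is stated only with residues $\alpha n$, while its proof gives the stronger family $\alpha n + \beta m$. I would therefore re-extract that stronger statement explicitly. To realize a prescribed $\beta$, take the occurrence of $f(a)$ inside $f^{k+1}(a)$ immediately following the $\beta$-th proper block $f(b)$; its block-position is $\gamma_\beta n + \beta m$. Then, using Lemma \ref{laa} applied to $f^{k+1}$ (as in the reduction in its proof), choose an occurrence of $f^{k+1}(a)$ at block-position congruent to $(\alpha - \gamma_\beta)n$ modulo $d$. The sum of the two block-positions is then congruent to $\alpha n + \beta m$. With this in hand, both cases yield every residue, which proves the lemma.
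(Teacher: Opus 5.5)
Your proposal is correct and follows the paper's proof, which is exactly the case split on whether $bb$ occurs, combining Lemma \ref{laabb} and Lemma \ref{laab}. Your point that one needs the stronger $\alpha n+\beta m$ form established in the proof of Lemma \ref{laab}, not its stated $\alpha n$ form, together with the covering argument via $j(nA+mB)$ from Lemma \ref{laabb}, is exactly what the paper's one-line ``combination'' implicitly relies on.
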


\begin{proof}
    Combination of Lemma $\ref{laabb}$ and Lemma $\ref{laab}$.
\end{proof}

\begin{lemma}
\label{LBB}
Let $f$ be a binary morphism satisfying Conditions 1 and 2. 
Suppose in addition that $f^{\omega}(a)$ contains occurrences of $bb$. Then for each $d\in \mathbb{N}$ coprime with $\text{tr}(M_f)$, 
 block-positions of proper occurrenсes of $f(a)$ in $f^{\omega}(a)$ run through all residues modulo $d$.
\end{lemma}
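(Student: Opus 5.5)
The plan is to split according to whether $f^{\omega}(a)$ also contains $aa$, and to mirror the proofs of Lemmas \ref{laabb} and \ref{laab} with the roles of $a$ and $b$ exchanged. If $f^{\omega}(a)$ contains both $aa$ and $bb$, Lemma \ref{laabb} already gives the claim. So the real work is the case where $bb$ occurs and $aa$ does not. In that case every occurrence of $a$ in $f^{\omega}(a)$, except possibly the first letter, is preceded by $b$ and followed by $b$.

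\textbf{Step 1: occurrences of $f^{k+1}(b)$.} I will upgrade Lemma \ref{lbb} from $f(b)$ to $f^{k+1}(b)$ for any fixed $k$, exactly as in the second paragraph of the proof of Lemma \ref{laabb}.
\begin{itemize}
\item By Lemma \ref{lbb}, $f(b)$ has proper occurrences at block-positions $\alpha m + C$ modulo $d$, for every $\alpha$.
\item Applying $f^{k}$ to a prefix of $f^{\omega}(a)$ containing such occurrences multiplies block-positions by $(nA+mB)^{k}$.
\item Since $(nA+mB)^k$ is coprime with $d$, Lemma \ref{coprime} shows these positions are again $\alpha m + C_k$ modulo $d$, for every $\alpha$, with a constant $C_k$.
\end{itemize}
Thus $f^{k+1}(b)$ has proper occurrences at block-positions congruent to $\alpha m + C_k$ modulo $d$, for every $\alpha \in \{0,\dots,d-1\}$.

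\textbf{Step 2: occurrences of $f(a)$ inside $f^{k+1}(b)$.} Choose $k$ so large that $f^k(b)$ contains at least $d$ occurrences of $a$; this is possible by primitivity, or simply because $a$ occurs in the iterates. By Condition 2, $f^k(b)$ begins with $b$, and since $aa$ never occurs, the $a$'s of $f^k(b)$ are separated by nonempty runs of $b$'s.

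In $f^{k+1}(b)$, the proper occurrence of $f(a)$ coming from the $\beta$-th letter $a$ of $f^k(b)$ (counting from $0$) has block-position $\beta n + \gamma_\beta m$. Here $\gamma_\beta$ is the number of $b$'s preceding that $a$ in $f^k(b)$. This is the same bookkeeping as in Lemma \ref{laab}: a proper occurrence of $f(a)$ shifts the next block-position by $n$, and $f(b)$ shifts it by $m$.

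\textbf{Step 3: combining.} Fix $\alpha, \beta$. Take the occurrence of $f^{k+1}(b)$ from Step 1 at block-position $\equiv (\alpha-\gamma_\beta)m + C_k \pmod d$. Inside it, take the $\beta$-th occurrence of $f(a)$ from Step 2. Its block-position in $f^{\omega}(a)$ is then
\[
\alpha m + \beta n + C_k \pmod d .
\]
Since $(m,n)=1$, the numbers $\alpha m+\beta n$ with $\alpha,\beta\in\{0,\dots,d-1\}$ cover all residues modulo $d$. Alternatively, as in Lemma \ref{laabb}, the multiples $\alpha(nA+mB)$ already do so, and these have the form $\alpha' m + \beta' n$ modulo $d$. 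Hence the block-positions of proper occurrences of $f(a)$ run through all residues.

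\textbf{Main obstacle.} I expect the delicate point to be Step 1. One must check that Lemma \ref{lbb}'s constant $C$ survives iteration in the form "$\alpha m + \text{const}$". One must also check that occurrences obtained by applying $f^k$ to proper occurrences remain proper; they do, because the image of a prefix $f(u)$ is the prefix $f^{k+1}(u)$. A secondary care point is the first letter of $f^{\omega}(a)$: we never use it, since we work inside occurrences of $f^{k+1}(b)$, and $f^k(b)$ starts with $b$.
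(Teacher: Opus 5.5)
Your proof is correct, but it takes a different route from the paper. The paper argues by symmetry. It applies Lemma~\ref{LAA} to the other fixed point $f^{\omega}(b)$ to get proper occurrences of $f(b)$ at \emph{all} residues modulo $d$ inside some prefix $f^k(b)$. It then applies $f$ once, which multiplies block-positions by $nA+mB$ and is harmless by Lemma~\ref{coprime}. Next it takes the first proper occurrence of $f(a)$ inside each resulting $f^2(b)$, and finally embeds $f^{k+1}(b)$ properly into $f^{\omega}(a)$; these last two steps are just constant shifts. You instead start from the weaker Lemma~\ref{lbb}, which only gives the progression $\alpha m + C$, and recover the missing direction by hand. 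Inside $f^{k+1}(b)$, the image of the $\beta$-th $a$ of $f^k(b)$ sits at relative block-position $\beta n + \gamma_\beta m$, so you reach every value $\alpha m + \beta n + C_k$, and $\gcd(m,n)=1$ finishes. The paper's version is a few lines once Lemma~\ref{LAA} is available, but that lemma rests on both Lemma~\ref{laabb} and Lemma~\ref{laab}, applied to the letter-swapped morphism; yours needs only Lemma~\ref{lbb} plus elementary bookkeeping. Your Steps 1--3 never actually use the absence of $aa$. The $\beta$-th $a$ of $f^k(b)$ always has exactly $\beta$ letters $a$ before it, whatever the run structure of $f^k(b)$, and you only need at least $d$ occurrences of $a$ in $f^k(b)$. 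So the case split, the remark about runs of $b$'s, and the appeal to Lemma~\ref{laabb} can all be dropped. Your argument proves the lemma uniformly, and it even yields the conclusion of Lemma~\ref{laabb} whenever $bb$ occurs.
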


\begin{proof}
    By Lemma \ref{LAA} applied to the word $f^{\omega}(b)$ we have that block-positions of proper occurrenсes of $f(b)$ in $f^{\omega}(b)$ run through all residues modulo $d$. Consider $k \in \mathbb{N}$, such that the prefix  $f^k(b)$ contains proper occurrences of $f(b)$ at block-positions running through all residues modulo $d$. Then block-positions of proper occurrences of $f^2(b)$ in $f^{k+1}(b)$ also run through all residues modulo $d$. Indeed, $nA+mB$ is relatively prime with $d$, and for a proper occurrence of $f(b)$ in $f^k(b)$ with a block-position $i$ we have a proper occurrence of $f^2(b)$ in $f^{k+1}(b)$ with a block-position $i(nA+mB)$. So, by Lemma \ref{coprime}, we have $\{i(nA+mB) \mid i\in\{0, 1\dots, d-1\}\} = \{0, 1\dots, d-1\}$. 
    Now, block-positions of proper occurrences of $f(a)$ in $f^{k+1}(b)$ also run through all residues modulo $d$: we can consider for example the first proper occurrences of $f(a)$ in $f^2(b)$.
     We can express $f^{\omega}(a)$ in the form $f^{\omega}(a) = f^{k+1}(w)f^{k+1}(b)\cdots$ for some prefix $w$ of  $f^{\omega}(a)$, and in this occurrence of $f^{k+1}(b)$ we can consider all proper occurrences of $f(a)$ at block-positions running through all residues modulo $d$. Thus block-positions of these occurrences of $f(a)$ in $f^{\omega}(a)$ also run through all residues modulo $d$.
\end{proof}

\begin{lemma} 
\label{l1} 
Let $f$ be a binary morphism satisfying Conditions 1 and 2. Then for each $d\in \mathbb{N}$ coprime with $\text{tr}(M_f)$, 
 block-positions of proper occurrenсes of $f(a)$ in $f^{\omega}(a)$ run through all residues modulo $d$.
\end{lemma}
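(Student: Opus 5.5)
The plan is to split into cases according to which squares of letters occur in $f^{\omega}(a)$. If $f^{\omega}(a)$ contains an occurrence of $aa$, Lemma \ref{LAA} gives the statement directly. If it contains an occurrence of $bb$, Lemma \ref{LBB} gives it. So it only remains to treat the case where $f^{\omega}(a)$ contains neither $aa$ nor $bb$.

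In that case $f^{\omega}(a)$ is an infinite word over $\{a,b\}$ with no two equal consecutive letters, so $f^{\omega}(a)=(ab)^{\omega}$. Because of Condition 2, the blocks $f(a)=ax$ and $f(b)=by$ are factors of $(ab)^\omega$ starting with $a$ and with $b$ respectively. The word $f(ab)$ must also be a factor, and $f(a)$ must be followed by a word starting with $b$ inside $f(ab)$. Hence $f(a)$ ends with $a$, so $f(a)=(ab)^{k}a$ with $k\ge 1$ since $x\ne\varepsilon$. Similarly $f(b)=(ba)^{l}b$.

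In this case the block-lengths are $|f(a)|=2k+1$ and $|f(b)|=2l+1$. Condition 1 requires $|f(a)|_a/|f(a)|_b=|f(b)|_a/|f(b)|_b=A/B$. But $(k+1)/k=l/(l+1)$ is impossible for positive integers, since the left side is $>1$ and the right side is $<1$. So this case cannot occur under Condition 1. I expect this to be the main point to get right: one must check that Condition 1 really forces $M_f$ to have rank one with proportional columns, which is exactly how it is stated. Hence at least one of $aa$, $bb$ occurs, and the lemma follows from Lemmas \ref{LAA} and \ref{LBB}.

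If instead one wished to avoid relying on rank one, the fallback is to compute block-positions directly in the periodic word. Proper occurrences of $f(a)$ sit at block-positions $j\cdot BL_{f(ab)}$ with $BL_{f(ab)}=n+m$. One would then need $\gcd(n+m,d)=1$, which is not automatic. For that reason the contradiction argument above is the route I would take.
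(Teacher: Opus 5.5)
Your proof is correct and uses the same case split as the paper: Lemma~\ref{LAA} when $aa$ occurs, Lemma~\ref{LBB} when $bb$ occurs, and a separate treatment of the case $f^{\omega}(a)=(ab)^{\omega}$. The only difference is in that last case.

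The paper asserts that such a morphism has the form $f(a)=(ab)^k$, $f(b)=(ab)^t$. It then checks directly that the prefixes $(f(a)f(b))^r$ have block-lengths running through all residues modulo $d$, using $\gcd(k+t,d)=1$. You instead show that Condition 2 forces $f(a)=(ab)^k a$ and $f(b)=(ba)^l b$. Then $|f(a)|_a>|f(a)|_b$ while $|f(b)|_a<|f(b)|_b$, or equivalently $\det M_f=k+l+1\neq 0$. This is incompatible with the rank-one matrix of Condition 1, so the case is empty.

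Your treatment is the more accurate one. The block $f(b)=(ab)^t$ used in the paper begins with $a$, so Condition 2 excludes it. The paper's computation therefore concerns morphisms outside the lemma's hypotheses, and under Conditions 1 and 2 the case is indeed vacuous.

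One small correction to your final paragraph. In the only setting where that direct computation makes sense (Condition 1 together with $f^{\omega}(a)=(ab)^{\omega}$), equal letter frequencies force $A=B$. Hence $\text{tr}(M_f)=A(n+m)$, and $\gcd(n+m,d)=1$ does follow automatically from $\gcd(\text{tr}(M_f),d)=1$. Since you do not use that route, this does not affect your argument.
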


\begin{proof}
    If the word $f^{\omega}(a)$ has an occurrence of $aa$, we use Lemma \ref{LAA}; if it has an occurrence of $bb$, we use Lemma \ref{LBB}. Suppose that $f^{\omega}(a)$ does not have occurrences of $aa$ and $bb$. Then $f^{\omega}(a) = (ab)^{\omega}$. It is easy to see that each morphism
     $f$ satisfying Condition 1 and generating the word $(ab)^{\omega}$ must be of the form $$\begin{cases}
        f(a) = (ab)^k \\
        f(b) = (ab)^t.
    \end{cases}$$ The matrix of this morphism is $M_f = \begin{pmatrix}
k & t\\
k & t
\end{pmatrix}$. By the conditions of the lemma we have that $d$ is relatively prime with $\text{tr}(M_f)=k+t$. For each integer $r$ we can express $f^{\omega}(a)$ as $$f^{\omega}(a) = (f(a)f(b))^{\omega} = (f(a)f(b))^rf(a)\cdots$$ The block-lengths of the words $(f(a)f(b))^r =(ab)^{r(k+t)}$  
run through all residues modulo $d$ when $r = 0,\,1,\,\dots,\,d-1$.
Lemma is proved.    
\end{proof}

We are now ready to prove the main proposition of this subsection:

\begin{proposition}
\label{tBPos}
Let $f$ be a binary morphism satisfying Conditions 1 and 2. Given $d\in \mathbb{N}$ coprime with $\text{tr}(M_f)$, 
suppose that for some $t\in \mathbb{N}$ a factor $w$ of the word $f^{\omega}(a)$ has an occurrence at a position divisible by  $(A+B)(nA + mB)^{t-1}$ (i.e., for at least one occurrence of $w$ a $t$-block-position is defined). Then $t$-block-positions of occurrences of $w$ in $f^{\omega}(a)$ run through all residues modulo $d$. 
\end{proposition}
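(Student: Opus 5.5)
The plan is to reduce the statement to Lemma \ref{l1} by pushing occurrences through a power of $f$. Write $L=nA+mB=\text{tr}(M_f)$. This is the factor by which $f$ multiplies the lengths of all images: for any word $u$, $|f(u)|=(|u|_a+|u|_b)\cdot$(appropriate multiple), and more precisely $|f^2(u)|=L|f(u)|$. A block of $f^t$ has length $(A+B)L^{t-1}$ times the block-length of its preimage.

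\textbf{Step 1: represent the given occurrence via a block of $f^t$.} Let $w$ occur at position $i=j(A+B)L^{t-1}$, so its $t$-block-position is $j$. The word $f^{\omega}(a)=f^t(f^{\omega}(a))$ factorizes into $f^t$-blocks $f^t(a)$ and $f^t(b)$. Position $i$ equals $|f^t(p)|$ for the unique prefix $p$ with $|f(p)|/(A+B)=j$. Indeed, $|f^t(p)|=L^{t-1}|f(p)|$, and $|f(p)|$ increases in steps $nA+nB$ or $mA+mB$, so it is easier to phrase this through $f(p)$ directly.

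Since $f^{\omega}(a)$ is uniformly recurrent (the morphism is primitive, or we handle $(ab)^\omega$ separately), $w$ lies inside some $f^{t}(c_1\cdots c_s)$ starting at its beginning. Hence, for large $K$, the occurrence of $w$ sits at $t$-block-offset $j_0$ inside the prefix $f^{K}(a)$, at its original position.

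\textbf{Step 2: transport $f^K(a)$ to all residues.} By Lemma \ref{l1}, proper occurrences of $f(a)$ have block-positions running through all residues $\beta$ modulo $d$. Applying $f^{K-1}$ to a proper occurrence $f(p)f(a)$ gives a proper occurrence of $f^K(a)$ at position $L^{K-1}|f(p)|=L^{K-1}(A+B)\beta$. Its $t$-block-position, for $K\ge t$, is $L^{K-t}\beta$.

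Inside that copy, $w$ occurs at $t$-block-position $L^{K-t}\beta+j_0$. Because $\gcd(L,d)=1$, Lemma \ref{coprime} shows that $L^{K-t}\beta$ runs through all residues modulo $d$ as $\beta$ does. Therefore the $t$-block-positions of occurrences of $w$ cover all residues modulo $d$.

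\textbf{Main obstacle.} The key is Step 1: the chosen occurrence of $w$ must lie inside a prefix $f^K(a)$ (or inside an image $f^{K}(c)$) with a well-defined $t$-block-offset. The simplest choice is $K$ large enough that $f^K(a)$ contains the given position $i+|w|$, since $f^K(a)$ is a prefix of $f^{\omega}(a)$. Then $j_0=j$ is exactly the original $t$-block-position, so no recurrence argument is needed. The prefix option therefore suffices, and the remainder is a congruence bookkeeping check with Lemma \ref{coprime}.
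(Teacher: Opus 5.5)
Your final argument (the ``prefix option'') is correct and essentially matches the paper's proof: you place the given occurrence of $w$ inside a long prefix $f^K(a)$ with $K\ge t$, push a proper occurrence $f(p)f(a)$ at block-position $\beta$ through $f^{K-1}$ to get a copy of $f^K(a)$ at $t$-block-position $L^{K-t}\beta$, and use $\gcd(L,d)=1$, just as the paper does with $f^{k+t-1}(a)$ in place of $f^K(a)$. Drop the first paragraph of Step 1, though: the claim that $i=|f^t(p)|$ for some prefix $p$ is false in general, because a multiple of $(A+B)L^{t-1}$ need not be a boundary between $f^t$-blocks (whose lengths are $n(A+B)L^{t-1}$ and $m(A+B)L^{t-1}$), and the appeal to uniform recurrence is not needed.
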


\begin{proof}

We first note that it is enough to prove the statement for words of the form  $f^{k}(a)$, since each factor $w$ of $f^{\omega}(a)$ is a factor of $f^{k}(a)$ for some $k$. So, if $t$-block-positions of occurrences of $f^{k}(a)$ run through all residues modulo $d$, then this also holds for $t$-block-positions of occurrence of $w$.

By Lemma $\ref{l1}$, block-positions of occurrences of the word $f(a)$ run through all residues modulo $d$. Consider $k\in \mathbb{N}$. If there is a proper occurrence of $f(a)$ at a block-position $i$ as $f^{\omega}(a) = f(u)\mathbf{f(a)}\cdots$, then we have an occurrence of $f^{k+t-1}(a)$ as $f^{\omega}(a) = f^{k+t-1}(u)\mathbf{f^{k+t-1}(a)}\cdots$, i.e., at a block-position $i(nA + mB)^{k+t-2}$, and at a $t$-block-position $i(nA + mB)^{k-1}$. Since the value $i$ runs through all residues modulo  $d$, then, by Lemma \ref{coprime}, the value $i(nA + mB)^{k-1}$ does as well, as $nA + mB$ is relatively prime with $d$. The word $f^k(a)$ is a prefix of $f^{k+t-1}(a)$, so $t$-block-positions of $f^{k}(a)$ run through all residues modulo $d$.
\end{proof}



\subsubsection{Necessary and sufficient condition for abelian periodicity in the case $\theta_2 = 0$}


The following is a key lemma for the proof of Theorem \ref{th:main} in the case $\theta_2 = 0$:

\begin{lemma}
\label{lemma:rem_d2}
Let $f$ be a binary morphism satisfying Conditions 1 and 2. Suppose that $f^{\omega}(a)$ is an  ultimately abelian periodic word with abelian period $s = d(A + B)(nA + mB)^{t-1},$ where $t$ and $d$ are integers such that $d$ is relatively prime with $(nA + mB)$. Consider a factorization of $f^{\omega}(a)$ of the form $$f^{\omega}(a) = v\cdot\prod\limits_{i = 0}^{\infty} u_i,$$ where $|u_i| = (A + B)(nA + mB)^{t-1}$, for each $i$, and where $v$ is a preperiod. Then for each $j \in \mathbb{N}$ the word $u_j$ is abelian equivalent to $u_0$.
\end{lemma}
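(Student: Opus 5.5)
Let $L=(A+B)(nA+mB)^{t-1}$, so $s=dL$. The word $f^{\omega}(a)=f^{t}(f^{\omega}(a))$ is a concatenation of blocks $f^t(a)$ and $f^t(b)$. Under Condition 1 these blocks are abelian equivalent up to scaling: their Parikh vectors are $n(nA+mB)^{t-1}(A,B)$ and $m(nA+mB)^{t-1}(A,B)$. For the same reason, any concatenation of $t$-level blocks whose total length is a multiple of $L$ has Parikh vector proportional to $(A,B)$ and determined by its length.

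Write $w_i=u_iu_{i+1}\cdots u_{i+d-1}$. The abelian period $s$ gives $w_{jd}\sim_{ab} w_0$ for all $j$; I will take this, with the given preperiod $v$, as the hypothesis. The plan has three steps.

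\textbf{Step 1: reduce to consecutive windows.} I first show that $w_i\sim_{ab}w_{i+1}$ for every $i$ large enough. Since $w_{i+1}$ is $w_i$ with $u_i$ removed and $u_{i+d}$ appended, this gives $u_{i+d}\sim_{ab}u_i$ for all large $i$. Then, once every $w_i$ has the same Parikh vector, comparing $w_i$ and $w_{i+1}$ yields the claim. The preperiod can be absorbed, or handled by choosing $v$ suitably.

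\textbf{Step 2: show all windows $w_i$ are abelian equivalent.} The set of Parikh vectors $\{\psi(w_i)\}$ depends only on the residue of $i$ modulo $d$, because $w_{i+d}$ is obtained from $w_i$ by sliding $d$ pieces of length $L$. The idea is to use Proposition~\ref{tBPos} to "shift" phases. Every long enough factor $z$ of $f^{\omega}(a)$ that occurs at a $t$-block-position also occurs at $t$-block-positions in every residue class modulo $d$. I will choose $z$ so that it begins at the start of a window $w_{i_0}$ aligned with the periodic factorization and is long, so $z$ is a concatenation of windows in phase $i_0$. Now take an occurrence of $z$ at a $t$-block-position congruent to an arbitrary residue, far in the word, past $v$. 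Its position relative to the factorization $v\prod u_i$ depends on $|v|$ modulo $L$. If $|v|\not\equiv 0 \pmod{L}$ the alignment is off, and I handle this by also noting that $t$-block-positions are positions divisible by $L$; so the residue of the offset modulo $L$ is fixed and only the residue modulo $d$ of the index changes. Consequently the same factor $z$, cut into length-$L$ pieces in the same way, appears with its pieces aligned to windows of phase $i_0+r$ for any $r$. Inside the abelian periodic tail, each length-$s$ window aligned with the period at phase $r$ has Parikh vector $P_r$. Cutting $z$ into consecutive length-$s$ windows starting at a fixed offset gives the same multiset of Parikh vectors whichever occurrence we look at. Hence $P_r=P_{r'}$ for all $r,r'$, since both equal the Parikh vectors of $z$'s subwindows.

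\textbf{Step 3: conclude.} From $P_0=P_1$ we get $\psi(u_0)=\psi(u_d)$. More generally $w_i\sim_{ab}w_{i+1}$ gives $u_i\sim_{ab}u_{i+d}$, and combined with the phase-shifting argument applied to single pieces this yields $u_j\sim_{ab}u_0$. For the last step I would directly compare $u_j$ with $u_0$ via occurrences of a long prefix $f^K(a)$, which contains $u_0$ at a $t$-block-aligned position, at $t$-block-positions of every residue modulo $d$.

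The main obstacle is Step 2: matching the occurrences given by Proposition~\ref{tBPos} with the fixed factorization. This requires that $|v|$ be compatible with $t$-block-positions, i.e. $|v|\equiv 0 \pmod{L}$. If it is not, I would first replace $v$ by a longer preperiod (extending it by whole pieces changes nothing) and argue separately on the offset $|v| \bmod L$, using Condition 2 and the fact that shifting by a $t$-level block preserves everything modulo $L$.
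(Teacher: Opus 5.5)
\textbf{There is a genuine gap, in Step 2.} You assume that each phase $r$ has one Parikh vector $P_r$ shared by all windows $w_{kd+r}$, i.e.\ that $\psi(w_i)$ depends only on $i \bmod d$. The justification ``sliding $d$ pieces'' does not give this, and the hypothesis gives it only for $r=0$. For $r\neq 0$ we have
$$\psi(w_{kd+r})=P_0-\psi(u_{kd}\cdots u_{kd+r-1})+\psi(u_{(k+1)d}\cdots u_{(k+1)d+r-1}),$$
which can vary with $k$ in an abelian periodic word. For example, take $d=2$ and pieces $aa,bb,bb,aa,aa,bb,\dots$: every $w_{2k}$ is abelian equivalent to $aabb$, yet $w_1=bbbb$ and $w_3=aaaa$. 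Without a well-defined $P_r$, moving your factor $z$ to an occurrence of another phase only shows that the particular windows inside that new occurrence have Parikh vector $P_0$. Proposition~\ref{tBPos} gives no control over where that occurrence lies, so nothing follows for the other windows of that phase. You are moving windows whose content is known to uncontrolled places, which is the wrong direction. Separately, Step 1 is also incomplete: even if all $\psi(w_i)$ are equal you only get $u_i\sim_{ab}u_{i+d}$, not $u_j\sim_{ab}u_0$. The same example shows this, with $u_{2k}=aa$, $u_{2k+1}=bb$.

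\textbf{The fix, which is the paper's proof, reverses the direction.} Apply Proposition~\ref{tBPos} to a \emph{prefix} $z=v\,u_0u_1\cdots u_J$; it trivially has $t$-block-position $0$. Set $L=(A+B)(nA+mB)^{t-1}$. An occurrence of $z$ at $t$-block-position $p$ starts at $pL$, so its copy of $u_i$ lies exactly on $u_{i+p}$. Hence $u_{i+p}=u_i$ for $0\le i\le J$, whatever $|v| \bmod L$ is.
\begin{itemize}
\item The alignment problem you single out as the main obstacle therefore does not arise. Your remedy would not have helped anyway, since lengthening $v$ by whole pieces does not change $|v|\bmod L$.
\item Take $J=dk$ and $p=dl-1$. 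The unknown segment $u_1\cdots u_{dk}$ then coincides with $u_{dl}\cdots u_{dl+dk-1}$. That segment is $k$ full periods, so it is abelian equivalent to $u_0\cdots u_{dk-1}$. Cancelling the common part gives $u_{dk}\sim_{ab}u_0$.
\item Take $J=j$ and $p=dl-j$. This gives $u_j=u_{dl}\sim_{ab}u_0$.
\end{itemize}
No statement about all windows $w_i$, and no per-phase vector $P_r$, is needed.
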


In other words, the lemma says that if a word generated by a morphism satisfying Conditions 1 and 2 has an abelian period divisible by $(A + B)(nA + mB)^{t-1}$, then the number $(A + B)(nA + mB)^{t-1}$ is also an abelian period.


\begin{proof}
    First, consider $j = kd$ for some $k\in \mathbb{N}$. Assume that $u_{kd}$ is not abelian equivalent to $u_0$. Let $k_0$ be the minimal $k$ such that $u_{kd}$ is not abelian equivalent to $u_0$. We let $w$ denote the following prefix of $f^{\omega}(a)$: $$w = v\cdot\prod\limits_{i = 0}^{dk_0} u_i.$$ For the prefix occurrence of $w$ its $t$-block-position is defined, so, by Lemma \ref{tBPos}, there exists an occurrence of $w$ in $f^{\omega}(a)$ at a $t$-block-position congruent to $d-1$ modulo $d$, i.e., $$w = v\cdot\prod\limits_{i = dl - 1}^{dl - 1 + dk_0} u_i$$ for some $l\in \mathbb{N}$. As $d(A + B)(nA + mB)^{t-1}$ is an abelian period of $f^{\omega}(a)$, the words $\prod\limits_{i = 0}^{dk_0 - 1} u_i$ and $\prod\limits_{i = dl}^{dl - 1 + dk_0} u_i$ are abelian equivalent. This implies that $u_{dk_0} \sim_{ab} u_{dl-1}$. We have $u_{dl-1} = u_0$, as these words are factors of $w$ at the same positions. Thus $u_{dk_0} \sim_{ab} u_0$. A contradiction.

    Now, consider an arbitrary $j\in \mathbb{N}$. We prove that $u_j \sim_{ab} u_0$. Let us denote $$w = v\cdot\prod\limits_{i = 0}^{j} u_i.$$ The word $w$ is a prefix of $f^{\omega}(a)$. For a prefix occurence of $w$ in $f^{\omega}(a)$, its $t$-block-position is well-defined, so, by Lemma \ref{tBPos}, there exists an occurrence of $w$ in $f^{\omega}(a)$ at a $t$-block-position congruent to $-j$ modulo $d$, i.e. $$w = v\cdot\prod\limits_{i = dl - j}^{dl - j + j} u_i = v\cdot\prod\limits_{i = dl - j}^{dl} u_i$$ for some $l\in \mathbb{N}$. That gives us $u_j = u_{dl}$, as these are both suffixes of the word $w$ of the same length. Thus, as $u_{dl} \sim_{ab} u_0$, we obtain $u_j \sim_{ab} u_0$.
\end{proof}

The following is a base for the proof of  Theorem \ref{th:main} in the case $\theta_2=0$:

\begin{theorem}
\label{CONJ}
    Let $f$ be a binary morphism with matrix 
$M_f = \begin{pmatrix}
n A & m A\\
n B & m B
\end{pmatrix}$, where $m,\,n \in \mathbb{N}$ and $(m,\, n) = 1$. Suppose that $f^{\omega}(a)$ is eventually abelian periodic. Then it has an abelian period $$\text{gcd}(|f^K(a)|,\;|f^K(b)|) = (A + B)(nA + mB)^{K-1}$$ for some $K \in \mathbb{N}$. 
\end{theorem}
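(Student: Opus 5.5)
We first reduce to Condition 2, then factor the abelian period $s$ as $s = d\cdot(A+B)(nA+mB)^{K-1}$ with $d$ coprime to $nA+mB$, and apply Lemma \ref{lemma:rem_d2}.

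\textbf{Reduction to Condition 2.} If $f$ is not of the form $f(a)=ax$, $f(b)=by$, we pass to a power $g=f^j$. Since $f$ is prolongable on $a$, $f^j(a)$ always starts with $a$. If $f(b)$ starts with $b$, take $j=1$. If $f(b)$ starts with $a$, then $f^j(b)$ starts with $a$ for all $j$. In that case we replace Condition 2 by the weaker requirement actually used in the block-position lemmas: that $f^\omega(a)$ factors uniquely into blocks, and that first letters propagate (for instance, $f^{k}(b)$ contains $f(a)$ in a controlled position). We will check that Lemmas \ref{laa}--\ref{l1} go through with this requirement.

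We also need the formula for $g$. It satisfies Condition 1 with the same $n,m$: $M_g = M_f^j = (nA+mB)^{j-1}M_f$, so its constants are $A'=A(nA+mB)^{j-1}$ and $B'=B(nA+mB)^{j-1}$. Since $g^\omega(a)=f^\omega(a)$, and the quantity $(A+B)(nA+mB)^{K-1}$ for $f$ equals the corresponding quantity for $g$ at a suitable level, a conclusion for $g$ transfers to $f$. The identity $\gcd(|f^K(a)|,|f^K(b)|)=(A+B)(nA+mB)^{K-1}$ follows from $|f^K(a)|=n(A+B)(nA+mB)^{K-1}$, $|f^K(b)|=m(A+B)(nA+mB)^{K-1}$ and $(m,n)=1$.

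\textbf{Main step.} Let $s$ be an abelian period with preperiod $r$. Any positive multiple of an abelian period is again an abelian period with the same preperiod. So we may replace $s$ by $s(A+B)$ and assume $(A+B)\mid s$. Write $s/(A+B) = P\cdot d$, where $P$ collects the prime factors shared with $nA+mB$ and $\gcd(d,nA+mB)=1$. Choose $K$ with $P \mid (nA+mB)^{K-1}$, and multiply $s$ once more so that
\[
s = d\,(A+B)(nA+mB)^{K-1}.
\]
This is exactly the hypothesis of Lemma \ref{lemma:rem_d2} with $t=K$. The lemma then yields that $(A+B)(nA+mB)^{K-1}$ is itself an abelian period, with the same preperiod $v$ of length $r$.

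One caveat is the exact form of the preperiod in Lemma \ref{lemma:rem_d2}. Its proof uses the $t$-block-position of prefixes $v\prod u_i$, which is defined at position $0$ for any $v$, so an arbitrary preperiod $v$ is allowed. This is consistent with Proposition \ref{tBPos}, which only requires one occurrence with a defined $t$-block-position.

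\textbf{Main obstacle.} The main obstacle is the reduction when $f(b)$ begins with $a$, since then no power of $f$ satisfies Condition 2 literally. I would handle it in one of two ways. The first is to verify that the proofs of Lemmas \ref{laa}--\ref{l1} and Proposition \ref{tBPos} only use that $f(a)$ starts with $a$ and that blocks tile $f^\omega(a)$. The second is to conjugate $f$ by moving a common prefix, replacing $f$ with a morphism whose fixed point is a shift of $f^\omega(a)$. Such a shift preserves abelian periods, changing only the preperiod. The matrix is unchanged under conjugation, so Condition 1 and the value of $K$ are preserved.
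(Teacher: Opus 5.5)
Your main step, enlarging the period to $s=d(A+B)(nA+mB)^{K-1}$ with $\gcd(d,nA+mB)=1$ and applying Lemma~\ref{lemma:rem_d2}, is exactly the paper's argument. The gap is in the reduction to Condition~2. You correctly flag it as the obstacle but do not close it, and neither of your two routes works as stated.

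\textbf{First route.} Lemmas~\ref{lbb} and~\ref{LBB} use more than the fact that $f(a)$ begins with $a$. They apply Lemmas~\ref{laa} and~\ref{LAA} to the second fixed point $f^{\omega}(b)$, which exists only when $f(b)$ begins with $b$. This is not a side case. Take $f(a)=abb$, $f(b)=ababbb$, which satisfies Condition~1 with $n=1$, $m=2$, $A=1$, $B=2$. Here $f^{\omega}(a)$ contains $bb$ but no $aa$, so Lemma~\ref{l1} must go through Lemma~\ref{LBB}, and you would need a new argument there.

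\textbf{Second route.} This is the paper's idea, but two ingredients are missing.
\begin{enumerate}
\item Moving the common prefix may swap the first letters. In the same example $u=ab$, and you get $\tilde f(a)=bab$, $\tilde f(b)=abbbab$, which is prolongable on neither letter. So you must pass to $\tilde f^{2}$, as the paper does. The degenerate case where the first letters never separate must also be treated directly: then $f(a)$ and $f(b)$ are powers of one word and $f^{\omega}(a)$ is periodic.
\item When $x=f^{\omega}(a)$ is aperiodic, the conjugate's fixed point is \emph{not} a shift of $x$. Let $T$ be the shift. From $f(w)u=u\tilde f(w)$ you get $x=u\tilde f(x)$. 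Suppose $\tilde x=T^kx$, and write $x=p\,\tilde x$ with $|p|=k$. Then $T^{|u|}x=\tilde f(x)=\tilde f(p)\,\tilde x$, that is, $T^{|u|+|\tilde f(p)|}x=T^kx$. Since $f$ is nonerasing and $u$ is nonempty, $|u|+|\tilde f(p)|>k$, which forces $x$ to be eventually periodic.
\end{enumerate}

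So the step ``a shift only changes the preperiod'' is unavailable. You need a genuine argument that an abelian period transfers in both directions: from $x$ to the fixed point of $\tilde f^{2}$, so that Lemma~\ref{lemma:rem_d2} applies, and from there back to $x$. One option is the relation $f(u)u\,\tilde f^{2}(w)=f^{2}(w)\,f(u)u$ used in the paper. Another is to note that both words lie in the minimal subshift of the primitive morphism $f$. There every element is a limit of shifts $T^{n_k}x$ with $n_k\to\infty$. Choosing the $n_k$ in a fixed residue class modulo the period, every element inherits each abelian period of $x$.
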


\begin{proof}
First note that we can assume without loss of generality that $f$ is of the form \begin{equation}\label{eq_f_ab}\begin{cases}
    f(a) = ax \\
    f(b) = by.
\end{cases}\end{equation} Indeed, if the blocks $f(a)$ and $f(b)$ start with the same letter, then we can consider a cyclic shift and repeat the process until the first letters become distinct. If this does not happen, then the blocks $f(a)$ and $f(b)$ are powers of the same word; in this case the word $f^{\omega}(a)$ is obviously periodic and the statement of the theorem holds. If after some cyclic shift we have a morphism of the form  $$\begin{cases}
    \widetilde{f}(a) = b\widetilde{x} \\
    \widetilde{f}(b) = a\widetilde{y},
\end{cases}$$ consider its square $\widetilde{f}^2$. 

We will now prove that the words $(\widetilde{f}^2)^{\omega}(a)$ and $f^{\omega}(a)$ are either both abelian periodic or both not abelian periodic, and moreover the abelian period is the same if exists.

Let $u$ be the word such that $\widetilde{f}(a)$ and $\widetilde{f}(b)$ are obtained from $f(a)$ and $f(b)$, respectively, by a cyclic shift by $u$ (note that $u$ may be longer than each of the blocks $f(a)$ and $f(b)$). Then for every word $w$ we have $$u\widetilde{f}(w) = f(w)u.$$
Indeed, this is easy to see by induction on $|w|$. For the base, we have $u\widetilde{f}(a) = f(a)u$ and $u\widetilde{f}(b) = f(b)u$.
For induction step, consider $w\in \{a,b\}^*$, $\alpha\in \{a,b\}$. Then $u\widetilde{f}(w\alpha) = u\widetilde{f}(w)\widetilde{f}(\alpha) = f(w)u\widetilde{f}(\alpha) = f(w)f(\alpha)u = f(w\alpha)u$.

Next, for every word $w$ we have \begin{equation}\label{eq:f_tild}f(u)u\widetilde{f}^2(w) = f^2(w)f(u)u.\end{equation}
Indeed, applying the morphism $f$ to both sides of the equality $u\widetilde{f}(w) = f(w)u$, we obtain $f(u)f(\widetilde{f}(w)) = f^2(w)f(u)$. Now, as adding a suffix $u$ to equal words we obtain equal words, we have $f(u)f(\widetilde{f}(w))u = f^2(w)f(u)u$, which implies \eqref{eq:f_tild}. 

Setting $w=a$ and $w=b$, we conclude that the words $\widetilde{f}^2(a)$ and $\widetilde{f}^2(b)$ are obtained from $f^2(a)$ и $f^2(b)$, respectively, by a cyclic shift by $f(u)u$. And hence, the word $f^{\omega}(a) = (f^2)^{\omega}(a)$ is abelian periodic if and only if the word $(\widetilde{f}^2)^{\omega}(a)$ is abelian periodic; and the abelian period is the same.

We have  $$M_{\widetilde{f}^2}=\begin{pmatrix}
n A (nA + mB) & m A (nA + mB)\\
n B (nA + mB) & m B (nA + mB)
\end{pmatrix},$$ so $d$ is relatively prime with $\text{tr}(M_{\widetilde{f}^2})=(nA + mB)^2$ if and only if $d$ is relatively prime with  $\text{tr}(M_{f})$.

So, we suppose that $f$ is of the form \eqref{eq_f_ab}, and the word $f^{\omega}(a)$ is abelian periodic. We let $s$ denote its abelian period. Without loss of generality we set \begin{equation}\label{eq:ab_per}s = d (A + B)(nA + mB)^{t-1},\end{equation} where $t$ and $d$ are integers such that $t$ is relatively prime with $(nA + mB)$. 
Indeed, let $s'$ be an abelian period, we factorize it as $s=s_1s_2$, where $s_2$ is the maximal divisor of $s$ relatively prime with $nA+mB$, and $s_1$ is the remaining part. We then denote $s_2$ by $d$, and multiply  $s_1$ by some number to get a number of the form $(A+B)(nA+mB)^t$ for some integer $t$. We can do is as all prime divisors of $s_1$ are also divisors of  $nA+mB$. Since an abelian period multiplied by any integer is also an abelian period, we can consider a new abelian period $s$ of the form \eqref{eq:ab_per}.
By Lemma \ref{lemma:rem_d2}, we have that $\frac{s}{d} = (A + B)(nA + mB)^{t-1}$ is also an abelian period. 
\end{proof}

\subsubsection{Proof of Theorem \ref{th:main}}

Now we are ready to prove Theorem \ref{th:main}.

\begin{proof}[Proof of Theorem \ref{th:main}]

Let us first consider a primitive morphism $f$. If $\theta_2 \neq 0$, then, by Proposition \ref{utv:utv_Arb_1}, the word  $f^{\omega}(a)$ is abelian periodic if and only if $f$ is of the form $f(a) = a(ba)^k$, $f(b) = b(ab)^m$ for some integers $k$ and $m$.

    Now suppose that $\theta_2 = 0$. This means that $f$ satisfies Condition 1, i.e.
    $M_f = \begin{pmatrix}
n A & m A\\
n B & m B
\end{pmatrix}$ for some integers $A, B, m, n$ such that $(m,\, n) = 1$. By Theorem \ref{CONJ}, if $f^{\omega}(a)$ is abelian periodic (possibly with a preperiod), then there is an abelian period $p = (A + B)(nA + mB)^{K-1}$ for some $K\in \mathbb{N}$. We will assume that $K$ is such that the period $p$ is greater than the preperiod $r$ (if it is smaller, we can multiply several times by ($nA + mB$)). We have $|f^K(a)| = np$, $|f^K(b)| = mp$; consider the factorizations 
$$  f^K(a) = uv_1v_2\cdots v_{n-1}w,\quad   f^K(b) = u'v'_1v'_2\cdots v'_{m-1}w', $$
where $u$ is a preperiod, $|v_i| = |v'_j| = p$, $|uw| = p$, $|u'| = |u|$, $|w'| = |w|$, and $v_i \sim_{ab} v'_j$ for each pair of indices $i,\,j$. There are occurrences of 
$ab$ and $ba$ in the word $f^{\omega}(a)$, so $wu' \sim_{ab} v_1$ and $w'u \sim_{ab} v_1$. If there is an occurrence of $aa$ or $bb$ in $f^{\omega}(a)$, then we have, respectively, $wu \sim_{ab} v_1$ or $w'u' \sim_{ab} v_1$; both give us the factorizations we need.

If there are no occurrences of $aa$ and $bb$ in the word $f^{\omega}(a)$, then $f^{\omega}(a) = (ab)^{\omega}$ and the morphism $f$ is of the form $f(a) = (ab)^{nA},\, f(b) = (ab)^{nB}$ and the statement of the theorem holds trivially.

It remains to consider a non-primitive morphism $f$. A binary morphism prolongable on $a$ is non-primitive if and only if it has one of the following forms:
    \begin{itemize}
        \item $f(a) \in a^+$.
        In this case, $f^{\omega}(a) = a^{\omega}$.
        \item $f(b) = b^+$. In this case, by Theorem 5 and Theorem 6 in \cite{Wh19} (see also \cite{BFR14}), if the word $f^{\omega}(a)$ is not periodic, then the abelian complexity of  $f^{\omega}(a)$ is not bounded. However, the abelian complexity of an abelian periodic word is bounded, which completes the proof of the theorem. \qedhere
    \end{itemize}

\end{proof}

\subsection{Upper bound}
\label{subsection:upper_bound}

We now show the upper bound on the length of period in the case of pure abelian periodicity.

\begin{proof}[Proof of Theorem \ref{th:bound}]

By Theorem $\ref{CONJ}$, if the word $f^{\omega}(a)$ is abelian periodic, then there exists an abelian period equal to $$\text{gcd}(|f^K(a)|,\;|f^K(b)|)$$ for some $K\in \mathbb{N}$. In the case of abelian periodicity without preperiod, this is equivalent to $f^{K}(a)$ and $f^K(b)$ being contatenations of $n$ and $m$ abelian equivalent words of length $\text{gcd}(|f^K(a)|,\;|f^K(b)|)$, respectively. We now obtain an upper bound on $K$.

Consider the words $f^{t}(a)$ и $f^{t}(b)$ for $t = 1,\, 2,\, 3, \dots$. For each $t$, we check whether these words can be factorized into concatenations of $n$ and $m$ words, respectively, that are pairwise abelian equivalent. For that, given an integer $t$, we consider $f^{t}(a)$ (resp., $f^{t}(b)$) as a concatenation of $n$ (resp., $m$) words of equal length: $f^{t}(a) = u_1^tu_2^tu_3^t\dots u_n^t$ (resp., $f^{t}(b) = v_1^tv_2^tv_3^t\dots v_m^t$). Note that the definition of our morphism implies that $|u_i^t| = |v_j^t| = (A+B)(nA+mB)^{t-1}$.

We define a \textit{configuration} to be a tuple $$(x_1,\,x_2,\dots,\,x_{n-1},\, y_1,\,y_2,\,\dots,\,y_{m-1})$$ of length $n + m -2$ corresponding to the factorizations $f^{t}(a) = u_1^tu_2^tu_3^t\dots u_n^t$ and $f^{t}(b) = v_1^tv_2^tv_3^t\dots v_m^t$ in the following sense: $x_i = (1, k)$ if $u_{i+1}$ begins with $f(a)_k$, and $x_i = (2, k)$ if $u_{i+1}$ begins with $f(b)_k$ (if we factorize $f^t(a)$ into the blocks $f(a)$ and $f(b)$ as $f^t(a) = f(f^{t-1}(a))$). Analogously, $y_i = (1, k)$ if $v_{i+1}$ begins with $f(a)_k$, and $y_i = (2, k)$ if $v_{i+1}$ begins with $f(b)_k$ (if we factorize $f^t(b)$ into blocks). 

Consider the sequence of configurations $(c_t)_{t\geq 1}$ corresponding to the $t$-th power of the morphism. First, note that there are only finitely many distinct configurations, namely $(|f(a)| + |f(b)|)^{m+n-2}$. Second, observe that if applying $f$ at some step changes some configuration $\alpha$ to some configuration $\beta$, then applying $f$ always replaces $\alpha$ with $\beta$. In other words, if  $c_t=\alpha$ and $c_{t+1}=\beta$ for some $t$, then for each $j$ such that $c_j=\alpha$ we have $c_{j+1}=\beta$.  Indeed, 
consider a factorization of the word $f^{t}(a) = u_1^tu_2^tu_3^t\dots u_n^t$ into blocks $f(a)$ and $f(b)$. Then its prefix $u_1^tu_2^t\dots u_i^t$ is a concatenation of several complete blocks and a prefix (possibly empty) of another block at the end: $u_1^tu_2^t\cdots u_i^t = f(s_i)\cdot\text{pref}(f(a_i))$ for some word $s_i$ and some letter $a_i$. Now, consider the word $f^{t+1}(a) = u_1^{t+1}u_2^{t+1}u_3^{t+1}\cdots u_n^{t+1}$. 
Consider the first $i$ elements of this factorization; for the length of the corresponding word we have $$|u_1^{t+1}u_2^{t+1}u_3^{t+1}\cdots u_i^{t+1}| = (nA + mB)|u_1^{t}u_2^{t}u_3^{t}\cdots u_i^{t}|,$$ as for each $j$ we have $$|u_j^{t+1}| = \frac{|f^{t+1}(a)|}{n} = \frac{(nA + mB)|f^{t}(a)|}{n} = (nA + mB)|u_j^t|.$$ Moreover, $$u_1^{t+1}u_2^{t+1}u_3^{t+1}\cdots u_i^{t+1} = f^2(s_i)\cdot\text{pref}(f^2(a_i)).$$
Since we have $|f^2(s_i)| = (nA + mB)|f(s_i)|$, then $$|\text{pref}(f^2(a_i))| = |u_1^{t+1}u_2^{t+1}u_3^{t+1}\cdots u_i^{t+1}| - |f^2(s_i)| =$$ 
$$(nA + mB)(|u_1^{t}u_2^{t}u_3^{t}\cdots u_i^{t}| - |f(s_i)|) = (nA + mB)|\text{pref}(f(a_i))|,$$ which is independent of $t$. Therefore, for each $i = 1,\,2,\,\cdots, \,n$, we have that the $i$-th joint in the word $f^{t+1}(a)$ is uniquely determined by the previous configuration $c_t$. The same reasoning applies to $f^{t+1}(b)$.

Thus, iterating $f$ leads to a period in the sequence of configurations. However, a configuration completely determines whether all the factors in the factorizations $f^{t}(a) = u_1^tu_2^tu_3^t\cdots u_n^t$ and $f^{t}(b) = v_1^tv_2^tv_3^t\cdots v_m^t$ are abelian equivalent, which completes the proof.
\end{proof}

\begin{corollary}
\label{collorary:col4}
    Let $\Sigma = \{a,\,b\}$ and let $f: \Sigma^*\rightarrow \Sigma^*$ be a morphism prolongable on $a$. Then the pure abelian periodicity property is decidable for the word $f^{\omega}(a)$. 
\end{corollary}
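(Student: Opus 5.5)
We give an explicit decision procedure, splitting on the structure of $M_f$ as in Theorem \ref{th:main}; every condition to be checked will be finite and effective.

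\textbf{Non-primitive case.} If $f$ is not primitive, then either $f(a)\in a^+$, in which case $f^{\omega}(a)=a^{\omega}$ is trivially purely abelian periodic, or $f(b)\in b^+$. In the latter case Theorem \ref{th:main} says that abelian periodicity is equivalent to periodicity. Periodicity of purely morphic words is decidable by the classical results of Harju--Linna and Pansiot, and if $f^{\omega}(a)=xy^{\omega}$ we can compute $x$ and $y$. Pure abelian periodicity then reduces to a finite check. We test candidate periods $p$ that are multiples of $|y|$ with $p\ge |x|$, and for each such $p$ only finitely many windows of length $p$ need to be compared before the behaviour becomes periodic. It is enough to test $p$ up to a computable bound (for instance $|x|+|y|$ rounded up to a multiple of $|y|$), since any abelian period of an eventually periodic word can be enlarged in this way.

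\textbf{Primitive case with $\theta_2\neq 0$.} We compute $\theta_2$ from $M_f$. When $\theta_2\neq 0$, Proposition \ref{utv:utv_Arb_1} reduces the question to checking the syntactic form $f(a)=a(ba)^k$, $f(b)=b(ab)^m$. If $f$ has this form, the word is $(ab)^{\omega}$ and is purely abelian periodic. Otherwise it is not even eventually abelian periodic. Both outcomes are decidable.

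\textbf{Primitive case with $\theta_2=0$.} Here $M_f$ has the form in Theorem \ref{th:bound}, and $A,B,m,n$ are computable from the entries. By Theorem \ref{CONJ} and the argument in the proof of Theorem \ref{th:bound}, pure abelian periodicity holds iff for some $K$ the words $f^K(a)$ and $f^K(b)$ split into $n$, respectively $m$, consecutive pieces of length $(A+B)(nA+mB)^{K-1}$ that are all abelian equivalent.

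We must justify that pure periodicity with this period is exactly this splitting condition. The word $f^{\omega}(a)=f^K(f^{\omega}(a))$ is a concatenation of blocks $f^K(a)$, $f^K(b)$, whose lengths are multiples of $p=(A+B)(nA+mB)^{K-1}$. Cutting into length-$p$ windows from position $0$ therefore respects block boundaries. Since $b$ occurs (by primitivity), every piece of both blocks actually appears as a window, so the windows are all abelian equivalent iff the splitting condition holds.

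Theorem \ref{th:bound} bounds $K$ by $(|f(a)|+|f(b)|)^{m+n-2}$. We compute $f^t(a)$ and $f^t(b)$ for each $t$ up to this bound and test the splitting condition directly.

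The main obstacle is the non-primitive periodic case. One must be careful that "periodic" there allows a preperiod while we need a \emph{pure} abelian period. The fix is the finite window check above, which uses the fact that once the periodic part is reached, abelian equivalence of consecutive windows is determined by finitely many comparisons.
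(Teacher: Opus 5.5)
Your proposal is correct and is essentially the paper's argument. Theorem~\ref{th:main} leaves either classical periodicity decidability (the non-primitive case and the trivial $(ab)^{\omega}$ case) or the primitive $\theta_2=0$ case, where one tests whether $f^k(a)$ and $f^k(b)$ split into abelian equivalent pieces of length $\gcd(|f^k(a)|,|f^k(b)|)$ for all $k$ up to the bound of Theorem~\ref{th:bound}; organizing by primitivity and $\theta_2$ rather than by periodicity is only a reorganization.

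You are in fact more careful than the paper in deciding \emph{pure} abelian periodicity of an ultimately periodic word $xy^{\omega}$, which the paper leaves implicit. However, your justification for restricting $p$ to a finite range is incomplete: enlarging a period only yields a larger period, not one below your bound. The missing observation is that for every multiple $p$ of $|y|$ with $p\ge |x|$, all windows after the first are abelian equivalent to $y^{p/|y|}$. The first window is abelian equivalent to them iff $x\,\text{pref}_r(y)\sim_{ab} y^{(|x|+r)/|y|}$, where $0\le r<|y|$ and $r\equiv -|x| \pmod{|y|}$. This condition is independent of $p$, so testing a single candidate $p$ suffices.
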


\begin{proof}
By Theorem $\ref{th:main}$, if the word $f^{\omega}(a)$ is abelian periodic, then it is either periodic (which is a decidable property \cite{PER,PERI}) or there exists an abelian period equal to $$\text{gcd}(|f^K(a)|,\;|f^K(b)|)$$ for some $K\in \mathbb{N}$. By Theorem $\ref{th:bound}$, we have an upper bound for the number $K$. Thus, to check if the word $f^{\omega}(a)$ is abelian periodic, we first need to check if it is periodic. Then, if it is not periodic, it remains to check whether the words $f^k(a)$ and $f^k(b)$ can be decomposed into concatenation of abelian equivalent words of length $\text{gcd}(|f^k(a)|,\;|f^k(b)|)$ for $k\in \{1, 2,\ldots, K_{max}\}$, where $K_{max}$ is the upper bound from Theorem \ref{th:bound}.
\end{proof}

\section{Conclusion}
\label{section:conclusion}

In this paper, we provided a necessary and sufficient condition on the abelian periodicity for binary words generated by morphisms (Theorem \ref{th:main}), and the condition can be checked algorithmically in the case of pure abelian periodicity (Theorem \ref{th:bound}). Two natural questions remain open. First, in the case of abelian periodicity with preperiod, is there an upper bound on $M$ from Theorem \ref{th:main} making our criterion algorithmic? The second question is generalization of the characterization of the abelian periodicity to non-binary alphabets.

\section*{Acknowledgements}

This work was supported by the Russian Science Foundation, project 25-21-00535. 


\bibliographystyle{plain}

\end{document}